\newtheorem{theorem}{Theorem}
\newtheorem{proposition}{Proposition}
\newtheorem{corollary}{Corollary}
\newtheorem{lemma}{Lemma}
\newtheorem{claim}{Claim}
\theoremstyle{remark}
\DeclareMathOperator*{\argmax}{arg\,max}
\newcommand{\R}{\mathbb{R}}
\newcommand{\N}{\mathscr{N}}
\newcommand{\wrt}[1]{\mathrm{d}{#1}}
\title{\textsc{Unraveling Coordination Problems}\thanks{For valuable comments and suggestions, I thank Eric van Damme, Tabar\'{e} Capit\'{a}n, Reyer Gerlagh, Olga Heijmans-Kuryatnikova, Eirik Gaard Kristiansen, J\'{o}zsef S\'{a}kovics, Robert Schmidt, Christoph Schottm\"{u}ller, Florian Schuett, Sigrid Suetens, and Florian Wagener. }}
\author{Roweno J.R.K. Heijmans\thanks{Department of Business and Management Science, NHH Norwegian School of Economics, Helleveien 30, 5045 Bergen, Norway. Email: roweno.heijmans@nhh.no.}}
\date{\today}
\begin{document}
\maketitle

\begin{abstract}
Strategic uncertainty complicates policy design in coordination games. To rein in strategic uncertainty, the Planner in this paper connects the problem of policy design to that of equilibrium selection. We characterize the subsidy scheme that induces coordination on a given outcome of the game as its unique equilibrium. Optimal subsidies are unique, symmetric for identical players, continuous functions of model parameters, and do not make the targeted strategies strictly dominant for any one player; these properties differ starkly from canonical results in the literature. Uncertainty about payoffs impels policy moderation as overly aggressive intervention might itself induce coordination failure.

\noindent\textbf{JEL Codes}: D81, D82, D83, D86, H20.

\noindent\textbf{Keywords}: mechanism design, global games, contracting with externalities, unique implementation.
\end{abstract}

\section{Introduction}
In coordination problems, players face strategic uncertainty that forces them to second-guess the strategies of their opponents. Pessimistic beliefs can become self-fulfilling and lead to coordination failure. A worthwhile project may not take off simply because investors believe others will not invest. A promising network technology may never mature only because potential adopters are pessimistic about adoption by others. An infectious disease may not get eradicated solely on the ground that governments believe other nations will not attempt to. The possibility of costly coordination failures motivates intervention.

The usual rationale for policy intervention is to correct market failures introduced by externalities. All market failures are not equal, however, and it is crucial for policy design to know the type of externality an intervention targets. One kind of externality arises when there exists a gap between the private and social value of behavior. Thus, an individual household's greenhouse gas emissions may be higher than socially optimal as it ignores the effects its emissions have on others. Such externalities can be addressed though Pigouvian taxes or subsidies. Another, more complicated kind of externality arises in coordination problems where individual actions are strategic complements \citep{bulow1985multimarket}. Strategic complementarity results in multiple, Pareto-ranked equilibria and opens the door to coordination failures. Thus, a renewable technology may well provide a viable replacement for fossil fuels but only if sufficient capacity is installed; there hence are multiple equilibria, and renewables might never mature despite their potential (and known) advantages \citep{barrett2006climate}. Such externalities cannot be solved through simple Pigouvian policy. The goal of this paper is to design optimal policies for coordination problems. To streamline the narrative, we focus on subsidies.

The main results in this paper characterize the subsidy scheme that induces a given outcome of a coordination game as its unique equilibrium. This subsidy scheme is unique. Moreover, subsidies pursuant to the scheme are (i) symmetric for identical players; (ii) continuous functions of model parameters; and (iii) do not make the targeted strategies strictly dominant for any of the players. These properties run counter to several important and well-known results in the literature \citep[\textit{cf.}][]{segal2003coordination,winter2004incentives,bernstein2012contracting,sakovics2012matters,halac2020raising}. Two features of the problem considered here cut to the core of these opposing results. 

First, this paper deals with settings in which the Planner is uncertain about the efficient outcome of the game at the time she offers her subsidies. She might, for example, want to subsidize one out of multiple competing technologies yet lack the knowledge of their learning curves necessary to make a well-informed decision \citep[see][for an anlysis of this exact problem]{cowan1991tortoises}.\footnote{The historical records are replete with examples of policymakers who faced uncertainty about the efficient course of action -- and chose wrongly. \cite{cowan1990nuclear} describes the history of nuclear power generation. Nowadays, light water nuclear reactors are the dominant technology. This situation can be traced back to Captain Hyman Rickover of the U.S.\ Navy, whose preference for light water drove the early development of this technology led to its eventual domination of the field. There now is compelling evidence that two competing technologies, both of which were known to Captain Rickover, are economically and technologically superior to light water nuclear reactors. Similarly, \cite{cowan1996sprayed} discuss competing pest control strategies in agriculture. They show that today's heavy reliance on pesticides -- a consequence of targeted policies in the 1930s and 1940s -- is inefficient. Evidence indicates that a competing technology that already existed at the time, Integrated Pest Management, is technologically and economically superior to pesticides. This wasn't known, however, when policymakers first had to choose which type of pest control to pursue.}  Uncertainty of this kind impels policy moderation as overly aggressive intervention might itself become a source of, rather than solution to, coordination failure. While the problem of policy design in coordination games under (endogenous) uncertainty is well-studied and -understood \citep[\textit{cf.}][]{angeletos2006signaling,sakovics2012matters,halac2021rank,halac2022addressing,kets2021theory,kets2022value}, our focus on uncertainty about the efficient outcome of the game sets this paper apart from earlier contributions.

Second, the Planner in this paper connects the problem of policy design to that of equilibrium selection. By pinning down precisely how the game will be played, equilibrium selection allows the Planner to design her policy in response to players' actual, rather than hypothetical, strategic beliefs. This  reining in of strategic uncertainty implies she need not make the strategies she wants players to pursue strictly dominant for any one of them as, in the unique equilibrium selected, no player has reason to believe that others will not play the targeted strategy.\footnote{\cite{sakovics2012matters} also connect the problems of policy design and equilibrium selection. In \cite{sakovics2012matters}, however, the action that is subsidized depends upon the efficient outcome of the game.}

Let us make the discussion a bit more precise. The model in this paper consists of a Planner and $N$ (heterogeneous) players each of whom independently chooses an action from a binary set $\{0,1\}$. If player $i$ plays 0, his payoff is $c_i$. When instead player $i$ plays 1, his payoff is the sum of two components. The first component, $x$, is a state of Nature. The second component, $w_i:\{0,1\}^{N-1}\to\R$, gives the externalities that other players' actions impose upon him. The analysis centers around coordination games, or games with strategic complementarities, in which $w_i$ is increasing in the number of players that play 1. A Planner publicly announces subsidies to players who play 1. The problem of the Planner is to find the vector of subsidies $\tilde{s}=(\tilde{s}_i)$ that induces coordination on $(1,1,...,1)$ for all $x>\tilde{x}$, where the critical state $\tilde{x}\in\R$ is chosen by the Planner. The paper also explores a number of extensions and special cases of the base model, including: principal-agent models; games of regime change; asymmetric policy targets; and games with heterogeneous externalities.

Were a player informed about the actions of his opponents, his problem would be trivial. Yet players do not typically possess such information. In a coordination game with multiple Nash equilibria, the resulting strategic uncertainty forces players to second-guess the actions and beliefs of others. This complicates the Planner's problem: even if a policy makes coordination on $(1,1,...,1)$ \textit{an} equilibrium for all $x>\tilde{x}$, there may yet be others. Unless the Planner can coordinate play on her most-preferred equilibrium -- a power economists have been reluctant to grant \citep[\textit{cf.}][]{segal1999contracting,segal2003coordination,winter2004incentives,sakovics2012matters,bernstein2012contracting,halac2020raising,halac2021rank} -- the purpose of her policy is not simply to make the targeted outcome an equilibrium. Instead, she seeks to attract coordination on one, rather than another, equilibrium. She therefore cannot separate the issue of policy design from that of equilibrium selection.

The Planner in this paper deals with equilibrium selection using a global games approach. Pioneered by \cite{carlsson1993global}, global games are incomplete information games in which players do not observe the true game they play but only a private and noisy signal of it. In our game, players do not know the hidden state $x$; rather, each player $i$ observes a private noisy signal $x_i^\varepsilon$ of $x$. We focus on regulatory environments in which the Planner does not know $x$ either or else, if she does, must commit to her policy before Nature draws $x$; hence, the Planner's choice of policy cannot signal any private knowledge she might possess \citep{angeletos2006signaling}.\footnote{More specifically, the problem of the Planner is not one of Bayesian persuasion or information design \citep[cf.][]{kamenica2011bayesian,bergemann2016information,ely2017beeps,mathevet2020information}.} Given this information structure, it is impossible to tackle the Planner's problem directly. Instead, the analysis first solves a slightly modified version of her problem: find that subsidy scheme $\tilde{s}$ subject to which the unique equilibrium strategy of each player $i$ is to choose 1 whenever his signal $x_i^\varepsilon$ exceeds $\tilde{x}$. In the limit as signals become arbitrarily precise, this implies coordination on $(1,1,...,1)$ for all $x>\tilde{x}$ with probability 1 and thus solves the Planner's original problem as well. Our main result shows that the subsidy scheme $\tilde{s}$ exists, that it is unique, and provides a characterization.

An interesting economic consequence of equilibrium selection in the global game is that even ``small'' subsidies exhibit clear equilibrium effects. A subsidy raises player $i$'s incentive to play 1. Because players in a coordination game want to match actions, the subsidy to player $i$ also (indirectly) increases player $j$'s incentive to play 1. This, in turn, makes the playing 1 even more attractive for player $i$, and so on. If subsidies are common knowledge, we obtain an infinitely compounded feedback loop of policy; see Figure~\ref{fig_indirect} for an illustration. Because of this, to induce coordination on a given strategy vector the Planner need not make the associated strategies strictly dominant for any of the players.

\begin{figure}[h]
	\centering
	\includegraphics[width=0.4\textwidth, angle=0, clip=true, trim=1cm 0.7cm 6cm 0.9cm]{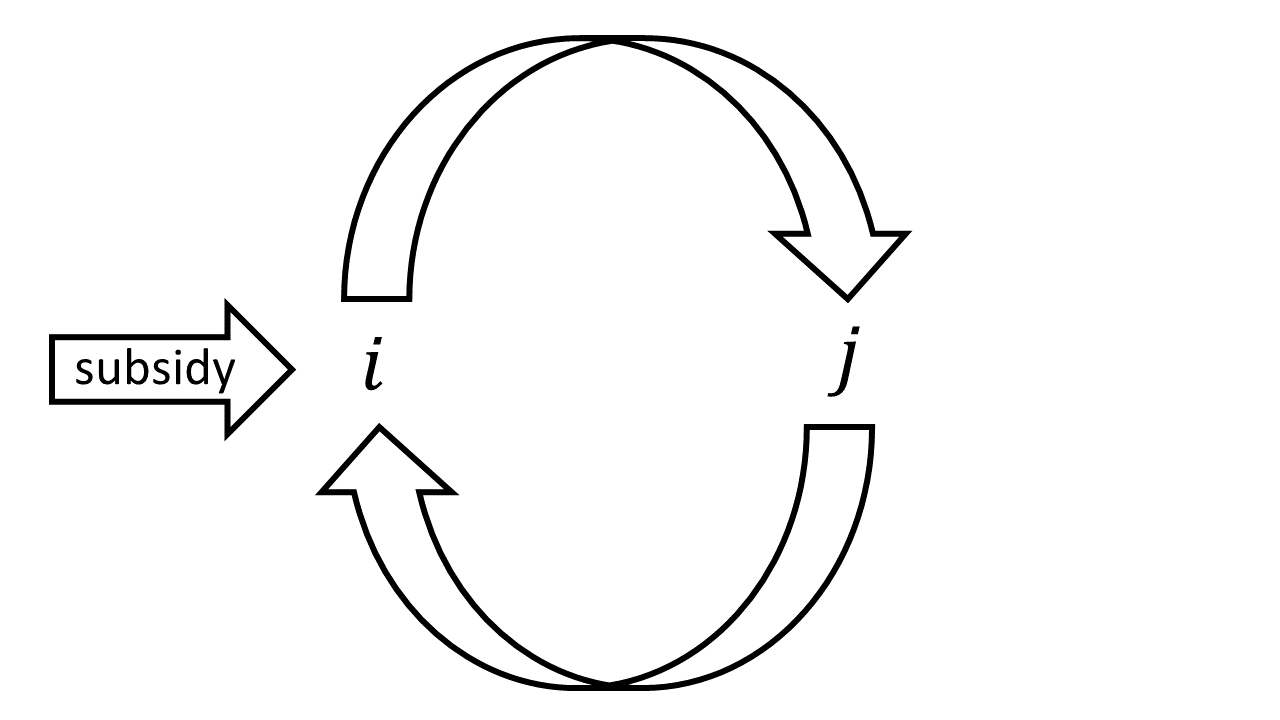}
	\caption{In coordination game, a subsidy kickstarts an infinitely compounded positive feedback loop on players' incentives to play the subsidized action: the unraveling effect.}
	\label{fig_indirect}
\end{figure}

\vspace{3mm}
\textit{Related literature.}---A closely related paper is \cite{sakovics2012matters}, who study policy design in a global game of regime change. Games of regime change are coordination games in which a status quo is abandoned, causing a discrete change in payoffs, once a sufficiently large number of agents take an action against it. \cite{sakovics2012matters} find that an optimal policy fully subsidizes a subset of players, targeting those who matter most for regime change and/or have least incentive to take an action against the regime. These results provide a stark counterpoint to the findings in this paper, which say that an optimal policy subsidizes \textit{all} players partially. The difference is a consequence of the distinct information structures considered. In \cite{sakovics2012matters}, the (ex post) efficient outcome of the game is known to the Planner when she offers her policy; in this paper, it is not. The same distinction also set this paper apart from the broader literature on policy design in global games \citep{goldstein2005demand,angeletos2006signaling,angeletos2007dynamic,sakovics2012matters,edmond2013information,basak2020diffusing}.

Another related paper is \cite{halac2020raising}, who study the problem of a firm that seeks to raise capital from multiple investors to fund a project. The project succeeds only if the capital raised exceeds a stochastic threshold; the firm offers payments contingent on project success. \cite{halac2020raising} identify conditions under which larger investors receive higher per-dollar returns on investment in an optimal policy, thus perpetuating inequalities. The focus on contingent per-dollar returns in \cite{halac2020raising} is different from the approach in this paper, in which actions are binary and subsidies are paid regardless of eventual outcomes.

This paper is also related to the literature on principal-agent contracting, see \cite{winter2004incentives} and \cite{halac2021rank} in particular. Contrasting sharply with the findings presented here, the seminal result in \cite{winter2004incentives} is that optimal mechanisms are inherently discriminatory under complete information -- no two agents are rewarded equally even when agents are symmetric. \cite{halac2021rank} extend the model in Winter (2004) to allow for asymmetries among the agents and private contract offers and find that symmetric agents are offered identical rewards in an optimal contract. Like \cite{halac2021rank}, this paper finds that an optimal policy treats symmetric players identically. Interestingly, however, the results in \cite{halac2021rank} depend critically upon contract offers being private; in contrast, it is crucial that offers are common knowledge for the results in this paper.

Another literature to which this paper connects is that on contracting with externalities \citep[e.g.,][]{segal1999contracting,segal2003coordination,segal2000naked,bernstein2012contracting}. \cite{segal2003coordination} and \cite{bernstein2012contracting} consider complete information contracting problems that, save for the informational environment, are essentially equivalent to the game studied in this paper. They establish optimality of the \textit{divide and conquer} mechanism in which the Planner first ranks all players; given the ranking, each player is offered a subsidy that incentivizes him to play the subsidized action assuming all players who precede him in the ranking also play this action while those after him do not. \cite{bernstein2012contracting} derive the optimal ranking of players in such a policy. Like the mechanism derived in \cite{winter2004incentives}, an (optimal) divide and conquer scheme is inherently discriminatory and treats symmetric agents asymmetrically.

Some authors study coordination games using solution concepts other than Nash equilibrium. A notable example is \cite{kets2022value}, who consider policy design in (symmetric $2\times 2$) coordination games using the concept of \textit{introspective equilibrium} developed by \cite{kets2021theory}. \cite{kets2022value} find that subsidies have both direct and indirect effects in coordinatiom games; in contrast to the results in this paper, however, the direct and indirect effects in an introspective equilibrium can affect incentives in opposite directions. Moreover, like the applied global games literature, \cite{kets2022value} focus on games in which the efficient outcome of the game -- and thus the outcome to be subsidized -- is known a priori.

The remainder of the paper is organized as follows. Section~\ref{sec_example} introduces a simple example to develop a basic intuition for why policy design cannot be separated from equilibrium selection in a coordination game. Section~\ref{sec_model} introduces the model and the concepts needed for the analysis. Section~\ref{sec_subsidies} introduces the Planner's problem and states out main result. Section~\ref{sec_main} presents the core of the analysis. Various special cases and extensions of our model are discussed on Section~\ref{sec_special}. Section~\ref{section_concl} discusses and concludes. All proofs are in the Appendix.

\section{A Simple Example}\label{sec_example}
This section develops an intuition for the main results in this paper in a highly simplified example. In particular, it illustrates why connecting the issue of policy design and equilibrium selection is important in coordination games.

There are two players who can participate in a project. The cost of participation to player $i$ is $c_i>0$. If the project succeeds, player $i$ earns a payoff $b_i>c_i$. The project succeeds if and only if both players participate. The payoff to not participating, the outside option, is $-x$.

A Planner publicly offers each participating player $i$ a subsidy $s_i$. The Planner's problem is to find that subsidy scheme which induces players to coordinate on joint participation for all $x>0$ as the unique equilibrium of the game.

Suppose first that we were to approach the Planner's problem without taking care of equilibrium selection. Since this yields a coordination problem with multiple strict Nash equilibria, the Planner now operates under the assumption that players can hold essentially \textit{any} strategic beliefs. In particular, letting $q_j\in[0,1]$ denote the probability that player $i$ attaches to his opponent $j$ participating, both $q_j=1$ and $q_j=0$ are supported as consistent equilibrium beliefs in coordination games with multiple equilibria. It is therefore not difficult to see that a policy \textit{guarantees} participation by both players only if it makes participation strictly dominant for at least one of them. Moreover, making participation dominant for only one player is also sufficient to guarantee project success since the other player, taking participation by the subsidized player as given, will also participate. The Planner should therefore offer a subsidy $s_i\geq c_i$ to only one player $i$.

This canonical result breaks down once we connect the problem of policy design to that of equilibrium selection. More precisely, the necessity of subsidizing at least one player all the way toward strict dominance derives from mutual non-participation always being an equilibrium of the game, justifying players' beliefs that $q_i=q_j=0$. By introducing uncertainty and turning the problem into a global game \citep{carlsson1993global}, we can -- for any subsidy scheme $(s_i,s_j)$ -- select a unique equilibrium of the underlying coordination problem. Equilibrium uniqueness places severe restrictions upon players' beliefs $q_i$ and $q_j$; restrictions that, in a coordination game, turn out crucial for policy design.

Our main anlysis (in Section~\ref{sec_main}) will imply that in the unique equilibrium selected in this simple game a rational player $i$ participates for all $x$ at which participation is a strict best response to all (hypothetical) beliefs $q_j\geq 1/2$. Observe that, for generic $q_j$ and given a subsidy $s_i$, participation yields player $i$ an expected payoff of $q_j\cdot [b_i+s_i-c_i]+(1-q_j)\cdot [s_i-c_i]$ whilst non-participation pays him $-x$. The former is clearly increasing in $q_j$ so that to solve for an optimal subsidy, it suffices to consider only the (possibly wrong) belief $q_j=1/2$. Hence, the optimal subsidy that induces player $i$ to participate for all $x>0$ is given by
\[s_i=c_i-\frac{b_i}{2},\]
for each player $i$. Observe that, in the global game, \textit{both} players are offered subsidies neither of which makes participation strictly dominant for all $x>0$.

We note that an optimal policy does two things at once. First, the subsidy $s_i$ ensures that participation is a (strict) best response to player $i$'s belief $q_j\geq1/2$ for all $x>0$. Second, and because $s_i$ does this, it allows player $j$ to disregard the belief $q_j=0$ that (in the coordination problem prior to equilibrium selection) implied the necessity of subsidizing one player to strict dominance.

This discussion serves as a simple illustration of some key results and comparisons that we present in the sections to follow. Our main result, Theorem~\ref{thm_subsidy}, characterizes the optimal subsidy scheme for more general coordination games when we approach policy design in the context of equilibrium selection. We also describe the process of equilibrium selection more explicitly to show exactly how the disciplining of players' strategic beliefs comes about, justifying the restrictions simply imposed in this illustrative example.

\section{The Game}\label{sec_model}
Consider a normal form game played by players in a set $\N=\{1,2,...,N\}$, indexed $i$, who simultaneously choose binary actions $a_i\in\{0,1\}$. Define $a_{-i}:=a\setminus\{a_i\}$, $\overline{a}:=(1,1,...,1)$, $\underline{a}:=(0,0,...,0)$, $\overline{a}_{-i}:=\overline{a}\setminus\{a_i\}$, and $\underline{a}_{-i}:=\underline{a}\setminus\{a_i\}$. When $a$ is played, player $i$ who chooses 1 in $a$ gets payoff $x+w_i(a_{-i})$; when instead player $i$ chooses $0$ in $a$, his payoff is $c_i$. Here, $w_i(a_{-i})$ describes the externalities on player $i$ deriving from other players' actions. To simplify the exposition, the main analysis assumes that $w_i$ depends upon $a_{-i}$ only through the aggregate action and we will often write $w_i(\sum_{j\neq i}a_j)$; Section~\ref{sec_heterogeneous} explores generalizations of the game in which externalities depend upon the exact subset of players who play 1. The variable $x$ is a hidden state of Nature. Lastly, $c_i$ is player $i$'s payoff to playing 0, which in some interpretations of the model is best thought of as the cost of playing 1. Combining these elements, the payoff to player $i$ is given by
\begin{equation}\label{eq_payoff}
	\pi_i(a\mid x)=
	\begin{cases}
		x+w_i\left(\sum_{j\neq i}a_j\right)\quad&\text{if}\quad a_i=1\text{ in } a,\\
		c_i\quad&\text{if}\quad a_i=0\text{ in } a.
	\end{cases}
\end{equation}

We restrict attention to games with strategic complementarities meaning that $w_i(\sum_{j\neq i}a_j)$ is increasing in $\sum_{j\neq i}a_j$, i.e.\ $w_i(n+1)\geq w_i(n)$ for all $n=0,...,N-2$. In the canonical example of a joint investment problem, the action $a_i=1$ is interpreted as investment and $c_i$ as the cost of investing \citep{sakovics2012matters}. Alternatively, actions might represent the choice to use of a particular kind of network technology and $c_i$ is the cost differential between technologies \citep{cowan1991tortoises,bjorkegren2019adoption,leister2022social}. Or actions could describe the decisions to work or shirk by agents working on a common project such that $c_i$ is agent $i$'s cost of effort and $w_i$ his (discrete) benefit from project success, see \cite{winter2004incentives} and \cite{halac2021rank}.

The above elements combined describe a game of complete information $\Gamma(x)$. In $\Gamma(x)$, we define a player's \textit{incentive} to choose 1 as the gain from playing 1, rather than 0, or
\begin{equation}\label{eq_incentive}
	u_i(a_{-i}\mid x)=\pi_i(1,a_{-i}\mid x)-\pi_i(0,a_{-i}\mid x)=x+w_i\left(\sum_{j\neq i}a_j\right)-c_i.
\end{equation}

Observe that, given $a_{-i}$, a player's incentive $u_i$ to play 1 is strictly increasing in $x$. Denote $x^0_i:=c_i-w_i(0)$ and $x^N_i:=c_i-w_i(N-1)$. One has $u_i(\overline{a}_{-i}\mid x^0_i)=u_i(\underline{a}_{-i}\mid x^N_i)=0$. In other words, to each player $i$ playing 1 is strictly dominant for all $x>\overline{x}^0_i$; playing 0 is strictly dominant for $x<\underline{x}_i^N$. Define ${x}^N:=\max\{{x}_i^N\mid i\in\N\}$, ${x}^0:=\min\{{x}^0_i\mid i\in\N\}$, $\underline{x}=\min\{{x}^0_i\mid i\in\N\}$, and  $\overline{x}=\max\{{x}_i^N\mid i\in\N\}$. Let $[\underline{x},\overline{x}]$ be nonempty so that, for all $x$ in $[\underline{x},\overline{x}]$, $\Gamma(x)$ is a true coordination game with multiple strict Nash equilibria. 

To reflect the many uncertainties that exist in the real world, we assume that the state of nature $x$ is hidden. Instead, it is common knowledge among the players that $x$ is drawn from a continuous prior density $g:\mathcal{X}\to\R$ and that each player $i$ receives a private noisy signal $x_i^{\varepsilon}$ of $x$, given by
\begin{equation}\label{eq_signal}
	x_i^\varepsilon=x+\varepsilon\cdot \eta_i,
\end{equation}
where $\mathcal{X}=[\underline{X},\overline{X}]\supseteq[\underline{x}-\varepsilon,\overline{x}+\varepsilon]$ is closed. One can think of $x_i^\varepsilon$ as the player's \textit{type}. The random variable $\eta_i$ is a noise term that is distributed i.i.d.\ on $[-1/2,1/2]$ according to a continuously differentiable distribution $F$, and $\varepsilon>0$ is a scaling factor.\footnote{The assumption that the support of $\eta_i$ is $[-1/2,1/2]$ is without loss. If $\eta_i$ were systematically biased, rational players would simply take that into account when forming their posteriors. Moreover, we could also allows the noise distribution to have support on the entire real line without great technical complications.} We write $\Gamma^\varepsilon$ for the game of incomplete information about $x$. 

Let $x^{\varepsilon}=(x_i^{\varepsilon})$ denote the vector of signals received by all players, and let $x^{\varepsilon}_{-i}$ denote the vector of signals received by all players but $i$, i.e. $x^{\varepsilon}_{-i}=(x_j^{\varepsilon})_{j\neq i}$. Note that player $i$ observes $x_i^{\varepsilon}$ but neither $x$ nor $x^{\varepsilon}_{-i}$. We write $F_i^\varepsilon(x,x_{-i}^\varepsilon\mid x_i^\varepsilon)$ for player $i$'s posterior distribution on $(x,x_{-i}^\varepsilon)$ conditional on his signal $x_i^\varepsilon$.

The timing of $\Gamma^\varepsilon$ is as follow. First, Nature draws a true $x$. Second, each player $i$ receives his private signal $x_i^{\varepsilon}$ of $x$. Third, all players simultaneously choose their actions. Lastly, payoffs are realized according to the true $x$ and the actions chosen by all players. We note that players play once and then the game is over; see \cite{angeletos2007dynamic} and \cite{chassang2010fear} for anlyses of dynamic global games.

\subsection{Concepts and notation}

\textit{Strategies.} A strategy $p_i$ for player $i$ in $\Gamma^{\varepsilon}$ is a function that assigns to any $x_i^{\varepsilon}\in[\underline{X}-\varepsilon,\overline{X}+\varepsilon]$ a probability $p_i(x_i^{\varepsilon})\geq0$ with which the player chooses action $a_i=1$ when they observe $x_i^{\varepsilon}$. Write $p=(p_1,p_2,...,p_N)$ for a strategy vector for all player, and $p_{-i}=(p_j)_{j\neq i}$ for the vector of strategies for all players but $i$. A strategy vector $p$ is \textit{symmetric} if for every $i,j\in\N$ and every signal $x^\varepsilon$ one has $p_i(x^\varepsilon)=p_j(x^\varepsilon)$. Conditional on the strategy vector $p_{-i}$ and a private signal $x_i^{\varepsilon}$, the expected incentive to play 1 for player $i$ is given by:
\begin{equation*}\label{eq_exp_inc}
	u^{\varepsilon}_i(p_{-i}\mid x_i^{\varepsilon}):=\int u_i(p_{-i}(x^{\varepsilon}_{-i})\mid x)\, \wrt F^{\varepsilon}_i(x,x^{\varepsilon}_{-i}\mid x_i^{\varepsilon}).
\end{equation*}
When no confusion can arise, we refer to the expected incentive $u^{\varepsilon}_i(p_{-i}\mid x_i^{\varepsilon})$ simply as a player's incentive.

\textit{Increasing strategies.} For $X\in\mathbb{R}$, let $p_i^X$ denote the particular strategy such that $p_i^X(x_i^{\varepsilon})=0$ for all $x_i^{\varepsilon}<X$ and $p_i^X(x_i^{\varepsilon})=1$ for all $x_i^{\varepsilon}\geq X$. The strategy $p_i^X$ is called an \textit{increasing strategy with switching point} $X$. Let $p^X=(p_1^X,p_2^X,...,p_N^X)$ denote the strategy vector of increasing strategies with switching point $X$, and $p_{-i}^X=(p_j^X)_{j\neq i}$. Generally, for a vector of real numbers $y=(y_i)$ let $p^y=(p_i^{y_i})$ be a (possibly asymmetric) increasing strategy vector, and $p_{-i}^y=(p_j^{y_j})_{j\neq i}$. 

\textit{Strict dominance.} The action $a_i=1$ is strictly dominant at $x_i^{\varepsilon}$ if $u_i^{\varepsilon}(p_{-i}\mid x_i^{\varepsilon})>0$ for all $p_{-i}$. Similarly, the action $a_i=0$ is strictly dominant (in the global game $G^{\varepsilon}$) at $x_i^{\varepsilon}$ if $u_i^{\varepsilon}(p_{-i}\mid x_i^{\varepsilon})<0$ for all $p_{-i}$. When $a_i=\alpha$ is strictly dominant, the action $a_i=1-\alpha$ is said to be strictly dominated.

\textit{Conditional dominance.} Let $L$ and $R$ be real numbers. The action $a_i=1$ is said to be dominant at $x_i^{\varepsilon}$ conditional on $R$ if $u_i^{\varepsilon}(p_{-i}\mid x_i^{\varepsilon})>0$ for all $p_{-i}$ with $p_j(x_j^{\varepsilon})=1$ for all $x_j^{\varepsilon}>R$, all $j\neq i$. Similarly, the action $a_i=0$ is dominant at $x_i^{\varepsilon}$ conditional on $L$ if $u_i^{\varepsilon}(p_{-i}\mid x_i^{\varepsilon})<0$ for all $p_{-i}$ with $p_j(x_j^{\varepsilon})=1$ for all $x_j^{\varepsilon}>L$, all $j\neq i$. Note that $a_i=1$ is strictly dominant at $x_i^{\varepsilon}$ conditional on $R$ if and only if $u_i^{\varepsilon}(p_{-i}^R\mid x_i^{\varepsilon})>0$. Similarly, if $a_i=0$ is strictly dominant at $x_i^{\varepsilon}$ conditional on $L$ then it must hold that $u_i^{\varepsilon}(p_{-i}^L\mid x_i^{\varepsilon})<0$.

\textit{Iterated elimination of strictly dominated strategies.} The solution concept in this paper is iterated elimination of strictly dominated strategies (IESDS). Eliminate all pure strategies that are strictly dominated, as rational players may be assumed never to pursue such strategies. Next, eliminate a player's pure strategies that are strictly dominated if all other players are known to play only strategies that survived the prior round of elimination; and so on. The set of strategies that survive infinite rounds of elimination are said to survive IESDS.

\section{Optimal Subsidies}\label{sec_subsidies}
\subsection{The Planner's Problem}
Next we introduce a social Planner whose problem is to implement (in a way made precise shortly) coordination on $\overline{a}$ whenever $x>\tilde{x}$, where $\tilde{x}\in\mathcal{X}$ is the \textit{critical state} which she -- the Planner -- chooses.\footnote{We consider the generalized implementation problem in which $\tilde{x}=(\tilde{x}_i)\in\mathcal{X}^N$ is a vector of, possibly distinct, real numbers such that the planner seeks to induce player $i$ to play $1$ whenever $x>\tilde{x}_1$ in Section~\ref{sec_as}.} The Planner faces two constraints. First, she cannot condition her policy on the realization of $x$ or players' signals thereof; this assumption is customary in the literature on policy design in global games \citep[\textit{cf.}][]{sakovics2012matters,leister2022social}.\footnote{Though customary, this assumption is not without loss. As \cite{angeletos2006signaling} demonstrate, if the Planner can decide upon her policy after learning $x$, the endogenous information generated by her intervention can re-introduce equilibrium multiplicity.} One interpretation is that the Planner must commit to her policy before Nature draws a true $x$ and cannot change her policy afterward.

The second constraint upon the Planner's problem has to do with the kinds of policies she can use. We assume that the Planner cannot coordinate players on her preferred equilibrium in a multiple equilibria setting. Instead, she has to rely on simple subsidies (or taxes) to create the appropriate incentives. The focus on simple instruments also means that policies cannot condition directly upon other players' actions. These are standard assumptions in the literature \citep{segal2003coordination,winter2004incentives,bernstein2012contracting,sakovics2012matters,halac2020raising}. 

To streamline the narrative, we henceforth focus on subsidies as the Planner's policy instrument. Let $s_i$ denote the subsidy paid to a player $i$ who chooses $a_i=1$. Conditional on the subsidy $s_i$, player $i$'s incentive to choose 1 becomes
\begin{equation*}
	u_i(a_{-i}\mid x,s_i)=u_i(a_{-i}\mid x)+s_i,
\end{equation*}
and the expected incentive, given the signal $x_i^\varepsilon$ and a strategy vector $p_{-i}$, is
\begin{equation}\label{eq_exp_incentive_sub}
	\begin{split}
		u_i^\varepsilon(p_{-i}\mid x_i^\varepsilon,s_i)&=\int u_i(p_{-i}(x_{-i}^\varepsilon)\mid x,s_i)\, \wrt F_i^{\varepsilon}(x,x^{\varepsilon}_{-i}\mid x_i^{\varepsilon})\\
		&=\int \left[u_i(p_{-i}(x_{-i}^\varepsilon)\mid x)+s_i\right] \wrt F_i^{\varepsilon}(x,x^{\varepsilon}_{-i}\mid x_i^{\varepsilon})\\
		&=u_i^\varepsilon(p_{-i}\mid x_i^\varepsilon)+s_i.
	\end{split}
\end{equation}
It is clear that a tax equal to $s_i$ on playing 0 has the same effect on incentives. Note that \eqref{eq_exp_incentive_sub} assumes observability of $a_i$; this assumption is maintained throughout most of the analysis. Section~\ref{sec_pa} considers principal-agent problems in which the vector of actions $a$ is unobserved.

\subsection{Unique Implementation}
Given a vector of subsidies $s=(s_i)$, let $\Gamma^\varepsilon(s)$ denote the game $\Gamma^\varepsilon$ in which the Planner publicly commits to paying each player $i$ who plays 1 a subsidy $s_i\in s$. Since the Planner cannot condition her policy on $x$, and because players choose their actions before learning the true value of $x$, we must define implementation in terms of players' signals. Henceforth, the vector of subsidies $\tilde{s}$ is said to \textit{implement} coordination on $\overline{a}$ for all $x>\tilde{x}$ if $p^{\tilde{x}}=(p_i^{\tilde{x}})$ is the unique Bayesian Nash equilibrium of $\Gamma^\varepsilon(\tilde{s})$. The focus on unique equilibrium implementation is in keeping with the broader literature on policy design in coordination games \citep[\textit{cf.}][]{segal1999contracting,segal2003coordination,segal2000naked,sakovics2012matters,bernstein2012contracting,halac2020raising,halac2021rank,halac2022addressing}. Note that, as $\varepsilon\to0$, the working definition of implementation also solves the Planner's problem as originally formulated: if $\varepsilon\to0$ then for all $x>\tilde{x}$ each player $i$ receives a signal $x_i^\varepsilon>\tilde{x}$ which in the unique equilibrium $p^{\tilde{x}}$ implies that players coordinate on $\overline{a}$ for all $x>\tilde{x}$.

Take some critical state $\tilde{x}\in\mathcal{X}$. Given $\tilde{x}$, let $s^*(\tilde{x})=(s_i^*(\tilde{x}))$ denote the subsidy scheme such that each $s_i^*(\tilde{x})\in s^*(\tilde{x})$ is given by
\begin{equation}\tag{$*$}\label{eq_opt_sub_asym}
	s^*_i(\tilde{x})=c_i-\tilde{x}-\sum_{n=0}^{N-1}\frac{w_i(n)}{N}.
\end{equation}
Let us write $\mathcal{B}_r(s^*(\tilde{x}))$ for the open ball with radius $r$ centered at $s^*(\tilde{x})$. The main result of the paper is the following theorem.
\begin{theorem}\label{thm_subsidy}
	Let $\tilde{x}\in\mathcal{X}$. If $\varepsilon$ is sufficiently small, then:
	\begin{itemize}
		\item[(i)] There exists a unique subsidy scheme $\tilde{s}=(\tilde{s}_i)$ that implements $p^{\tilde{x}}$;
		\item[(ii)] For all $r>0$, the scheme $\tilde{s}$ is contained in $\mathcal{B}_r(s^*(\tilde{x}))$.
	\end{itemize}
\end{theorem}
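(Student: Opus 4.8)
The plan is to prove (i) in two halves — first that any implementing scheme must equal a particular $\varepsilon$‑dependent vector $\tilde s^{\varepsilon}$, then that $\tilde s^{\varepsilon}$ does implement $p^{\tilde x}$ — and to obtain (ii) as a vanishing‑noise computation. The one structural fact I would establish up front is a monotonicity lemma: for any increasing profile $p_{-i}$, the map $x_i^{\varepsilon}\mapsto u_i^{\varepsilon}(p_{-i}\mid x_i^{\varepsilon})$ is continuous and strictly increasing, because conditioning on a higher signal raises the posterior on $x$ in the FOSD order and makes each opponent weakly more likely to lie above any fixed switching point, and $w_i$ is increasing. Given this, if $\tilde s$ implements $p^{\tilde x}$ then in particular $p^{\tilde x}$ is a Bayesian Nash equilibrium of $\Gamma^{\varepsilon}(\tilde s)$, which forces each player $i$ to be exactly indifferent at signal $\tilde x$ against $p_{-i}^{\tilde x}$: $u_i^{\varepsilon}(p_{-i}^{\tilde x}\mid \tilde x)+\tilde s_i=0$. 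That single equation pins down $\tilde s_i$, so the implementing scheme, if any, must be $\tilde s^{\varepsilon}$ with $\tilde s_i^{\varepsilon}:=-\,u_i^{\varepsilon}(p_{-i}^{\tilde x}\mid \tilde x)$; this gives the uniqueness part of (i). Conversely, by construction plus the monotonicity lemma, $p^{\tilde x}$ is an equilibrium of $\Gamma^{\varepsilon}(\tilde s^{\varepsilon})$, so it remains only to show it is the \emph{unique} profile surviving IESDS.

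For that I would invoke the monotone‑methods toolkit for supermodular games: since $\Gamma^{\varepsilon}(\tilde s^{\varepsilon})$ has strategic complementarities, there exist a largest and a smallest strategy profile surviving IESDS, both are in increasing strategies, both are equilibria, and every surviving strategy lies between them. Write their switching‑point vectors as $\underline b$ and $\bar b$; because $p^{\tilde x}$ is itself an equilibrium (hence survives IESDS), $\underline b\le \tilde x\mathbf 1\le \bar b$ componentwise. Each extreme profile satisfies its indifference system, so subtracting the defining equation of $\tilde s^{\varepsilon}$ gives $u_i^{\varepsilon}(p_{-i}^{\bar b}\mid \bar b_i)=u_i^{\varepsilon}(p_{-i}^{\tilde x}\mid \tilde x)$ for all $i$, and likewise for $\underline b$. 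Evaluating the first identity at the player $i^{*}$ maximizing $\bar b_i-\tilde x$, and supposing this maximum is strictly positive, I would argue the left‑hand side strictly exceeds the right: conditioning on the strictly higher signal $\bar b_{i^{*}}$ strictly raises the $x$‑term, while relative to $i^{*}$'s conditioning signal every opponent's switching point $\bar b_j$ lies weakly below $\bar b_{i^{*}}$ by maximality, so in the small‑noise regime — where the posterior given $x_i^{\varepsilon}=y$ depends on $y$ only through a location shift up to an $O(\varepsilon)$ prior‑curvature correction — the opponents are weakly more likely to be playing $1$, which weakly raises the $w_{i^{*}}$‑term. This contradiction forces $\bar b=\tilde x\mathbf 1$; the mirror‑image argument at the player maximizing $\tilde x-\underline b_i$ forces $\underline b=\tilde x\mathbf 1$. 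Hence $\{p^{\tilde x}\}$ is the entire surviving set, which proves existence in (i).

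For (ii) I would compute $\lim_{\varepsilon\to 0}\tilde s_i^{\varepsilon}$ using the discrete Laplacian property of global games: when every player uses switching point $\tilde x$ and player $i$ conditions on $x_i^{\varepsilon}=\tilde x$, the number of opponents with signals above $\tilde x$ is asymptotically uniform, $\Pr\!\big(\#\{j\neq i:x_j^{\varepsilon}>\tilde x\}=n\mid x_i^{\varepsilon}=\tilde x\big)\to \tfrac1N$ for $n=0,\dots,N-1$, while the posterior on $x$ concentrates at $\tilde x$. Therefore $u_i^{\varepsilon}(p_{-i}^{\tilde x}\mid \tilde x)\to \tilde x+\sum_{n=0}^{N-1}w_i(n)/N-c_i$, so $\tilde s_i^{\varepsilon}\to c_i-\tilde x-\sum_{n=0}^{N-1}w_i(n)/N=s_i^{*}(\tilde x)$. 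Then for any $r>0$, choosing $\varepsilon$ small enough places $\tilde s^{\varepsilon}$ inside $\mathcal B_r(s^{*}(\tilde x))$, which is the content of (ii) (I read the statement as: for every $r>0$ there is an $\bar\varepsilon$ such that (i)–(ii) hold for $\varepsilon<\bar\varepsilon$).

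The main obstacle is the collapse $\underline b=\bar b=\tilde x\mathbf 1$ in the existence half. The translation‑invariance used to sign the $w_{i^{*}}$‑term is only exact in the limit, so the argument has to be made quantitative — the $O(\varepsilon)$ prior‑curvature error must be dominated by the strictly positive gain in the $x$‑term, and care is needed in case $\bar b_{i^{*}}-\tilde x$ is itself of order $\varepsilon$ — and this is exactly where ``$\varepsilon$ sufficiently small'' is consumed. An equivalent route, with the same difficulty, is to run the iterated elimination directly and show the band of surviving switching points contracts to the singleton $\{\tilde x\}$ rather than to a nondegenerate interval. The remaining ingredients — existence of extreme equilibria in supermodular games, strict monotonicity of the expected incentive, interiority of $\tilde x$ so boundary cases do not intrude, and the Laplacian limit — are standard and can be cited or dispatched routinely.
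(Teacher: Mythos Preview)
Your proposal is correct and follows essentially the same approach as the paper: the paper also reduces uniqueness to the indifference condition $u_i^{\varepsilon}(p_{-i}^{\tilde x}\mid\tilde x)+\tilde s_i=0$ (its Lemma~3), establishes existence by collapsing the extreme IESDS-surviving switching vectors via the maximal-coordinate plus translation-invariance argument you sketch (its Lemma~2, using its Propositions~A.1--A.2 for the uniform-prior approximation), and obtains (ii) from the Laplacian belief computation (its Lemma~5). The only cosmetic difference is packaging---the paper runs IESDS explicitly from $\overline X$ and $\underline X$ to construct the extreme profiles, whereas you cite the supermodular-games characterization of the largest/smallest surviving equilibria; and the paper proves ``for every $s$ there is a unique $x(s)$'' and then inverts, whereas you fix $\tilde x$ first and define $\tilde s^{\varepsilon}$ directly---but the substantive steps and the place where ``$\varepsilon$ sufficiently small'' is consumed are identical.
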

The optimal subsidy scheme $\tilde{s}$ admits a number of notable properties, some of which are best understood with the analysis in mind. We therefore defer a discussion of the properties of $\tilde{s}$ to Section~\ref{sec_discussion}.

We observe that Theorem~\ref{thm_subsidy} holds for all continuous densities $f$ and $g$. Thus, the informational requirements imposed upon the Planner are slim. Moreover, the condition that $\varepsilon$ be sufficiently small is necessary to permit an analysis of $\Gamma^\varepsilon(s)$ ``as if'' the common prior $g$ were uniform. The following corollary to Theorem~\ref{thm_subsidy} is immediate from our proof.
\begin{corollary}\label{cor_thm}
	If the common prior $g$ is uniform and the noise distribution $f$ is symmetric, then Theorem~\ref{thm_subsidy} holds true for all $\varepsilon>0$.
\end{corollary}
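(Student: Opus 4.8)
The plan is to revisit the proof of Theorem~\ref{thm_subsidy} and isolate the precise points at which the hypothesis ``$\varepsilon$ sufficiently small'' is used, then show that under a uniform prior $g$ and a symmetric noise density $f$ each of those points becomes an exact identity valid for every $\varepsilon>0$. My claim is that smallness of $\varepsilon$ enters the argument only through two approximations, both of Laplacian type, and that the corollary follows by upgrading these approximations to equalities. As a byproduct one gets the sharper conclusion $\tilde s=s^*(\tilde x)$ \emph{exactly}, rather than merely $\tilde s\in\mathcal{B}_r(s^*(\tilde x))$.

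First I would establish the exact Laplacian belief. Fix the marginal type $x_i^\varepsilon=\tilde x$ and suppose every opponent uses $p_j^{\tilde x}$. Under a uniform prior, the support condition $\mathcal{X}\supseteq[\underline x-\varepsilon,\overline x+\varepsilon]$ rules out truncation, so the posterior law of the own noise $\eta_i=(x_i^\varepsilon-x)/\varepsilon$ is exactly $f$, with no shrinkage toward a prior mean. At $x_i^\varepsilon=\tilde x$, opponent $j$ plays $1$ iff $x_j^\varepsilon\ge\tilde x$ iff $\eta_j\ge\eta_i$, an event of conditional probability $1-F(\eta_i)$, and the $N-1$ opponents are conditionally independent. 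Integrating the binomial probability against $\eta_i\sim f$ and substituting $u=F(\eta_i)$ turns the count into a Beta integral,
\[
\binom{N-1}{n}\int_0^1 (1-u)^n u^{\,N-1-n}\,\mathrm{d}u=\frac1N,
\]
so each value $n\in\{0,1,\dots,N-1\}$ of $\sum_{j\ne i}a_j$ carries posterior weight exactly $1/N$, for every $\varepsilon>0$. Under a general smooth prior this uniform-count property holds only as $\varepsilon\to0$; here it is exact, which is the sense in which the game can be analyzed ``as if'' $g$ were uniform without taking a limit.

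Second, I would record that symmetry of $f$ delivers $\mathbb{E}[x\mid x_i^\varepsilon=\tilde x]=\tilde x$ exactly. Since the own-noise posterior is $f$ with mean zero, $\mathbb{E}[x\mid x_i^\varepsilon]=x_i^\varepsilon-\varepsilon\,\mathbb{E}_f[\eta_i]=x_i^\varepsilon$, eliminating the $O(\varepsilon)$ bias that a general prior or an asymmetric $f$ would introduce. Plugging both exact identities into the indifference condition for the marginal type—obtained by setting the expected incentive in \eqref{eq_exp_incentive_sub} to zero at $x_i^\varepsilon=\tilde x$ under $p_{-i}^{\tilde x}$—gives
\[
\tilde x+\sum_{n=0}^{N-1}\frac{w_i(n)}{N}-c_i+s_i=0,
\]
whose unique solution is $s_i=s_i^*(\tilde x)$ from \eqref{eq_opt_sub_asym}. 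Hence the implementing scheme is $\tilde s=s^*(\tilde x)$ exactly, so part~(ii) holds trivially ($\tilde s$ is the center of every ball $\mathcal{B}_r(s^*(\tilde x))$), and the existence half of part~(i) is immediate for all $\varepsilon>0$.

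For the uniqueness half of part~(i), I would verify that the iterated-elimination argument in the proof of Theorem~\ref{thm_subsidy} rests only on strategic complementarity together with the two identities above, never on smallness of $\varepsilon$ in its own right. Strategic complementarity makes the best-response-to-threshold map monotone, so IESDS is squeezed between a largest and a smallest surviving increasing strategy; the exact Laplacian indifference condition has the single root $\tilde x$, forcing both bounding switching points to collapse to $\tilde x$ and pinning down $p^{\tilde x}$ as the unique survivor. The main thing to check—and the only place any real work remains—is that this monotone squeeze and the uniqueness of the indifference root go through verbatim at finite $\varepsilon$ once the prior is exactly uniform, i.e.\ that no separate smallness of $\varepsilon$ was smuggled into the theorem's uniqueness step beyond the two Laplacian approximations already made exact. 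Given that the support condition precludes boundary truncation for every $\varepsilon$, this is precisely the content of ``immediate from our proof,'' and it completes the corollary.
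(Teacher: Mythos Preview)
Your proposal is correct and matches the paper's intended argument. The paper offers no separate proof of the corollary beyond the remark that it is ``immediate from our proof,'' and you have accurately reconstructed why: the hypothesis ``$\varepsilon$ sufficiently small'' enters Theorem~\ref{thm_subsidy} only through Proposition~\ref{lemma_uniform_prior} (approximating $f_i^\varepsilon$ by the uniform-prior posterior $\overline{f_i^\varepsilon}$) and through the posterior-mean term $\mathbb{E}[x\mid x_i^\varepsilon]\to x_i^\varepsilon$ appearing in Lemmas~\ref{lemma_increasing}, \ref{lemma_unique_limit} and \ref{lemma_monotone}; under a uniform $g$ the first becomes an identity, and symmetry of $f$ kills the $\varepsilon\,\mathbb{E}_f[\eta_i]$ bias in the second, while the Beta-integral computation $\overline{\Omega^\varepsilon}(n\mid X,X)=1/N$ in the proof of Lemma~\ref{lemma_monotone} never used smallness of $\varepsilon$ to begin with. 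One small refinement: in your uniqueness paragraph, the collapse $l(s)=r(s)$ at finite $\varepsilon$ is not literally a ``single root'' statement but the translation-invariance argument of Lemma~\ref{lemma_unique_limit} via Proposition~\ref{lemma_shape}, which is an exact property of $\overline{F_i^\varepsilon}$ and hence, under uniform $g$, of $F_i^\varepsilon$ itself---so the squeeze goes through verbatim, as you suspected.
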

Uniform common priors are often assumed in the applied literature on global games \citep[\textit{cf.}][]{morris1998unique,angeletos2006signaling,angeletos2007dynamic,sakovics2012matters}. In Appendix~\ref{app_small} we show why $\Gamma^\varepsilon(s)$ behaves ``as if'' $g$ were uniform when $\varepsilon$ is small.

The analysis will reveal that Theorem~\ref{thm_subsidy} remains valid under a slightly more general definition of implementation. We show that $\tilde{s}$ is the unique subsidy scheme such that $p^{\tilde{x}}$ is the unique strategy vector that survives IESDS in $\Gamma^\varepsilon(\tilde{s})$. Implementation as a unique strategy vector that survives IESDS is more general than implementation as a unique Bayesian Nash equilibrium because the former implies the latter but the reverse implication is not necessarily true. In this sense, as in \cite{sandholm2002evolutionary,sandholm2005negative}, we need not impose that players play an equilibrium of the game but could depart from more primitive assumptions on players' strategic sophistication by requiring that none play a strategy that is iteratively dominated. Equilibrium play would then be obtained as a result, rather than an assumption, of the analysis.

Lastly, we note that Theorem~\ref{thm_subsidy} is a positive result: given the Planner's choice of $\tilde{x}$, Theorem~\ref{thm_subsidy} characterizes the unique subsidy scheme that implements $p^{\tilde{x}}$. We are agnostic about her exact motivation for choosing $\tilde{x}$. Several intuitive objectives could underly her choice. For example, the Planner might face a budget constraint $B>0$ and seek to maximize the prior probability of coordination on $(1,1,...,1)$ given her budget; this is the problem of the ``adoption-maximization planner'' in \cite{leister2022social}.\footnote{\cite{sakovics2012matters} consider the related problem of a planner who seek to maximize the probability of coordination on $(1,1,...,1)$ at minimal cost; however, the planner in their model is not bound by an explicit budget constraint.} Alternatively, the Planner might seek to implement the equilibrium $p$ that maximizes expected welfare, where welfare is some increasing function of players' payoffs; this is the problem of the ``welfare-maximization planner'' in \cite{leister2022social}. As said, this paper remains agnostic as to the Planner's decision-making process -- all we do is show how, conditional on her choice of $\tilde{x}$, she can implement $p^{\tilde{x}}$.

\section{Analysis}\label{sec_main}
The plan for this section is as follows. We first show that for any vector of subsidies $s$ there exists a unique vector of real numbers $x(s)=(x_i(s))$ such that the increasing strategy vector $p^{x(s)}$ is the unique strategy vector that survives IESDS in $\Gamma^\varepsilon(s)$. Then we demonstrate that the strategy vector $p^{x(s)}$ is also the unique Bayesian Nash equilibrium of $\Gamma^\varepsilon(s)$. We use this, and some minor technical results, to derive the unique subsidy scheme $\tilde{s}$ that implements $p^{\tilde{x}}$.

\subsection{Monotonicities}

Suppose that all of player $i$'s opponents are known to play increasing strategies, say $p_{-i}^{y}=(p_j^y)_{j\neq i}$. Then his incentive $u_i^\varepsilon$ to play 1 satisfies a two intuitive monotonicity properties.

\begin{lemma}\label{lemma_increasing}
	Given is a vector of real numbers $y=(y_i)$ and the associated increasing strategy vector $p^y=(p_i^{y_i})$. Then,
	\begin{itemize}
		\item[(i)] $u^{\varepsilon}_i(p_{-i}^y\mid x_i^{\varepsilon})$ is monotone increasing in $x_i^\varepsilon$;
		\item[(ii)] $u^{\varepsilon}_i(p_{-i}^y\mid x_i^{\varepsilon})$ is monotone decreasing in $y_j$, all $j\in\N\setminus\{i\}$.
	\end{itemize}
\end{lemma}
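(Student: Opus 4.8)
The plan is to reduce both claims to pointwise monotonicity of the integrand that defines $u_i^\varepsilon$. With opponents playing the increasing profile $p_{-i}^y$, player $j$ chooses $1$ at signal $x_j^\varepsilon$ exactly when $x_j^\varepsilon\geq y_j$, so by \eqref{eq_incentive},
\[ u_i\bigl(p_{-i}^y(x_{-i}^\varepsilon)\mid x\bigr)=x-c_i+w_i\Bigl(\textstyle\sum_{j\neq i}\mathbf{1}\{x_j^\varepsilon\geq y_j\}\Bigr), \]
and $u_i^\varepsilon(p_{-i}^y\mid x_i^\varepsilon)$ is the expectation of the right-hand side under the posterior $F_i^\varepsilon(\cdot\mid x_i^\varepsilon)$ induced by \eqref{eq_signal}. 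Since $w_i$ is nondecreasing, both monotonicity statements will follow once the count $\sum_{j\neq i}\mathbf{1}\{x_j^\varepsilon\geq y_j\}$ can be compared realization by realization.

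Part (ii) is then immediate. Fix $x_i^\varepsilon$, hence fix the measure $F_i^\varepsilon(\cdot\mid x_i^\varepsilon)$. Raising a single coordinate $y_j$ can only flip $\mathbf{1}\{x_j^\varepsilon\geq y_j\}$ from $1$ to $0$ for each realization, weakly lowering the count while leaving $x$ and the other indicators untouched; because $w_i$ is nondecreasing, the whole integrand is pointwise nonincreasing in $y_j$, and the integral inherits this. No assumption on the prior $g$ or on the size of $\varepsilon$ is needed here.

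Part (i) is subtler because raising $x_i^\varepsilon$ also moves the conditioning event. I would handle this by passing to ``relative'' noise coordinates: writing $x=x_i^\varepsilon-\varepsilon\eta_i$ and $x_j^\varepsilon=x_i^\varepsilon+\varepsilon(\eta_j-\eta_i)$,
\[ u_i^\varepsilon(p_{-i}^y\mid x_i^\varepsilon)=\mathbb{E}\Bigl[\,x_i^\varepsilon-\varepsilon\eta_i-c_i+w_i\Bigl(\textstyle\sum_{j\neq i}\mathbf{1}\{x_i^\varepsilon+\varepsilon(\eta_j-\eta_i)\geq y_j\}\Bigr)\Bigr], \]
where the expectation is over the conditional law of $\eta_i$ given $x_i^\varepsilon$ — density proportional to $g(x_i^\varepsilon-\varepsilon\eta_i)f(\eta_i)$ — together with the independent law $f$ of each $\eta_j$, $j\neq i$. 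In the operative case of a uniform prior (either by assumption, cf.\ Corollary~\ref{cor_thm}, or, for general continuous $g$ and $\varepsilon$ small, after the ``as-if-uniform'' reduction of Appendix~\ref{app_small}) the conditional law of $\eta_i$ is simply $f$, so the integrating measure no longer depends on $x_i^\varepsilon$; it then suffices to note that, for fixed $(\eta_i,\eta_{-i})$, the bracketed expression is nondecreasing in $x_i^\varepsilon$ — the term $x_i^\varepsilon-\varepsilon\eta_i$ is increasing, each indicator is a nondecreasing step function of $x_i^\varepsilon$, and $w_i$ is nondecreasing in the resulting count — so the expectation is nondecreasing, giving (i). The one genuine obstacle is exactly this: a higher signal acts through two channels — optimism about the fundamental $x$ and optimism about opponents' signals, hence (by strategic complementarity) about their play — and one must also control the signal-dependence of the posterior itself. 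The change of coordinates aligns the two channels pointwise, and the uniform-prior normalization removes the residual posterior dependence, after which only bookkeeping with monotone step functions and the monotonicity of $w_i$ remains.
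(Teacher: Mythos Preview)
Your argument is correct and lands in the same place as the paper's, but the execution of part~(i) differs in a way worth noting. The paper first passes to the uniform-prior posterior $\overline{F_i^\varepsilon}(\,\cdot\mid x_i^\varepsilon)$ (via the Appendix~\ref{app_small} reduction) and then invokes first-order stochastic dominance: the posterior on $x_{-i}^\varepsilon$ shifts rightward as $x_i^\varepsilon$ increases, and since $v_i(x_{-i}^\varepsilon)=w_i(p_{-i}^y(x_{-i}^\varepsilon))$ is nondecreasing in $x_{-i}^\varepsilon$, the expectation rises. Your change of variables to the noise coordinates $(\eta_i,\eta_{-i})$ is effectively an explicit coupling that \emph{proves} this FOSD pointwise rather than citing it: under the uniform prior the law of $(\eta_i,\eta_{-i})$ given $x_i^\varepsilon$ does not depend on $x_i^\varepsilon$, so the comparison reduces to monotonicity of the integrand at each fixed realization. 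That is arguably more transparent and avoids the black-box appeal to stochastic dominance, at the cost of making the uniform-prior assumption do slightly more visible work. For part~(ii) the two proofs are identical in substance (pointwise monotonicity of the integrand in $y_j$ with the measure held fixed); your observation that no approximation of $g$ is needed there is a clean point the paper's write-up obscures by first passing to~$\overline{F_i^\varepsilon}$. Both proofs share the same caveat: for general continuous $g$ and arbitrary $\varepsilon$, exact monotonicity in part~(i) is obtained only after the as-if-uniform reduction, so the statement should be read in that regime.
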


Part (i) of Lemma~\ref{lemma_increasing} says that a player's incentive to play 1 is increasing in his type $x_i^\varepsilon$ when his opponents play increasing strategies. There are two sides to this. First, taking as given the vector of actions $a_{-i}$, a player's expected payoff to playing 1 is linearly increasing in $x_i^\varepsilon$; hence, his expected incentive is increasing in his signal $x_i^\varepsilon$. Second, as $x_i^\varepsilon$ increases player $i$'s posterior distribution on the hidden state $x$ and, therefore, the signals of his opponents shifts to the right. If his opponents play increasing strategies, this also shifts his distribution of the aggregate action to the right which, because externalities are increasing in the aggregate action, further raises his incentive to play 1. Note that monotonicity of $u^{\varepsilon}_i(p_{-i}^y\mid x_i^{\varepsilon})$ in $x_i^\varepsilon$ depends upon $p_{-i}^y$ being increasing; for generic $p_{-i}$, $u^{\varepsilon}_i(p_{-i}\mid x_i^{\varepsilon})$ can be locally decreasing in $x_i^\varepsilon$.

Part (ii) of Lemma~\ref{lemma_increasing} says that the incentive to play 1 of a player $i$ whose opponents play increasing strategies is decreasing in the switching point of each of these increasing strategies. For given signal $x_i^\varepsilon$, the probability player $i$ attaches to the event that his opponent $j$ receives a signal $x_j^\varepsilon>y_j$ and thus, in $p_j^{y_j}$, plays 1 is decreasing in $y_j$. Therefore player $i$'s incentive to play 1 is decreasing in the switching $y_j$.

The analysis relies repeatedly upon Lemma~\ref{lemma_increasing} for much of the heavy lifting. While a focus on increasing strategies seems natural in $\Gamma^\varepsilon(s)$ \citep[\textit{cf}.][]{angeletos2007dynamic}, the results in Lemma~\ref{lemma_increasing} are of true pratical use only once the focus on increasing strategies has been properly defended. The next sextion provides such a justification; Lemma~\ref{lemma_unique_limit} pushes it to its ultimalte conclusion.

\subsection{Subsidies, Strategies, Selection}\label{sec_formal}
Recall that $x_i^N$ and $x_i^0$ demarcate strict dominance regions for player $i$: when $x<x_i^N$ [$x>x_i^0$], playing $0$ [playing $1$] is strictly dominant for player $i$ in $\Gamma(x)$. A subsidy $s_i$ to player $i$ shifts these boundaries to $x_i^N-s_i$ and $x_i^0-s_i$, respectively. In the game of incomplete information $\Gamma^\varepsilon$, the boundaries for strict dominance in terms of a player's signals instead are $x_i^N-s_i-\varepsilon/2$ and $x_i^0-s_i+\varepsilon/2$, respectively. That is, for all $x_i^\varepsilon>x_i^0-s_i+\varepsilon/2$ player $i$ knows that any true state $x$ consistent with his signal satisfies $x>x_i^0-s_i$, in which case playing 1 is strictly dominant. To make the following arguments work, we must assume that $\overline{X}\geq \overline{x}-s_i+\varepsilon/2$ and $\underline{X}\leq\underline{x}-s_i-\varepsilon/2$ for all $i\in\N$, imposing a joint restriction on permissible values of $(\underline{X},\overline{X},s)$ given $\varepsilon$. This assumption is henceforth maintained.


Per the foregoing argument, given the assumption that $\overline{X}\geq \overline{x}-s_i+\varepsilon/2$, we know that $u_i^\varepsilon(p_{-i}\mid \overline{X},s_i)>0$ for all $p_{-i}$. In particular, therefore, one has
\begin{equation*}
	u_i^\varepsilon(p_{-i}^{\overline{X}}\mid \overline{X},s_i)>0.
\end{equation*}
Let $r^1_i$ be the solution to
\begin{equation*}\label{eq_ID_first}
u_i^\varepsilon(p_{-i}^{\overline{X}}\mid r^1_i,s_i)=0.
\end{equation*}
To any player $i$, the action $a_i=1$ is strictly dominant at all $x_i^\varepsilon>r^1_i$ conditional on $\overline{X}$; denote $r^1:=(r^1_i)$. It is clear that $r_i^1$ depends upon the subsidy $s_i$, but for brevity we leave this dependence out of the notation for now. From Lemma~\ref{lemma_increasing} follows that $r_i^1<\overline{X}$ for all $i$.

Player $i$ knows that no player $j$ will pursue a strategy $p_j< p_j^{r_j^1}$ since such a strategy is iteratively strictly dominated. Now define $r^2=(r_i^2)$ as the signal that solves
\begin{equation*}\label{eq_step2_id}
	u_i^\varepsilon(p_{-i}^{r^1}\mid r^2_i,s_i)=0,
\end{equation*}
for all $i$. Because $p_i^{\overline{X}}$ is strictly dominated for every $i$, the any strategy $p_i<p_i^{r_i^1}$ is iteratively strictly dominated for all $i$, which in turn implies that any $p_i<p_i^{r_i^2}$ is iteratively dominated. This argument can -- and should -- be repeated indefinitely. We obtain a sequence $\overline{X}=r^0_i,r^1_i,...$, all $i$. For any $k$ and $r^k_i$ such that $u_i^\varepsilon(p^{r^k_{-i}}\mid r^k_i,s_i)>0$, there exists $r^{k+1}_i$ that solves $u_i^\varepsilon(p_{-i}^{r^k}\mid r^{k+1}_i,s_i)=0$. Induction on $k$, using Lemma~\ref{lemma_increasing}, reveals that $r^{k+1}_i<r^k_i$ for all $k\geq0$. Moreover, we know that $r^k_i\geq\underline{X}$ for all $k$. It follows that the sequence $(r^k_i)$ is monotone and bounded. Such a sequence must converge; let $r_i(s)$ denote its limit and define $r(s):=(r_i(s))$. By construction, $r(s)$ solves
\begin{equation*}\label{eq_R*}
	u_i^\varepsilon\left(p_{-i}^{r(s)}\mid r_i(s),s_i\right)=0.
\end{equation*}

A symmetric procedure should be carried out starting from low signals, eliminating ranges of $x_i^\varepsilon$ for which playing 1 is strictly (iteratively) dominated. For every player $i$ this yields an increasing and bounded sequence $(l^k_i)$ whose limit is $l_i(s)$, and $l(s):=(l_i(s))$. The limit $l(s)$ solves $u_i^\varepsilon(p_{-i}^{l(s)}\mid l_i(s),s_i)=0$ for all $i$.

It is clear from the foregoing construction that a strategy $p_i$ survives IESDS if and only if $p_i^{r_i(s)}(x_i^\varepsilon)\leq p_i(x_i^\varepsilon)\leq p_i^{l_i(s)}(x_i^\varepsilon)$ for all $x_i^\varepsilon$. We are particularly interested in games in which the points $l_i(s)$ and $r_i(s)$ converge to a common limit $x(s):=(x_i(s))$ that, hence, is the (essentially) unique solution to
\begin{equation}\label{eq_x(s)}
	u_i^\varepsilon\left(p_{-i}^{x(s)}\mid x_i(s),s_i\right)=0
\end{equation}
for all $i\in\N$. To work in such an environment, we must assume $\varepsilon$ to be sufficiently small.
\begin{lemma}\label{lemma_unique_limit}
For all $\delta>0$, there exists $\varepsilon(\delta)>0$ such that $r_i(s)-l_i(s)<\delta$ for all $\varepsilon\leq\varepsilon(\delta)$ and all $i\in\N$.
\end{lemma}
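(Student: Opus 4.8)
The plan is to show that the gap between the two limit sequences $r_i(s)$ and $l_i(s)$, which arises from the two directions of iterated elimination, can be made arbitrarily small by shrinking the noise scale $\varepsilon$. The strategy is a standard global-games contraction argument adapted to $N$ heterogeneous players with aggregate externalities. First I would fix $s$ and work with the rescaled signals, writing each elimination step in terms of the conditional posterior on the \emph{rank} of a player's signal among his opponents; as $\varepsilon\to 0$, for a player whose own signal sits at the current cutoff, the posterior over how many opponents have signals above a common cutoff $X$ converges (by exchangeability of the i.i.d.\ noise and continuity of $F$) to a distribution that depends only on the relative position of $X$, and in the symmetric-uniform benchmark becomes exactly uniform on $\{0,1,\dots,N-1\}$ — this is the ``as if uniform'' reduction the paper invokes and proves in Appendix~\ref{app_small}. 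The key point is that the maps defining $r^{k+1}$ from $r^k$ and $l^{k+1}$ from $l^k$ are, in the $\varepsilon\to 0$ limit, the \emph{same} map $T$ on cutoff vectors, obtained by setting the limiting expected incentive $\sum_{n=0}^{N-1}\tfrac{1}{N}\big(x_i^\varepsilon + w_i(n)\big) - c_i + s_i = 0$; equation~\eqref{eq_x(s)} has the explicit solution $x_i(s) = c_i - s_i - \sum_{n=0}^{N-1} w_i(n)/N$, which is exactly $s_i^*(\tilde x)$ shifted appropriately, so the fixed point is unique.

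Concretely, the steps in order: (1) Show that for each $\varepsilon$, $r_i(s)$ and $l_i(s)$ are well-defined limits of monotone bounded sequences solving the respective fixed-point equations (already done in the excerpt). (2) Establish a uniform-in-$\varepsilon$ Lipschitz-type bound: using Lemma~\ref{lemma_increasing}(i)–(ii), the map carrying a cutoff vector $y$ to the vector of zeros of $u_i^\varepsilon(p_{-i}^y\mid\cdot,s_i)$ is monotone and, crucially, a contraction in the sup-norm with a modulus bounded away from $1$ uniformly in small $\varepsilon$ — here one differentiates the defining equation, bounds $\partial u_i^\varepsilon/\partial x_i^\varepsilon$ below and $\partial u_i^\varepsilon/\partial y_j$ above, both in terms of the noise density $f$ and the increments $w_i(n+1)-w_i(n)$, and checks the off-diagonal sensitivities sum to strictly less than the diagonal one. (3) Conclude that both $r(s)$ and $l(s)$ lie within a shrinking neighborhood of the unique fixed point of the limiting map $T$; since they are sandwiched, $\|r(s)-l(s)\|_\infty \to 0$ as $\varepsilon\to 0$. (4) Make the ``$\varepsilon$ small enough that $g$ acts uniform'' step rigorous by citing Appendix~\ref{app_small}: on the bounded region of relevant signals, the prior density $g$ is bounded above and below and varies by a factor $1+O(\varepsilon)$ across any window of width $O(\varepsilon)$, so the posterior weights are within $O(\varepsilon)$ of the uniform-prior weights, which only perturbs each cutoff by $O(\varepsilon)$ and does not destroy the contraction.

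The main obstacle is step (2): getting a contraction modulus that is bounded away from $1$ \emph{uniformly} over all sufficiently small $\varepsilon$ and over the relevant compact range of cutoffs. The delicate part is that as $\varepsilon\to 0$ both the diagonal term $\partial u_i^\varepsilon/\partial x_i^\varepsilon$ and the off-diagonal terms $\partial u_i^\varepsilon/\partial y_j$ blow up like $1/\varepsilon$ (the signal density concentrates), so one must show the \emph{ratio} stabilizes — i.e.\ that the limiting linearization of $T$ has spectral radius strictly below $1$. This follows because, in the limiting (uniform) model, a uniform shift of all cutoffs by $\Delta$ changes player $i$'s indifference signal by exactly $\Delta$ (translation invariance), so the row sums of the limiting Jacobian of the zero-map equal $1$ with the diagonal strictly positive; hence the off-diagonal row sums are strictly below $1$, giving a strict sup-norm contraction. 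Once this uniform contraction is in hand, Lemma~\ref{lemma_unique_limit} follows by the Banach fixed-point sandwich, and the rest is bookkeeping on the $O(\varepsilon)$ error from the non-uniform prior.
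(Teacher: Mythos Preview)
Your contraction-mapping route is genuinely different from the paper's. The paper never iterates the best-response map or bounds a Lipschitz modulus; it compares the two fixed points $r=r(s)$ and $l=l(s)$ directly. Setting $\Delta_i=r_i-l_i\ge 0$ and taking the player $i$ with $\Delta_i$ maximal, it uses the translation invariance of $\overline{F_i^\varepsilon}$ (Proposition~\ref{lemma_shape}) together with Lemma~\ref{lemma_increasing}(ii) to obtain
\[
u_i^\varepsilon\bigl(p_{-i}^{r}\mid r_i,s_i\bigr)-u_i^\varepsilon\bigl(p_{-i}^{l}\mid l_i,s_i\bigr)\ \ge\ \Delta_i,
\]
which forces $\Delta_i=0$ since both sides vanish by construction. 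No derivatives, no Jacobian, no uniform contraction modulus.

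Your step~(2) resolution contains a concrete error. You assert that ``a uniform shift of all cutoffs by $\Delta$ changes player $i$'s indifference signal by exactly $\Delta$.'' That is false. What translation invariance delivers is $u_i^\varepsilon(p_{-i}^{y+\Delta}\mid x_i^\varepsilon+\Delta,s_i)=u_i^\varepsilon(p_{-i}^{y}\mid x_i^\varepsilon,s_i)+\Delta$: the strategic term is unchanged and the direct $x$-term contributes the extra $\Delta$. Hence if $T_i(y)$ solves $u_i^\varepsilon(p_{-i}^{y}\mid T_i(y),s_i)=0$, evaluating the incentive at $x_i^\varepsilon=T_i(y)+\Delta$ against cutoffs $y+\Delta$ gives $\Delta$, not $0$; so $T_i(y+\Delta)\neq T_i(y)+\Delta$. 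In your language, it is the row sums of $DG$ (with $G_i(y)=u_i^\varepsilon(p_{-i}^{y_{-i}}\mid y_i,s_i)$) that equal $1$; the row sums of the best-response Jacobian $DT$ work out to $\alpha_i/(1+\alpha_i)$ with $\alpha_i=\partial W_i/\partial x_i^\varepsilon$. This is indeed $<1$ for each fixed $\varepsilon$, so a contraction is salvageable in principle, but near the common cutoff $\alpha_i\sim 1/\varepsilon$, so the modulus tends to $1$ as $\varepsilon\to 0$ and you do \emph{not} obtain the uniform-in-$\varepsilon$ contraction you were aiming for---the obstacle you correctly flagged is not overcome by the argument you give. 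The paper's max-gap comparison sidesteps this entirely because it uses translation invariance only once, at the fixed points, rather than trying to control the whole iteration.
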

We note that assuming $\varepsilon\to0$ is sufficient but not, in general, necessary to obtain convergence to the common limit $x(s)$; for example, when $g$ is uniform we have $l_i(s)=r_i(s)$ for all $\varepsilon>0$.


Given a subsidy scheme $s$ and small enough $\varepsilon$, there is a unique increasing strategy vector $p^{x(s)}$ that survives IESDS in $\Gamma^\varepsilon(s)$. We next establish that the relation between $x(s)$ and $s$ is one-to-one: given any $\hat{x}$, there is a unique subsidy scheme $\hat{s}$ such that $p^{\hat{x}}$ is the unique strategy vector that survives IESDS in $\Gamma^\varepsilon(\hat{s})$.

\begin{lemma}\label{lemma_unique_subsidy}
	Given is a vector of real numbers $\hat{x}=(\hat{x}_i)$ and $\varepsilon$ sufficiently small. There is a unique subsidy scheme $\hat{s}=(\hat{s}_i)$ such that $x(\hat{s})=\hat{x}$.
\end{lemma}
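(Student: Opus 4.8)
The plan is to exhibit $\hat s$ explicitly and then verify its two defining properties using the machinery already in place. Since subsidies enter incentives additively --- by \eqref{eq_exp_incentive_sub}, $u_i^\varepsilon(p_{-i}\mid x_i^\varepsilon,s_i)=u_i^\varepsilon(p_{-i}\mid x_i^\varepsilon)+s_i$ --- the system \eqref{eq_x(s)} that defines $x(s)$ reads $u_i^\varepsilon(p_{-i}^{x(s)}\mid x_i(s))+s_i=0$ for every $i$. Setting $x(s)=\hat x$ and solving for $s$ reads off the candidate
\[
\hat s_i:=-\,u_i^\varepsilon\!\left(p_{-i}^{\hat x}\mid \hat x_i\right),\qquad i\in\N,
\]
which is a well-defined real number for any $\hat x$ (restricting, if necessary, to $\hat x$ for which the maintained joint restriction on $(\underline X,\overline X,s)$ holds at $s=\hat s$; this is innocuous since $\mathcal X$ may be enlarged at will).

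For existence I would argue as follows. With $s=\hat s$ one has $u_i^\varepsilon(p_{-i}^{\hat x}\mid \hat x_i,\hat s_i)=0$ for all $i$ by construction. Lemma~\ref{lemma_increasing}(i), applied to the increasing profile $p_{-i}^{\hat x}$, says $x_i^\varepsilon\mapsto u_i^\varepsilon(p_{-i}^{\hat x}\mid x_i^\varepsilon,\hat s_i)$ is nondecreasing; since it vanishes at $\hat x_i$, action $1$ is a best response to $p_{-i}^{\hat x}$ for every $x_i^\varepsilon\ge\hat x_i$ and action $0$ is a best response for every $x_i^\varepsilon\le\hat x_i$. Hence $p_i^{\hat x_i}$ is a best response to $p_{-i}^{\hat x}$ for each $i$, so $p^{\hat x}$ is a Bayesian Nash equilibrium of $\Gamma^\varepsilon(\hat s)$ and therefore survives IESDS (no component of a Nash equilibrium is ever eliminated). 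But the construction preceding Lemma~\ref{lemma_unique_limit} identifies the survivors of IESDS in $\Gamma^\varepsilon(\hat s)$ as exactly those $p_i$ with $p_i^{r_i(\hat s)}\le p_i\le p_i^{l_i(\hat s)}$, and for $\varepsilon$ small Lemma~\ref{lemma_unique_limit} forces $l_i(\hat s)=r_i(\hat s)=x_i(\hat s)$, so $p^{x(\hat s)}$ is the only survivor. Comparing the two, $p^{\hat x}=p^{x(\hat s)}$, and since the switching point of an increasing strategy is recoverable from the strategy itself, this gives $x(\hat s)=\hat x$.

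Uniqueness is then immediate: if $\hat s'$ also satisfies $x(\hat s')=\hat x$, then substituting $x(\hat s')=\hat x$ into the defining system \eqref{eq_x(s)} yields $u_i^\varepsilon(p_{-i}^{\hat x}\mid \hat x_i)+\hat s_i'=0$ for all $i$, i.e. $\hat s_i'=-u_i^\varepsilon(p_{-i}^{\hat x}\mid \hat x_i)=\hat s_i$.

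I expect the only substantive point to be the implication ``$p^{\hat x}$ satisfies the indifference system $\Rightarrow$ $x(\hat s)=\hat x$'': a priori a solution of \eqref{eq_x(s)} is merely a fixed point and need not coincide with the IESDS limit, so one genuinely needs the two facts that (a) every Bayesian Nash equilibrium survives IESDS and (b) for $\varepsilon$ small, IESDS has a unique survivor (Lemma~\ref{lemma_unique_limit}). Everything else --- the closed form for $\hat s$ and the uniqueness argument --- is a one-line substitution exploiting the additivity of subsidies in \eqref{eq_exp_incentive_sub}.
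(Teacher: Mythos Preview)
Your uniqueness argument is essentially identical to the paper's: both exploit the additivity in \eqref{eq_exp_incentive_sub} to conclude that any $\hat s'$ with $x(\hat s')=\hat x$ must satisfy $\hat s_i'=-u_i^\varepsilon(p_{-i}^{\hat x}\mid\hat x_i)$.

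Where you go further is on existence. The paper's proof addresses only uniqueness (it opens with ``suppose there are two distinct vectors of subsidies'') and leaves existence implicit. You instead exhibit $\hat s_i=-u_i^\varepsilon(p_{-i}^{\hat x}\mid\hat x_i)$ explicitly and then verify $x(\hat s)=\hat x$ via the chain: the indifference system plus Lemma~\ref{lemma_increasing}(i) make $p^{\hat x}$ a BNE of $\Gamma^\varepsilon(\hat s)$; BNEs survive IESDS; Lemma~\ref{lemma_unique_limit} says the IESDS survivor is the singleton $p^{x(\hat s)}$; hence $\hat x=x(\hat s)$. Your closing remark that merely solving \eqref{eq_x(s)} at $\hat x$ does not by itself force $x(\hat s)=\hat x$---one needs uniqueness of the IESDS survivor---is exactly the gap your argument fills and the paper's leaves open. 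Your route also avoids the paper's forward reference to Lemma~\ref{thm_BNE}, since you only need that a BNE survives IESDS, not that the BNE is unique.
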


\subsection{Implementation and Characterization}
Recall that a strategy vector $p=(p_1,p_2,...,p_N)$ is a Bayesian Nash Equilibrium (BNE) of $\Gamma^{\varepsilon}(s)$ if for any $p_i$ and $x_i^{\varepsilon}$ it holds that:
\begin{equation}\label{eq_def_BNE}
	p_i(x_i^{\varepsilon})\in\argmax_{a_i\in\{0,1\}}\pi_i^{\varepsilon}(a_i,p_{-i}\mid x_i^{\varepsilon},s_i),
\end{equation}
where $\pi_i^{\varepsilon}(a_i,p_{-i}\mid x_i^{\varepsilon}):=\int \pi_i(a_i,p_{-i}(x^{\varepsilon}_{-i})\mid x)\, \wrt F_i^{\varepsilon}(x,x^{\varepsilon}_{-i}\mid x_i^{\varepsilon})$. It follows immediately that $p^{x(s)}$ is a BNE of $\Gamma^\varepsilon(s)$. Lemma~\ref{thm_BNE} strengthens this result and establishes that $p^{x(s)}$ is the \textit{only} BNE of $\Gamma^\varepsilon(s)$.

\begin{lemma}\label{thm_BNE}
	Given is $s$ and $\varepsilon$ sufficiently small. The essentially unique Bayesian Nash equilibrium of $\Gamma^\varepsilon(s)$ is $p^{x(s)}$. In particular, if $p$ a BNE of $\Gamma^\varepsilon(s)$ then any $p_i\in p$ satisfies $p_i(x_i^{\varepsilon})=p_i^{x_i(s)}(x_i^{\varepsilon})$ for all $x_i^{\varepsilon}\neq x_i(s)$ and all $i$.  
\end{lemma}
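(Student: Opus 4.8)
\textbf{Plan.} The strategy is to show that any BNE of $\Gamma^\varepsilon(s)$ must survive IESDS, and then invoke the characterization already obtained in Section~\ref{sec_formal}: the only strategies surviving IESDS lie between $p_i^{r_i(s)}$ and $p_i^{l_i(s)}$, and for $\varepsilon$ small enough (Lemma~\ref{lemma_unique_limit}) these squeeze to the common value $x_i(s)$, forcing $p_i(x_i^\varepsilon)=p_i^{x_i(s)}(x_i^\varepsilon)$ at every $x_i^\varepsilon\neq x_i(s)$. The one genuinely new thing to verify is that BNE play survives IESDS; everything else is bookkeeping on top of results already in hand.

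\textbf{Step 1: BNE survives IESDS.} I would argue by induction on the rounds of elimination, simultaneously for the ``high'' elimination (from $\overline{X}$ downward, producing the sequence $r^k$) and the ``low'' elimination (producing $l^k$). Fix a BNE $p$. Base case: no $p_i$ in a BNE can place positive probability on a strictly dominated action, since by \eqref{eq_def_BNE} the action chosen must be a best reply to \emph{the particular} $p_{-i}$, and a strictly dominated action is never a best reply to any $p_{-i}$; in particular, using the strict-dominance boundaries $x_i^0-s_i+\varepsilon/2$ and $x_i^N-s_i-\varepsilon/2$ from Section~\ref{sec_formal}, any BNE has $p_i(x_i^\varepsilon)=1$ for $x_i^\varepsilon$ above the former and $p_i(x_i^\varepsilon)=0$ below the latter. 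Inductive step: suppose every BNE strategy $p_j$ satisfies $p_j^{r_j^k}\le p_j\le p_j^{l_j^k}$ (pointwise). Then for player $i$, since $p_{-i}$ is sandwiched between the increasing vectors $p_{-i}^{r^k}$ and $p_{-i}^{l^k}$, Lemma~\ref{lemma_increasing}(ii) (monotonicity of $u_i^\varepsilon$ in the switching points of an increasing opponent profile) together with the obvious monotonicity of $u_i^\varepsilon(\cdot\mid x_i^\varepsilon,s_i)$ in the opponent profile gives
\[
u_i^\varepsilon(p_{-i}^{r^k}\mid x_i^\varepsilon,s_i)\ \le\ u_i^\varepsilon(p_{-i}\mid x_i^\varepsilon,s_i)\ \le\ u_i^\varepsilon(p_{-i}^{l^k}\mid x_i^\varepsilon,s_i).
\]
Hence whenever $x_i^\varepsilon>r_i^{k+1}$ the left-hand side is positive (by definition of $r_i^{k+1}$ and Lemma~\ref{lemma_increasing}(i)), so $a_i=1$ is the unique best reply and $p_i(x_i^\varepsilon)=1$; symmetrically $p_i(x_i^\varepsilon)=0$ for $x_i^\varepsilon<l_i^{k+1}$. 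Thus $p_i^{r_i^{k+1}}\le p_i\le p_i^{l_i^{k+1}}$, closing the induction.

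\textbf{Step 2: take limits and conclude.} Passing $k\to\infty$ in Step 1 gives $p_i^{r_i(s)}\le p_i\le p_i^{l_i(s)}$ for every BNE strategy $p_i$. For $\varepsilon$ sufficiently small, Lemma~\ref{lemma_unique_limit} yields $l_i(s)=r_i(s)=:x_i(s)$ for all $i$ (or, without appealing to the exact-coincidence case, that $r_i(s)-l_i(s)<\delta$ for any prescribed $\delta$, which already pins $p_i$ down off a vanishing interval and, combined with monotonicity, forces the switching point to $x_i(s)$). Therefore $p_i(x_i^\varepsilon)=1$ for $x_i^\varepsilon>x_i(s)$ and $p_i(x_i^\varepsilon)=0$ for $x_i^\varepsilon<x_i(s)$, i.e.\ $p_i(x_i^\varepsilon)=p_i^{x_i(s)}(x_i^\varepsilon)$ for all $x_i^\varepsilon\neq x_i(s)$; the value at the single point $x_i(s)$ is immaterial (an ``essentially unique'' equilibrium). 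Since $p^{x(s)}$ was already noted to be a BNE, this establishes both existence and essential uniqueness.

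\textbf{Main obstacle.} The only subtle point is Step~1's inductive step: making sure the sandwich inequality for $u_i^\varepsilon$ is legitimate when $p_{-i}$ is an \emph{arbitrary} (non-increasing) profile trapped between two increasing profiles. This needs the monotonicity of the incentive in the opponent action profile (which holds pointwise in $(x,x_{-i}^\varepsilon)$ because $w_i$ is increasing in the aggregate action, then integrates through) combined with Lemma~\ref{lemma_increasing}; once that comparison is in place, the best-reply characterization of BNE does the rest. A secondary, purely technical care point is the handling of the measure-zero signal $x_i^\varepsilon=x_i(s)$ and of ties in \eqref{eq_def_BNE}, which is why the statement is phrased as essential uniqueness rather than literal uniqueness.
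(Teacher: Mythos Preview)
Your proof is correct, but the paper takes a different (though closely related) route. Instead of inductively showing that any BNE survives each round of IESDS and then invoking Lemma~\ref{lemma_unique_limit}, the paper gives a direct extremal-point argument: for an arbitrary BNE $p$, it defines $\underline{\underline{x}}_i=\inf\{x_i^\varepsilon:p_i(x_i^\varepsilon)>0\}$ and $\overline{\overline{x}}_i=\sup\{x_i^\varepsilon:p_i(x_i^\varepsilon)<1\}$, takes the min and max over players, and then uses exactly the sandwich comparison you identify (the BNE profile $p$ is dominated pointwise by $p^{\underline{\underline{x}}}$ and dominates $p^{\overline{\overline{x}}}$) together with the BNE best-reply condition at the extremal signals to conclude $\underline{\underline{x}}\ge x(s)$ and $\overline{\overline{x}}\le x(s)$, hence $\underline{\underline{x}}=\overline{\overline{x}}=x(s)$.

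The two arguments rely on the same monotone-comparison idea; the difference is organizational. Your approach has the advantage of being a clean corollary of the IESDS construction already carried out in Section~\ref{sec_formal}, so almost nothing new needs to be proved. The paper's approach is self-contained and avoids the induction, going in one step to the extremal signals of the BNE; it is shorter but requires the reader to see why the indifference condition at $\underline{\underline{x}}$ forces $\underline{\underline{x}}\ge x(s)$ (this is where uniqueness of the solution to \eqref{eq_x(s)} enters). Either argument is fine; yours makes the dependence on the earlier lemmas more explicit.
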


We know that for any subsidy scheme $s$ and small enough $\varepsilon$ the increasing strategy vector $p^{x(s)}$ is the unique BNE of $\Gamma^\varepsilon(s)$. From Lemma~\ref{lemma_unique_limit}, we furthermore know that there is a unique subsidy scheme $\tilde{s}$ such that $x_i(\tilde{s})=\tilde{x}$ for all $i$. It follows that the subsidy scheme $\tilde{s}$ that implements $p^{\tilde{x}}$ exists and is unique, provided we set $\varepsilon$ sufficiently small. This proves part (i) of Theorem~\ref{thm_subsidy}. All that is left to do now is to characterize $\tilde{s}$. We rely on the following lemma.

\begin{lemma}\label{lemma_monotone}
	For all $\delta>0$ there exists $\varepsilon(\delta)>0$ such that
	\begin{equation}\label{eq_gain_p^X_X}
		\left| u_i^{\varepsilon}\left(p^X_{-i}\mid X,s_i\right)-\left[X+\sum_{n=0}^{N-1}\frac{w_i(n)}{N}-c_i+s_i\right]\right|<\delta
	\end{equation}
	for $\varepsilon\leq\varepsilon(\delta)$ and all $X$ such that $\underline{X}+\varepsilon\leq X\leq \overline{X}-\varepsilon$.
\end{lemma}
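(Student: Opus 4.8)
The plan is to evaluate $u_i^{\varepsilon}(p^X_{-i}\mid X,s_i)$ in the limit $\varepsilon\to 0$ and to show that the convergence is uniform over $X\in[\underline{X}+\varepsilon,\overline{X}-\varepsilon]$. Unwinding the definitions, $u_i^{\varepsilon}(p^X_{-i}\mid X,s_i)=s_i+\int\big[x+w_i(\#\{j\neq i:x_j^{\varepsilon}\geq X\})-c_i\big]\,\wrt F_i^{\varepsilon}(x,x_{-i}^{\varepsilon}\mid X)$, since under $p^X_{-i}$ opponent $j$ plays $1$ exactly when $x_j^{\varepsilon}\geq X$. So the estimate reduces to two claims, each to be established uniformly in $X$: (a) the posterior mean of $x$ given the signal $X$ converges to $X$; and (b) the posterior law of the count $n:=\#\{j\neq i:x_j^{\varepsilon}\geq X\}$ converges to the uniform distribution on $\{0,1,\dots,N-1\}$, so that the expected externality converges to $\sum_{n=0}^{N-1}w_i(n)/N$.

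First I would invoke the ``as-if-uniform prior'' approximation of Appendix~\ref{app_small}: for small $\varepsilon$, and uniformly over $X$ in the stated interior range, the posterior $F_i^{\varepsilon}(\cdot\mid X)$ is within $\delta/2$ of the posterior obtained under a uniform prior on $x$. The restriction $\underline{X}+\varepsilon\leq X\leq\overline{X}-\varepsilon$ is exactly what guarantees that this posterior is supported inside $\mathcal{X}$ — it places $x$ in $[X-\varepsilon/2,X+\varepsilon/2]$ and each $x_j^{\varepsilon}$ in $[X-\varepsilon,X+\varepsilon]$ — so there is no boundary truncation to spoil the symmetry exploited below. It therefore suffices to prove the bound with the uniform-prior posterior and then add back the $\delta/2$ error.

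Under the uniform prior, conditioning on $x_i^{\varepsilon}=X$ amounts to writing $x=X-\varepsilon\eta_i$ with $\eta_i\sim F$, whence $x_j^{\varepsilon}=X+\varepsilon(\eta_j-\eta_i)$ with the $\eta_j$ i.i.d.\ $\sim F$, independent of $\eta_i$. Then the posterior mean of $x$ is $X-\varepsilon\,\mathbb{E}[\eta_i]$, which differs from $X$ by at most $\varepsilon/2$ (and equals $X$ when $F$ is symmetric), a bound free of $X$; this is claim (a). For (b), note $x_j^{\varepsilon}\geq X\iff\eta_j\geq\eta_i$, so given $\eta_i=t$ the count $n$ is $\mathrm{Binomial}(N-1,1-F(t))$, and integrating out $t$ via the change of variables $u=F(t)$ yields $\Pr(n=k)=\binom{N-1}{k}\int_0^1(1-u)^k u^{N-1-k}\,\wrt u=\binom{N-1}{k}B(k+1,N-k)=1/N$ for every $k\in\{0,\dots,N-1\}$ — the standard Laplacian belief. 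Hence the uniform-prior expectation of $x+w_i(n)-c_i+s_i$ is exactly $X-\varepsilon\mathbb{E}[\eta_i]+\sum_{n=0}^{N-1}w_i(n)/N-c_i+s_i$, within $\varepsilon/2$ of the target expression, uniformly in $X$. Taking $\varepsilon(\delta)$ small enough that $\varepsilon/2<\delta/2$ and that the Appendix~\ref{app_small} bound is below $\delta/2$ finishes the proof.

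The combinatorial identity in (b) is exact and classical, so the only real work is making the prior-approximation step uniform in the switching point: one must verify that the density-ratio estimates of Appendix~\ref{app_small} hold simultaneously for all $X\in[\underline{X}+\varepsilon,\overline{X}-\varepsilon]$, which follows from the uniform continuity of the continuous densities $g$ and $f$ on the compact set $\mathcal{X}$ together with the fact that the posterior concentrates on a window of width $O(\varepsilon)$ around $X$. The remaining steps are bookkeeping.
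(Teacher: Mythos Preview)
Your proposal is correct and follows the same overall route as the paper: invoke the uniform-prior approximation of Appendix~\ref{app_small}, then show that under the uniform prior the conditional distribution of the count $n=\#\{j\neq i:x_j^{\varepsilon}\geq X\}$ given $x_i^{\varepsilon}=X$ is uniform on $\{0,\dots,N-1\}$. The one notable difference is how you establish the latter: the paper computes $\overline{\Omega^\varepsilon}(n\mid X,X)=1/N$ by repeated integration by parts on $\int \phi^\varepsilon(y)[\Phi^\varepsilon(y)]^{N-n-1}[1-\Phi^\varepsilon(y)]^n\,\wrt y$, whereas you use the change of variables $u=F(t)$ to reduce directly to the Beta integral $\binom{N-1}{k}B(k+1,N-k)=1/N$. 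Your route is shorter and more transparent; the paper's version has the minor advantage of not requiring one to recall the Beta identity, but both compute the same quantity.
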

If his opponents all play the same increasing strategy $p_j^X$, then upon observing the threshold signal $x_i^\varepsilon=X$ player $i$'s belief over the aggregate action $\sum_{j\neq i}a_j$ is uniform. Convergence to uniform strategic beliefs is a common property in global games; see Lemma 1 in \cite{sakovics2012matters} for a reference in the context of policy design.

Recall that, if $x(s)$ is the vector of switching points such that $p^{x(s)}$ is the unique BNE of $\Gamma^\varepsilon(s)$, then $x_i(s)$ solves \eqref{eq_x(s)} for all $i$. Imposing now that $\tilde{s}$ be such that $x_i(\tilde{s})=\tilde{x}$ for all $i\in\N$, one obtains
\begin{equation}\label{eq_id_sub}
	u_i^{\varepsilon}\left(p^{\tilde{x}}_{-i}\mid \tilde{x},\tilde{s}_i\right)=0
\end{equation}
as the $N$ identifying conditions for the subsidy scheme $\tilde{s}=(\tilde{s}_i)$ that implements $p^{\tilde{x}}$. Using the result in Lemma~\ref{lemma_monotone} when $X=\tilde{x}$ and solving \eqref{eq_id_sub} for $\tilde{s}_i$ establishes that for all $r>0$ there exists $\varepsilon(r)>0$ such that
\[\left|\tilde{s}_i-s_i^*(\tilde{x})\right|<r\]
for all $\varepsilon\leq\varepsilon(r)$ and all $i\in\N$. Hence, the subsidy scheme $\tilde{s}$ is contained in $\mathcal{B}_r(s^*(\tilde{x}))$ for any radius $r>0$ provided we choose $\varepsilon$ sufficiently small. This proves part (ii) of Theorem~\ref{thm_subsidy}.

\subsection{Discussion}\label{sec_discussion}
Our results characterize the subsidy scheme $\tilde{s}$ a Planner must commit to when seeking to implement $p^{\tilde{x}}$ among rational players. Let us discuss several properties of this policy.

First, optimal subsidies are modest relative to the Planner's goal: $\tilde{s}_i$ does not make $p_i^{\tilde{x}}$ strictly dominant for any player $i$.
The sufficiency of modest subsidies is the consequence of a strategic \textit{unraveling effect} of subsidies in coordination games. A subsidy to player $i$ raises his incentive to play 1. In a coordination game, the increased incentive of player $i$ raises the incentive of player $j$ to play 1. The increase in $j$'s incentive in turn makes playing 1 even more attractive to player $i$, and so on. Under common knowledge of the subsidy, what obtains is a indefinitely compounded positive feedback look, the unraveling effect; see Figure~\ref{fig_indirect} in the Introduction. Because of the unraveling effect, even seemingly minor subsides can go a long way toward solving the Planner's problem. This feature of $\tilde{s}$ is a key counterpoint to several well-known results in the literature on policy design in coordination problems that stress optimality of subsidizing at least some players to strict dominance \citep{segal2003coordination,winter2004incentives,bernstein2012contracting,sakovics2012matters}.

%

Second, symmetric players are offered identical subsidies. This symmetry deviates from a number of other notable proposals including a divide-and-conquer policy \citep[\textit{cf.}][]{segal2003coordination,bernstein2012contracting} and the incentive schemes studied in \cite{winter2004incentives} and \cite{halac2020raising}.\footnote{\cite{onuchic2023signaling} also show that ``identical agents'' may be compensated asymmetrically in equilibrium; however, though identical in the payoff-relevant sense their players may still vary in payoff-irrelevant ``identifies''.} The policies derived in \cite{sakovics2012matters} and \cite{halac2021rank} also treat symmetric players identically.

Third, subsidies target all players and are globally continuous in model parameters. The characterization in \eqref{eq_opt_sub_asym} establishes global continuity of $\tilde{s}_i$ in all the parameters upon which it depends. While conditional on policy treatment the optimal subsidies in \cite{sakovics2012matters} are continuous in the relevant model parameters as well, changes in one player's parameters could affect whether or not said player is targeted, causing a discrete jump in subsidies received. Similarly, subsidies are continuous conditional on a player's position in the policy ranking in a divide and conquer mechanism \citep{segal2003coordination,bernstein2012contracting}; however, a player's position in the optimal ranking is affected by a change in its parameters, which can lead to discrete jumps in subsidy entitlement.

Fourth, the subsidy scheme $\tilde{s}$ is unique. In the complete information environments considered by \cite{segal2003coordination}, \cite{winter2004incentives}, and \cite{bernstein2012contracting} the optimal policy is not unique when (some) players are symmetric. In the incomplete information environments considered by \cite{sakovics2012matters} and \cite{halac2021rank}, the optimal policy is unique. Note, however, that the results in \cite{sakovics2012matters} and \cite{halac2021rank} establish uniqueness of the policy that minimizes the expected cost of implementing a given equilibrium; in their models, there still exist other, more expensive policies that implement the same equilibrium. In contrast, Theorem~\ref{thm_subsidy} establishes that only one policy can implemenet a given equilibrium of the game studied here.

Fifth, subsidies are increasing in $c_i$, the opportunity cost of playing 1. Given $x$, the cost of playing 1 is inreasing in $c_i$; hence, to induce coordination on 1 subsidies should increase as the cost $c_i$ rises. This property is intuitive and shared (conditional on policy treatment and/or ranking) by many recent contributions on policy design in coordination problems \citep{segal2003coordination,winter2004incentives,sakovics2012matters,bernstein2012contracting,halac2020raising,halac2021rank}.

Sixth, subsidies are decreasing in $\tilde{x}$, the threshold for coordination on 1 targeted by the Planner. All else equal, a player's incentive to play 1 is increasing in his signal $x_i^\varepsilon$. Hence, for higher signals a player needs less subsidy to induce him to play 1. One can interpret $\tilde{x}$ as an inverse measure of the Planner's ambition: the higher is $\tilde{x}$, the lower is the prior probability that coordination on 1 will be achieved. In this interpretation, being ambitious is costly: assuming coordination on 1 is indeed achieved, total spending on subsidies is increasing in the Planner's ambition (decreasing in $\tilde{x}$). The same is true in \cite{sakovics2012matters}.

Seventh, subsidies are decreasing in spillovers, i.e.\ $\partial \tilde{s}_i/\partial w_i(n)<0$. When observing the threshold signal $\tilde{x}$, a player $i$'s belief over the aggregate action $A_{-i}$ is uniform; in particular, therefore, he assigns strictly positive probability to the event that $A_{-i}=n$ for all $n=0,1,...,N-1$. If $w_i(n)$ increases, the \textit{expected} spillover a player expects to enjoy upon playing 1 is hence greater. This raises his incenive to play 1 and, for given $\tilde{x}$, the subsidy required to make him willing to do so is smaller. Given a ranking of players, subsides for each player (except the first-ranked) are also decreasing in spillovers in a divide-and-conquer policy \citep{segal2003coordination,bernstein2012contracting}. The optimal subsidies in \cite{sakovics2012matters} are not generally decreasing in spillovers, except insofar as players who benefit less from project success are more likely to be targeted.

Eigth, players do not necessarily have symmetric payoffs in the equilibrium. Thus, while their equilibrium strategies are symmetric by construction, an optimal subsidy does not guarantee symmetry in equilibrium.

\section{Special Cases and Extensions}\label{sec_special}
Throughout this section, we assume that $\varepsilon\to0$ to simplify the statements of results.

\subsection{Principal-Agent Problems}\label{sec_pa}

There is an organizational project that involves $N$ tasks each performed by one agent $i\in\N$. Each agent $i$ decides whether to work ($a_i=1$) towards completing his task or shirk ($a_i=0$). The cost of working to agent $i$ is given by $c_i>0$. Success of the project depends upon the decisions of all agents through a production technology $q:\{0,1,...,N\}\to[0,1]$, where $q(n)$ is the probability of success given that $n$ agents work. As in \cite{winter2004incentives} and \cite{halac2021rank}, we assume that $q(n+1)>q(n)$ for all $n\leq N-1$.

A principal offers contracts that specify rewards $v=(v_i)$ to agents contingent on project success; if the project fails, all agents receive zero. We assume that agents' work effort is their private knowledge -- any rewards the principal offers can condition only upon project success. An agent who shirks gets payoff $-x$. We interpret $x$ generally as an uncertain fundamental that determines agents' payoffs, see also \cite{halac2022addressing} for a model of contracting under fundamental uncertainty.
\begin{proposition}\label{cor_pa}
	Consider a principal-agent problem in which the principal offers rewards $(\tilde{v}_i)$ to implement $p^{\tilde{x}}$ as the unique equilibrium of the game. For each $i\in\N$, the reward $\tilde{v}_i$ is given by
	\begin{equation}\label{eq_PA}
		\bar{q}\cdot \tilde{v}_i=c_i-\tilde{x},
	\end{equation}
where $\bar{q}:=\sum_{n=0}^{N-1}\frac{q(n+1)-q(n)}{N}$.
\end{proposition}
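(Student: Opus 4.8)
The plan is to reduce the principal--agent problem to a game of the form $\Gamma^\varepsilon$ and then read off $\tilde v$ from the identification step underlying Theorem~\ref{thm_subsidy}. \emph{Step 1 (translate payoffs).} Because the contract conditions only on project success and not on an agent's own effort, agent $i$ collects $v_i$ exactly when the project succeeds, while his own effort raises the success probability from $q(n)$ to $q(n+1)$ when $n$ of his opponents work. Hence, facing $n$ working opponents, $i$ earns expected payoff $q(n+1)v_i-c_i$ from working and $q(n)v_i-x$ from shirking (the ``$-x$'' in the model records the outside option; the success reward is collected by shirkers as well). His incentive to work is therefore
\[
u_i(a_{-i}\mid x)=x+(q(n+1)-q(n))\,v_i-c_i,\qquad n=\sum_{j\neq i}a_j,
\]
so the game with reward vector $v=(v_i)$ is \emph{exactly} an instance of $\Gamma^\varepsilon$ (with no additional subsidy) whose primitives are the costs $c_i$ and the externality functions $w_i^{v}(n):=(q(n+1)-q(n))\,v_i$. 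The one thing to check here is that this is a coordination game, i.e.\ that $w_i^{v}$ is nondecreasing; this holds when $v_i\geq0$ and $q$ has increasing differences, and I would simply maintain it (it is in any case implicit in asking that $p^{\tilde x}$ be implementable through the global-games analysis of Section~\ref{sec_main}).

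\emph{Step 2 (identify $\tilde v$).} Apply Section~\ref{sec_main} to this game for each fixed $v$. Lemmas~\ref{lemma_unique_limit} and~\ref{thm_BNE} give a unique Bayesian Nash equilibrium $p^{x(v)}$, and, in the limit $\varepsilon\to0$ maintained throughout this section, Lemma~\ref{lemma_monotone} with $s_i=0$ turns the equilibrium condition \eqref{eq_x(s)} into
\[
0=u_i^{\varepsilon}(p^{x(v)}_{-i}\mid x_i(v))=x_i(v)+\sum_{n=0}^{N-1}\frac{w_i^{v}(n)}{N}-c_i .
\]
The sum telescopes: $\sum_{n=0}^{N-1}w_i^{v}(n)/N=v_i\cdot\frac{1}{N}\sum_{n=0}^{N-1}(q(n+1)-q(n))=\bar q\,v_i$, with $\bar q=(q(N)-q(0))/N>0$ because $q(n+1)>q(n)$ for all $n$. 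Hence $x_i(v)=c_i-\bar q\,v_i$, and implementing $p^{\tilde x}$ requires $x_i(v)=\tilde x$ for every $i$, i.e.\ $\bar q\,\tilde v_i=c_i-\tilde x$, which is \eqref{eq_PA}.

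\emph{Step 3 (existence and uniqueness).} Both follow at once from the last display: $v_i\mapsto x_i(v)=c_i-\bar q\,v_i$ is affine with nonzero slope, hence a bijection, so $\tilde v_i=(c_i-\tilde x)/\bar q$ is the unique reward that works -- the principal--agent analogue of Lemma~\ref{lemma_unique_subsidy}, re-derived in a line. I expect the only real difficulty to be bookkeeping rather than mathematics: pinning down the payoff translation of Step~1 correctly -- in particular that the success reward accrues to an agent regardless of that agent's own effort, which is what leaves the \emph{difference} $q(n+1)-q(n)$ in the incentive and hence yields the telescoping constant $\bar q=(q(N)-q(0))/N$ rather than $\frac{1}{N}\sum_{m=1}^{N}q(m)$ -- and checking the strategic-complementarity condition so that the machinery of Section~\ref{sec_main} carries over unchanged.
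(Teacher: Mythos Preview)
Your proposal is correct and follows essentially the same route as the paper: compute the incentive $u_i(a_{-i}\mid x)=x+(q(n+1)-q(n))v_i-c_i$, invoke Lemma~\ref{lemma_monotone} at the threshold to obtain uniform beliefs over $n$, and solve the resulting indifference condition for $\tilde v_i$. Two minor remarks: (i) your telescoping observation $\bar q=(q(N)-q(0))/N$ is correct and a tidy simplification the paper leaves implicit; (ii) your caveat about strategic complementarity is well placed --- the paper assumes only $q(n+1)>q(n)$, and the machinery of Section~\ref{sec_main} indeed needs $n\mapsto q(n+1)-q(n)$ nondecreasing (and $v_i\geq0$) for $w_i^v$ to be increasing, a point the paper's own proof does not spell out.
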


Note that for $x=0$, the payoffs in this model exactly replicate those of the canonical problem studied by \cite{winter2004incentives}. Interestingly, the analysis under uncertainty fails to yield Winter's prescription that optimal contracts are inherently discriminatory and should reward identical agents asymmetrically. In this paper, symmetric agents receive identical rewards.

It is interesting to compare Proposition~\ref{cor_pa} to a result in Halac et al.\ (\citeyear{halac2021rank}, Theorem 2 and Corollary 1 in particular). These authors consider the problem of a Planner who offers agents rewards in a \textit{ranking scheme}. In a ranking scheme, agents first are ranked; conditional on his ranking, agent $i$ is then offered a reward that makes him indifferent between working and shirking provided all agents who are ranked below [above] him work [shirk]. Moreover, contract offers a private so that agents face uncertainty about their ranking. For the case of symmetric agents, \cite{halac2021rank} establish that an optimal ranking scheme induces uniform beliefs about each agent's ranking. One can interpret Proposition~\ref{cor_pa} along similar lines: if an agent is ranked $n$-th and believes that all agents ranked below [above] him work [shirk], the reward that is necessary to make him work for all $x_i^\varepsilon>\tilde{x}$ is $v_i(n)=(c_i-\tilde{x})/(q(n+1)-q(n))$. Thus, if an agent has uniform beliefs about his own ranking the necessary reward becomes $\sum_{n=0}^{N-1}v_i(n)/N$, which is exactly the optimal reward $\tilde{v}_i$ given in Proposition~\ref{cor_pa}. Note that, in our analysis, the uniform belief over $n$ is also valid when agents are asymmetric.

\subsection{Heterogeneous Externalities}\label{sec_heterogeneous}

The main analysis assumes that only the aggregate action $A_{-i}$ matters for the externality other players impose upon player $i$. We relax this assumption here. In particular, we allow that the externality $w_i(a_{-i})$ depends upon the specific vector $a_{-i}$ played. We maintain a focus on games with strategic complementarities and assume that if $a_{-i}''\geq a_{-i}$, then $w_i(a_{-i}'')\geq w_i(a_{-i})$. Observe that this externality structure encompasses the games in \cite{bernstein2012contracting} and \cite{halac2021rank}, where externalities are allowed to depend upon the subset $M\subseteq\N$ of players who play 1. It also nests the approach in \cite{sakovics2012matters} where externalities depend upon the \textit{weighed} aggregate action. Finally, heterogeneous externalities may arise in coordination games on (directed) graphs \citep{leister2022social}.

Let $a_{-i}^n$ denote an action vector $a_{-i}$ in which exactly $n$ players play 1 (and the remaining $N-n-1$ players play 0). We write $A_{-i}^n$ for the set of all (unique) action vectors $a_{-i}^n$ (i.e.\ $A_{-i}^n:=\{a_{-i}\mid \sum_{a_j\in a_{-i}}a_j=n\}$). Note that there are exactly $\binom{N-1}{n}$ vectors $a_{-i}^n$ in $A_{-i}^n$. For all $i$, define
\[w_i^n:=\frac{\sum_{a_{-i}^N\in A_{-i}^N} w_i\left(a_{-i}^n\right)}{\binom{N-1}{n}}.\]
In words, $w_i^n$ is the expected externality imposed upon player $i$ who expects that $n$ opponents play 1 and believes that any such outcome is equally likely. 

\begin{proposition}\label{cor_het}
	Let $\tilde{x}\in\mathcal{X}$. There exists a unique subsidy scheme $\tilde{s}=(\tilde{s}_i)$ that implements $p^{\tilde{x}}$ in the game  $\Gamma^\varepsilon(s)$ with heterogeneous externalities. The subsidy $\tilde{s}_i$ pursuant to the scheme is given by
	\begin{equation}\label{eq_opt_sub_het}
		\tilde{s}_i=c_i-\tilde{x}-\sum_{n=0}^{N-1}\frac{w_i^n}{N}
	\end{equation}
	for all $i\in\N$ and $\varepsilon$ sufficiently small.
\end{proposition}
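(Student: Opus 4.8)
The plan is to show that the entire argument establishing Theorem~\ref{thm_subsidy} goes through essentially verbatim for heterogeneous externalities, with the only substantive change occurring in the analog of Lemma~\ref{lemma_monotone}: the homogeneous average $\sum_{n}w_i(n)/N$ is replaced by the doubly-averaged quantity $\sum_n w_i^n/N$, because a player who observes the threshold signal and faces opponents using a common switching point holds a belief over $a_{-i}$ that is uniform over the aggregate action \emph{and} uniform across subsets at each aggregate level.

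First I would re-establish Lemma~\ref{lemma_increasing} in this setting. The only property of $w_i$ used in its proof is monotonicity, and the assumption here ($a_{-i}''\geq a_{-i}\Rightarrow w_i(a_{-i}'')\geq w_i(a_{-i})$ in the product order) is exactly the right replacement: when the opponents play an increasing strategy vector $p_{-i}^y$, raising $x_i^\varepsilon$ shifts player $i$'s posterior on $(x,x_{-i}^\varepsilon)$ upward (by the additive-noise/affiliation structure), hence shifts the random vector $a_{-i}=p_{-i}^y(x_{-i}^\varepsilon)$ upward in the product order stochastically; together with monotonicity of $w_i$ and linearity of the direct payoff in $x$, this gives part (i). Part (ii) follows because raising $y_j$ shifts $p_j^{y_j}$ pointwise downward, hence lowers $a_{-i}$ stochastically. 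With Lemma~\ref{lemma_increasing} re-established, the iterated-dominance construction of Section~\ref{sec_formal} is unchanged, producing monotone bounded sequences $(r_i^k),(l_i^k)$ with limits $r_i(s),l_i(s)$ solving the corresponding indifference equations; and Lemmas~\ref{lemma_unique_limit}, \ref{lemma_unique_subsidy}, and \ref{thm_BNE} carry over as stated, since their proofs invoke only Lemma~\ref{lemma_increasing} and the small-$\varepsilon$ (near-uniform-prior) estimates, neither of which sees the form of $w_i$. Hence for $\varepsilon$ small there is a unique $\tilde s$ with $x_i(\tilde s)=\tilde x$ for all $i$, and it is identified by $u_i^\varepsilon(p_{-i}^{\tilde x}\mid\tilde x,\tilde s_i)=0$.

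The crux is the analog of Lemma~\ref{lemma_monotone}. Suppose every opponent $j\neq i$ plays the common increasing strategy $p_j^X$ and player $i$ observes $x_i^\varepsilon=X$. Since the noise terms $(\eta_j)_{j\neq i}$ are i.i.d., conditional on $x$ the events $\{x_j^\varepsilon>X\}$ are exchangeable across $j\neq i$; therefore, conditional on exactly $n$ opponents having signals above $X$, each of the $\binom{N-1}{n}$ subsets in $A_{-i}^n$ is equally likely, and this within-level uniformity is preserved upon integrating over player $i$'s posterior on $x$. Thus player $i$'s belief over $a_{-i}$ is: draw $n$ from his marginal belief over the aggregate action, then select a uniformly random level-$n$ vector. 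By the standard global-games computation underlying Lemma~\ref{lemma_monotone} (equivalently, Lemma~1 in \cite{sakovics2012matters}), the marginal belief over $n$ converges to the uniform distribution on $\{0,1,\dots,N-1\}$ as $\varepsilon\to0$. Hence his expected externality converges to $\sum_{n=0}^{N-1}\frac1N\cdot\frac{1}{\binom{N-1}{n}}\sum_{a_{-i}^n\in A_{-i}^n}w_i(a_{-i}^n)=\sum_{n=0}^{N-1}w_i^n/N$, and his expected incentive to $X+\sum_n w_i^n/N-c_i+s_i$. Substituting $X=\tilde x$ into the identifying condition and letting $\varepsilon\to0$ gives $\tilde x+\sum_n w_i^n/N-c_i+\tilde s_i=0$, i.e.\ \eqref{eq_opt_sub_het}.

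The main obstacle is this last step — in particular, checking (a) that the within-level exchangeability argument passes cleanly to the limit and (b) that the convergence of the marginal belief over the aggregate action to the uniform distribution is unaffected by the opponents being, in the relevant subgame, symmetric with the shared switching point $X$ (so that the homogeneous-externality argument for that marginal applies directly). Everything else is a rerun of Section~\ref{sec_main}, so I would state Steps~1--2 briefly and concentrate the write-up on Step~3.
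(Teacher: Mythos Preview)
Your proposal is correct and follows essentially the same route as the paper's proof, which likewise reduces everything to the analog of Lemma~\ref{lemma_monotone}: exchangeability of the i.i.d.\ noise gives exact within-level uniformity over $A_{-i}^n$ (this holds for every $\varepsilon>0$, not just in the limit, so your concern~(a) is moot), and the computation $\overline{\Omega^\varepsilon}(n\mid X,X)=1/N$ from that lemma's proof handles the marginal over $n$ independently of the form of $w_i$, dispatching concern~(b). The paper leaves the carry-over of Lemmas~\ref{lemma_increasing}--\ref{thm_BNE} implicit while you spell it out, but the substance is identical.
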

We observe that Proposition~\ref{cor_het} doubles down on the uniform strategic beliefs of the game with homogeneous externalities. In a game with heterogeneous externalities, players have uniform beliefs about the total number of opponents $n\in\{0,1,...,N-1\}$ that play 1. Moreover, conditional on the number $n$ of opponents that play 1 a threshold type player also has uniform beliefs about the exact action vector $a_{-i}^n$ played.

In the game with heterogeneous externalities, too, symmetric players receive identical subsidies. This conclusion remains valid if the ucnertainty is negligible. Thus, even a little bit of uncertainty can change the canonical results by \cite{segal2003coordination}, \cite{winter2004incentives}, and \cite{bernstein2012contracting} that optimal contracts are fundamentally discriminatory in coordination games.

\subsection{Asymmetric Targets}\label{sec_as}
It was so far maintained that the Planner seeks to implement a symmetric equilibrium. In many practical situations policies may instead target asymmetric outcomes. We consider such instances here.

Given are two real numbers $\tilde{x}_1$ and $\tilde{x}_2$. Without loss, let $\tilde{x}_1<\tilde{x}_2$. The Planner partitions the player set $\N$ into two subsets $\N_1$ and $\N_2$ such that $\N_1\cup\N_2=\N$. There are $N_1$ players in $\N_1$ and $N_2=N-N_1$ player in $\N_2$. Suppose the Planner seeks to implement the asymmetric equilibrium $\tilde{p}=(p_1^{\tilde{x}_1},p_2^{\tilde{x}_2})$ according to which (with a slight abuse of notation) each player $i\in\N_1$ plays the increasing strategy $p_i^{\tilde{x}_1}$ while each $j\in\N_2$ plays $p_j^{\tilde{x}_2}$. We are agnostic as to the motivations behind such an asymmetric policy goal. We call $\tilde{x}_1$ the low critical state and $\tilde{x}_2$ the high critical state; similarly, we label players in $\N_1$ and $\N_2$ as low state and high state players, respectively.

Let $s^{**}(\tilde{x}_1,\tilde{x}_2)=(s_i^{**}(\tilde{x}_1,\tilde{x}_2))$ be the subsidy scheme such that each $s_i^{**}(\tilde{x}_1,\tilde{x}_2)\in s^{**}(\tilde{x}_1,\tilde{x}_2)$ is given by
\begin{equation}\label{eq_subs_asym}
	s_i^{**}(\tilde{x}_1,\tilde{x}_2)=
	\begin{cases}
		c_{i}-\tilde{x}_1-\sum_{n=0}^{N_1-1}\frac{w_{i}(n)}{N_1}\quad&\text{if }i\in\N_1,\\
		c_{i}-\tilde{x}_2-\sum_{n=N_1}^{N-1}\frac{w_{i}(n)}{N_2}\quad&\text{if }{i}\in\N_2.
	\end{cases}
\end{equation}

\begin{proposition}\label{prop_asym}
	Let $\tilde{x}_1,\tilde{x}_2\in\mathcal{X}$. If $\varepsilon$ is sufficiently small, then:
	\begin{itemize}
		\item[(i)] There exists a unique subsidy scheme $\tilde{s}=(\tilde{s}_i)$ that implements $\tilde{p}$.
		\item[(ii)] For all$r>0$, the scheme $\tilde{s}$ is contained in $\mathcal{B}_r(s^{**}(\tilde{x}_1,\tilde{x}_2))$.
	\end{itemize}
\end{proposition}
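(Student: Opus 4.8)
The plan is to mimic the argument used to establish Theorem~\ref{thm_subsidy}, but now carrying two critical states through the iterated-elimination machinery and keeping track of which subset of players a given player's strategic beliefs concern. First I would extend Lemma~\ref{lemma_increasing} to the asymmetric setting: if every opponent $j$ plays an increasing strategy $p_j^{y_j}$ with a (possibly player-specific) switching point, then player $i$'s expected incentive $u_i^\varepsilon(p_{-i}^y\mid x_i^\varepsilon,s_i)$ is still increasing in $x_i^\varepsilon$ and decreasing in each $y_j$. That lemma as stated already allows asymmetric $y=(y_j)$, so no new work is needed here; it is what powers the contraction. Then I would run the two-sided iterated elimination of strictly dominated strategies exactly as in Section~\ref{sec_formal}: starting from the upper dominance boundaries $\overline{X}$ I obtain, for each player $i$, a decreasing bounded sequence $r_i^k$ converging to $r_i(s)$, and from the lower boundaries an increasing bounded sequence converging to $l_i(s)$, with $u_i^\varepsilon(p_{-i}^{r(s)}\mid r_i(s),s_i)=0$ and likewise for $l(s)$. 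Lemma~\ref{lemma_unique_limit} (whose proof is about the noise structure, not the symmetry of the target) gives $r_i(s)-l_i(s)<\delta$ for $\varepsilon$ small, so the two limits collapse to a common $x_i(s)$; the analogues of Lemma~\ref{lemma_unique_subsidy} and Lemma~\ref{thm_BNE} then give that $p^{x(s)}$ is the essentially unique strategy vector surviving IESDS and the essentially unique BNE of $\Gamma^\varepsilon(s)$, and that the map $s\mapsto x(s)$ is a bijection. This establishes part (i): there is a unique $\tilde s$ with $x_i(\tilde s)=\tilde x_1$ for $i\in\N_1$ and $x_i(\tilde s)=\tilde x_2$ for $i\in\N_2$.

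For part (ii) the substantive new computation is the asymmetric analogue of Lemma~\ref{lemma_monotone}. Fix $i\in\N_1$ and suppose every low-state opponent uses switching point $\tilde x_1$ and every high-state opponent uses switching point $\tilde x_2$, with $\tilde x_1<\tilde x_2$. When player $i$ observes exactly the threshold signal $x_i^\varepsilon=\tilde x_1$, I claim that in the $\varepsilon\to0$ limit his belief about how many of the $N_1-1$ other low-state players have a signal above $\tilde x_1$ is uniform on $\{0,1,\dots,N_1-1\}$ — this is the standard Laplacian/uniform-belief property of global games at a common threshold, already invoked in the proof of Lemma~\ref{lemma_monotone} and in \citet{sakovics2012matters} — while simultaneously he assigns probability $1$ to the event that no high-state player has a signal above $\tilde x_2$ (since $\tilde x_2>\tilde x_1$ and signals concentrate at $x=\tilde x_1$ as $\varepsilon\to0$). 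Hence the number of opponents playing $1$ that he faces is uniformly distributed on $\{0,\dots,N_1-1\}$, and his expected externality is $\sum_{n=0}^{N_1-1} w_i(n)/N_1$. Substituting into the indifference condition $u_i^\varepsilon(p_{-i}^{\tilde x}\mid \tilde x_1,\tilde s_i)=0$ and solving for $\tilde s_i$ yields $\tilde s_i\to c_i-\tilde x_1-\sum_{n=0}^{N_1-1}w_i(n)/N_1$. For $j\in\N_2$ observing the threshold signal $\tilde x_2$: every low-state player (switching point $\tilde x_1<\tilde x_2$) is seen to play $1$ with probability $\to1$, so all $N_1$ of them are playing $1$, while the other $N_2-1$ high-state players contribute a uniform belief on $\{0,\dots,N_2-1\}$; thus the total number of opponents playing $1$ is uniform on $\{N_1,N_1+1,\dots,N-1\}$, the expected externality is $\sum_{n=N_1}^{N-1}w_j(n)/N_2$, and solving the indifference condition gives $\tilde s_j\to c_j-\tilde x_2-\sum_{n=N_1}^{N-1}w_j(n)/N_2$. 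These are exactly the two branches of $s^{**}(\tilde x_1,\tilde x_2)$ in \eqref{eq_subs_asym}, so for every $r>0$ one can pick $\varepsilon$ small enough that $\tilde s\in\mathcal B_r(s^{**}(\tilde x_1,\tilde x_2))$.

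The main obstacle I anticipate is verifying that the two-sided IESDS sequences still converge to a common limit, i.e.\ that Lemma~\ref{lemma_unique_limit} genuinely goes through when the terminal target is asymmetric. The subtlety is that the asymmetric profile $\tilde p$ is not a fixed point of a symmetric best-response map, so one must check that the elimination from above and from below for the low-state players is not somehow ``blocked'' by the high-state players sitting at a different switching point — but since Lemma~\ref{lemma_increasing} already accommodates arbitrary vectors $y=(y_j)$ of opponent switching points, the monotone-bounded-sequence argument of Section~\ref{sec_formal} applies verbatim, and $\varepsilon\to0$ (or a uniform prior, as in Corollary~\ref{cor_thm}) forces $r_i(s)=l_i(s)$. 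The only other point requiring a little care is the boundary/support bookkeeping: one needs $\mathcal X$ wide enough that the dominance regions for both $\tilde x_1$ and $\tilde x_2$ players, shifted by their respective subsidies and by $\varepsilon/2$, stay inside $[\underline X,\overline X]$ — this is the same joint restriction on $(\underline X,\overline X,s)$ already maintained in Section~\ref{sec_formal}, now imposed for both critical states. Everything else is a direct transcription of the proof of Theorem~\ref{thm_subsidy}.
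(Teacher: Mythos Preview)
Your proposal is correct and follows essentially the same route as the paper: part~(i) is dispatched by invoking Lemmas~\ref{lemma_unique_limit}--\ref{thm_BNE} (which already allow asymmetric switching-point vectors), and part~(ii) is the computation you describe---at the threshold $\tilde x_k$ a player sees all lower-threshold opponents playing~1, all higher-threshold opponents playing~0, and has the Laplacian uniform belief over own-group opponents, yielding the formula in~\eqref{eq_subs_asym}. The paper carries out this computation for a general partition into $K$ subsets (assuming $X_{k+1}-X_k>\varepsilon$, which is the precise version of your ``signals concentrate'' observation) and then specializes to $K=2$, but the argument is otherwise identical to yours.
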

Proposition~\ref{prop_asym} is a special case of a more general result in which the Planner partitions the player set into $K$ different subsets $\N_k$ each with their own critical state $\tilde{x}_k$. We prove this more general case in the Appendix.

Part (i) of Proposition~\ref{prop_asym}, existence and uniquenes of $\tilde{s}$ that implements $\tilde{p}$, follows from Lemmas~\ref{lemma_unique_limit}--\ref{thm_BNE}. Part (ii) of the proposition, the characterization of the optimal subsidy scheme $\tilde{s}$, underlines the importance of strategic uncertainty for policy design in coordination games.


Indeed, if one compares to optimal subsidy $\tilde{s}_i$ for some low state player $i\in\N_1$ needed to implement the asymmetric equilibrium $\tilde{p}$, one observes that this subsidy is higher than the subsidy that would be necessary to implement the symmetric equilibrium $p^{\tilde{x}_1}$. This makes sense: if a lows state player observes the low critical state, he should -- by construction -- be indifferent between playing 0 and 1. But upon observing the low critical state $\tilde{x}_1$, player $i$ also knows that no player can have received a signal equal to or above the high critical state $\tilde{x}_2$. In other words, player $i$ knows that none of the high state critical players will play 1. This means that the spillovers player $i$ expects to enjoy are less, which in order for him to be indifferent between playing 0 and 1 implies that a higher subsidy is required compared to a situation in which the Planner seeks to implement the symmetric equilibrium $p^{\tilde{x}_1}$. A similar but opposite logic applies to the subsidies targeting high state players $j\in\N_2$. Upon observing the high critial state, a high state player $j$ knows that all players must have received a signal in excess of the low critical state $\tilde{x}_1$. Hence, player $j$ takes as given that all low state players will play 1. This makes playing 1 more attractice to him and consequently implies that a lower subsidy can make him indifferent between playing 0 and 1 compared to the case in which the Planner would target the symmetric equilibrium $p^{\tilde{x}_2}$.


\subsection{Games of Regime Change}\label{sec_joint_invest}
There is a project in which $N$ investors can invest. The cost of investment to investor $i$ is $c_i>0$. If the project succeeds, an investing investor $i$ realizes benefit $b_i>c_i$. The project is successful if and only if a critical mass of investors invests; specifically, there exists $I\in(0,N)$ such that the project succeeds if and only if $A\geq I$. The payoff to not investing, the outside option, is given by $-x$. Uncertainty about $x$, or more generally about $x-c_i$, can be thought of as any kind of (fundamental) uncertainty that pertains to the cost or benefit of investment \citep{abel1983optimal,pindyck1993investments}. Hoping to attract investment, a Planner offers each invsting investor $i$ a subsidy $s_i$. We are particularly interested in the subsidy scheme $s^0=(s_i^0)$ that implements $p^0$ as regime change games usually normalize the payoff to the outside option to 0 (\textit{cf}.\ \cite{morris1998unique}, \cite{angeletos2007dynamic}, \cite{goldstein2005demand}, \cite{sakovics2012matters}, \cite{edmond2013information},  \cite{basak2020diffusing}, \cite{halac2020raising}). 

\begin{proposition}\label{cor_joint}
	Consider a joint investment problem in which $I\in(0,N)$ is the critical threshold for project success. Let $N-n^*$ be the smallest integer greater than $I$. The subsidy scheme $s^0=(s^0_i)$ that implements $p^0$ is given by
	\begin{equation}\label{eq_threshold_N}
		s_i^0=c_i-\frac{n^*}{N}\cdot b_i
	\end{equation}
	for every $i\in\N$.
\end{proposition}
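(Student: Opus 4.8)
The plan is to obtain Proposition~\ref{cor_joint} as a special case of Theorem~\ref{thm_subsidy}: I recognize the joint‑investment game with flat subsidies as an instance of $\Gamma^\varepsilon(s)$ for an appropriate externality function, and then specialize the closed form \eqref{eq_opt_sub_asym} to $\tilde x=0$. This is legitimate here because the section works in the limit $\varepsilon\to0$, so the implementing scheme is exactly $s^*(0)$, and because $0$ lies in the interior of the coordination region of the game (by $b_i>c_i>0$ one has $c_i-b_i<0<c_i$ for every $i$), hence $0\in\mathcal{X}$.

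First I translate the payoffs of Section~\ref{sec_joint_invest} into the language of Section~\ref{sec_model}. Fix the number $n:=\sum_{j\neq i}a_j$ of other investors. An investor who invests earns $b_i\cdot\mathbf{1}[\,1+n\ge I\,]-c_i$ and one who does not earns $-x$, so his incentive to invest is
\[
u_i(n\mid x)=\bigl(b_i\cdot\mathbf{1}[\,1+n\ge I\,]-c_i\bigr)-(-x)=x+w_i(n)-c_i,\qquad w_i(n):=b_i\cdot\mathbf{1}[\,1+n\ge I\,],
\]
which is exactly the incentive \eqref{eq_incentive} of the canonical game, and a flat subsidy $s_i$ shifts it additively as in \eqref{eq_exp_incentive_sub}. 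Since the entire construction of Section~\ref{sec_main} depends on primitives only through incentives, the subsidized joint‑investment game is strategically equivalent to $\Gamma^\varepsilon(s)$ with this $w_i$. I then check the maintained assumptions: $w_i$ is a nondecreasing step function of $n$, so the game has strategic complementarities; and since $0<I<N$ and $b_i>0$, investing is strictly dominant only for large $x$ while not investing is strictly dominant only for small $x$, leaving a nondegenerate coordination region (the boundary case $I\le1$, where $w_i$ is constant and the game degenerates, is harmless and can be verified directly).

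With the identification in place, Theorem~\ref{thm_subsidy} at $\tilde x=0$ says $p^{0}$ is implemented by a unique scheme $s^{0}$, and in the $\varepsilon\to0$ limit $s_i^{0}=s_i^*(0)$. Plugging $w_i(n)=b_i\cdot\mathbf{1}[1+n\ge I]$ into \eqref{eq_opt_sub_asym} gives
\[
s_i^{0}=c_i-\sum_{n=0}^{N-1}\frac{w_i(n)}{N}=c_i-\frac{b_i}{N}\,\#\bigl\{\,n\in\{0,1,\dots,N-1\}:\ 1+n\ge I\,\bigr\}.
\]
It remains to rewrite the cardinality: the set $\{n:1+n\ge I\}$ records exactly those realizations of the aggregate opponent‑action for which, if player $i$ invests, the project succeeds — equivalently, the total investor numbers $1+n\in\{1,\dots,N\}$ that clear $I$. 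Writing $N-n^*$ for the relevant critical investor count, this cardinality equals $n^*$, delivering $s_i^{0}=c_i-\tfrac{n^*}{N}b_i$, i.e.\ \eqref{eq_threshold_N}. The reading is the uniform‑belief one already used in Lemma~\ref{lemma_monotone}: at the threshold signal $0$ player $i$ assigns probability $1/N$ to each value of $\sum_{j\neq i}a_j$, so investing succeeds with probability $n^*/N$, and the flat subsidy just makes up the expected shortfall relative to the normalized outside option.

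The only genuinely delicate step is this last translation between the real number $I$, the integer critical mass, and the index $n^*$: one must be careful about whether to count investors or non‑investors and about the weak‑versus‑strict inequality in $A\ge I$, so that $\#\{n:1+n\ge I\}$ is matched correctly to $n^*$ (a clean formulation of the critical count is that $N-n^*$ is the largest integer for which the project would still fail, i.e.\ the largest integer strictly below $I$, which resolves the off‑by‑one). Everything else is a mechanical specialization of Theorem~\ref{thm_subsidy}, together with the degenerate‑case check noted above.
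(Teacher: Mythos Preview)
Your approach is essentially identical to the paper's: cast the joint-investment payoffs in the canonical form via $w_i(n)=b_i\cdot\mathbf 1[\,n+1\ge I\,]$, invoke Theorem~\ref{thm_subsidy}/Lemma~\ref{lemma_monotone} at $\tilde x=0$, and evaluate $\sum_{n} w_i(n)/N$ by counting the $n$ for which the project succeeds. You are also right to single out the index translation as the only delicate point; in fact the paper's own proof writes the relevant sum as $\sum_{n=N-n^*-1}^{N-1}$, which has $n^*+1$ terms rather than $n^*$, so under the stated definition ``$N-n^*$ is the smallest integer greater than $I$'' the formula \eqref{eq_threshold_N} is off by one, and your reformulation---$N-n^*$ equals the largest integer strictly below $I$, i.e.\ the maximal failing coalition size---is the reading under which \eqref{eq_threshold_N} holds exactly.
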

In the subsidy scheme $s^0$, all investors are subsidized and subsidies are a fraction of their investment costs. The latter is explained through the unraveling effect of policies: if investor $i$ receives an investment subsidy, he is more likely to invest. Anticipating the increased likelihood that $i$ invests, project success becomes more likely and this attracts investment by investor $j$. The greater likelihood that $j$ invests in turn makes investment even more interesting for $i$, and so on. This feedback effect is strong: in (non-trivial) two-player joint investment problems, subsidies are less than half players' investment costs.

While the problem here bears close resemblance to the global game in \cite{sakovics2012matters}, the models differ in fundamental ways that make a direct comparison complicated. In contrast to our game, \cite{sakovics2012matters} do not model prior uncertainty about the efficient outcome of the game; coordinated investment is always the efficient equilibrium of their game. Instead, uncertainty pertains to the critical threshold of investments required to achieve project success, which we assume to be common knowledge. Similarly, conditional on the regime in place, there is certainty about payoffs in \cite{sakovics2012matters}; we instead work with uncertain payoffs even conditional on the regime.\footnote{This distinction applies more generally to the literature on global games of regime change, see \cite{morris1998unique}, \cite{angeletos2007dynamic}, \cite{goldstein2005demand}, \cite{basak2020diffusing}, and \cite{edmond2013information}. Similarly, \cite{kets2022value} (in section 3.3.1, and their Theorem 3.4) also assume that joint investment is the efficient outcome of their game.} It is interesting that these differences, albeit fairly subtle, lead to vastly different policy implications.

An important and, in our view, realistic possibility in the investment problem studied here is that joint investment need \textit{not} be ex post efficient: if $x$ is very low, it can be efficient for all players to not invest and take the outside option. We elaborate upon this issue in the next section.

\subsection{Induced Coordination Failure}\label{sec_symmetric}
The Planner must commit to her policy \textit{before} Nature draws the true state $x$ according to the density $g$. Given our assumptions on $\mathcal{X}$, the support of $g$, this implies that the planner does not know which action vector will be the efficient outcome of the game when she offers her subsidies. She may, hence, commit to subsidizing an action that is ex post inefficient. This possibility warrants policy moderation, as is most simply illustrated in a symmetric game.

Consider the game $\Gamma^\varepsilon$ played among symmetric players such that $c_i=c$ and $w_i(n)=w(n)$ for all $i\in\N$. In symmetric games, $\overline{a}$ is the efficient Nash equilibrium of the complete information game $\Gamma(x)$ for all $x>\underline{x}$ where $\underline{x}=c-w(N-1)$. Let $s^*$ denote the \textit{optimal subsidy} in a symmetric game in the sense that $s^*$ induces coordination on $p^{\underline{x}}$ as the unique Bayesian Nash equilibrium of $\Gamma^\varepsilon(s^*)$. Per the characterization in Theorem~\ref{thm_subsidy}, we have
\begin{equation*}\label{eq_char_opt_sub}
	s^*=w(N-1)-\sum_{n=0}^{N-1}\frac{w(n)}{N}.
\end{equation*}
The following corollary says that $s^*$ is not only sufficient to induce coordination on an efficient equilibrium; subsidization in excess of $s^*$ causes equilibrium inefficiency.
\begin{proposition}\label{cor_inefficiency}
In symmetric games, subsidies $\hat{s}$ such that $\hat{s}>s^*$ are inefficient. Specifically, the subsidy $\hat{s}$ induces coordination on $p^{\hat{x}}$ where $\hat{x}<\underline{x}$. Players thus coordinate on an inefficient outcome of $\Gamma(x)$ for all $x\in(\hat{x},\underline{x}-\varepsilon/2)$ with probability 1.
\end{proposition}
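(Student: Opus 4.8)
The plan is to apply Theorem~\ref{thm_subsidy} in the symmetric setting twice: once to pin down the subsidy $s^*$ that implements $p^{\underline{x}}$, and once to identify the switching point $\hat{x}$ that a larger subsidy $\hat{s}>s^*$ implements. Since the game is symmetric, the characterization in \eqref{eq_opt_sub_asym} collapses to $s^*(\tilde{x}) = c - \tilde{x} - \sum_{n=0}^{N-1} w(n)/N$, which is strictly decreasing in $\tilde{x}$ with slope $-1$. First I would observe that, by Theorem~\ref{thm_subsidy}(i), for $\varepsilon$ small enough there is a \emph{unique} subsidy implementing any given symmetric target $p^{\hat{x}}$, and by part (ii) together with Lemma~\ref{lemma_monotone} this subsidy equals $s^*(\hat{x})$ in the limit $\varepsilon\to0$ (and exactly, under the hypotheses of Corollary~\ref{cor_thm}). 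Because $\tilde{x}\mapsto s^*(\tilde{x})$ is a continuous, strictly decreasing bijection, the inverse map $\hat{s}\mapsto\hat{x}(\hat{s})$ is well-defined, continuous, and strictly decreasing. Evaluating at $\tilde{x}=\underline{x}=c-w(N-1)$ gives $s^*(\underline{x}) = w(N-1) - \sum_{n=0}^{N-1}w(n)/N = s^*$, matching the displayed formula for the optimal subsidy.

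The core step is then monotone comparative statics: if $\hat{s}>s^*=s^*(\underline{x})$, then since $s^*(\cdot)$ is strictly decreasing, the switching point $\hat{x}$ solving $s^*(\hat{x})=\hat{s}$ satisfies $\hat{x}<\underline{x}$. Explicitly, $\hat{x} = c - \hat{s} - \sum_{n=0}^{N-1}w(n)/N = \underline{x} - (\hat{s}-s^*) < \underline{x}$. So under $\hat{s}$ the unique Bayesian Nash equilibrium of $\Gamma^\varepsilon(\hat{s})$ is $p^{\hat{x}}$ with $\hat{x}<\underline{x}$.

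Finally I would translate the switching-point statement into an efficiency statement about the realized outcome. In a symmetric game, $\underline{x}=c-w(N-1)$ is by definition the threshold below which $\underline{a}$ Pareto-dominates $\overline{a}$ in $\Gamma(x)$: for $x<\underline{x}$ each player's payoff from $\overline{a}$, namely $x+w(N-1)$, is strictly less than $c$, the payoff from $\underline{a}$. In the equilibrium $p^{\hat{x}}$, whenever every player's signal exceeds $\hat{x}$ all players play $1$; if $x\in(\hat{x},\underline{x}-\varepsilon/2)$ then, because signals lie within $\varepsilon/2$ of $x$ (indeed within $\varepsilon$, and I would use the $\varepsilon/2$ support bound stated in the model), every player's signal exceeds $\hat{x}$ with probability $1$ while the true state lies strictly below the efficiency threshold $\underline{x}$, so coordination on $\overline{a}$ is strictly Pareto-dominated by $\underline{a}$. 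Hence $\hat{s}$ induces a strictly inefficient outcome on a set of states of positive prior probability, establishing the claim.

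The main obstacle is bookkeeping around the $\varepsilon$-dependence: Theorem~\ref{thm_subsidy} only characterizes $\tilde{s}$ in the limit (or exactly under the uniform-prior assumption of Corollary~\ref{cor_thm}), so for fixed small $\varepsilon$ one must argue that the implemented switching point is within $\delta$ of $\hat{x}$ and choose $\varepsilon$ small enough that it still falls strictly below $\underline{x}-\varepsilon/2$; this is routine given Lemmas~\ref{lemma_unique_limit} and~\ref{lemma_monotone}, but it is where care is needed. One also must confirm that the interval $(\hat{x},\underline{x}-\varepsilon/2)$ is nonempty, which holds precisely because $\hat{s}>s^*$ is a strict inequality (and $\varepsilon$ is small).
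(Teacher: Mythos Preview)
Your proposal is correct and matches the paper's own treatment: the paper does not give a separate proof of Proposition~\ref{cor_inefficiency} but instead notes that it ``is, in some sense, a direct corollary to Theorem~\ref{thm_subsidy},'' which is exactly the route you take---invert the strictly decreasing map $\tilde{x}\mapsto s^*(\tilde{x})$ from \eqref{eq_opt_sub_asym}, observe that $\hat{s}>s^*$ forces $\hat{x}<\underline{x}$, and then read off the inefficiency from the definition of $\underline{x}=c-w(N-1)$. Your closing remarks on the $\varepsilon$-bookkeeping (in particular, that one needs $x>\hat{x}+\varepsilon/2$ to guarantee all signals exceed $\hat{x}$, and that Section~\ref{sec_special} works under the convention $\varepsilon\to0$) are apt and go slightly beyond what the paper spells out.
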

The possibility of policy-induced coordination failure due to excessive subsidization is illustrated in Figure~\ref{fig_failure}.
\begin{figure}[h]
	\centering
	\includegraphics[width=0.85\textwidth, angle=0, clip=true, trim=0cm 6.5cm 0cm 6.4cm]{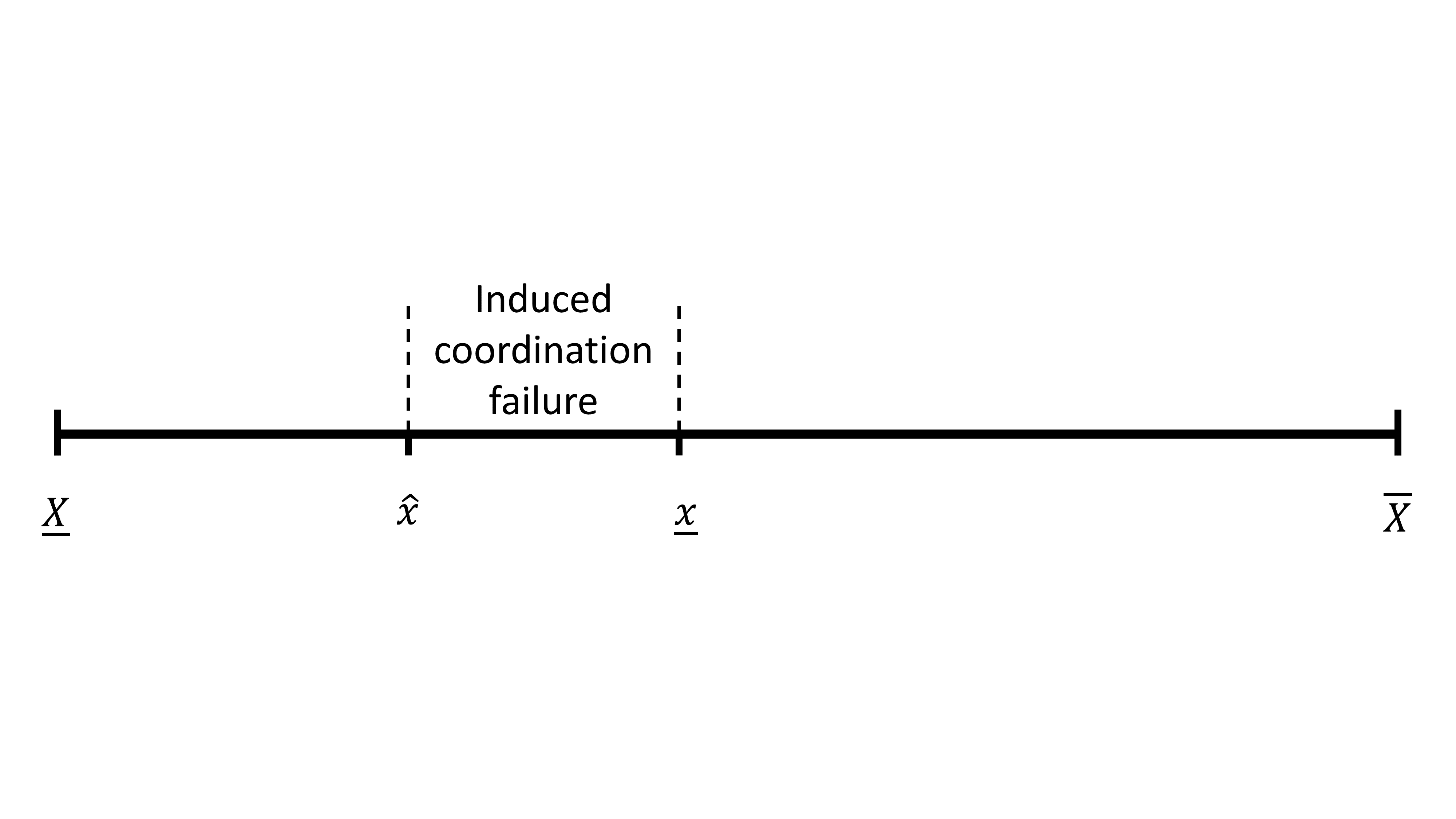}
	\caption{If $\hat{s}>s^*$, the unique equilibrium of $\Gamma^{\varepsilon}(\hat{s})$ has players coordinate on $p^{\hat{x}}$ with $\hat{x}<\underline{x}$. The policy $\hat{s}$ thus induces coordination on an inefficient outcome of the game -- a policy-induced coordination failure -- for all $x\in(\hat{x},\underline{x}-\varepsilon/2)$ with proabbility 1.}
	\label{fig_failure}
\end{figure}
Though Proposition~\ref{cor_inefficiency} is, in some sense, a direct corollary to Theorem~\ref{thm_subsidy}, we single it out to emphasize an important economic implication of our analysis. Prior uncertainty about $x$ implies prior uncertainty about the efficient outcome of $\Gamma(x)$. If the Planner must commit to her policy before $x$ is drawn, the consequent uncertainty about the state $x$ warrants policy moderation as high subsidies risk stimulating coordination on $(1,1,...,1)$ even when $(0,0,...,0)$ turns out to be the ex post efficient outcome. Intuitively, under prior uncertainty about payoff functions the Planner should subsidize conservatively to avoid picking inefficient winners; the Planner should not be wedded to the idea that coordination on $(1,1,...,1)$ must always be achieved.

The notion that policy intervention can itself be a source of coordination failure appears to square well with the historical evidence \citep[\textit{cf.}][]{cowan1990nuclear,cowan1996sprayed}. The theoretical literature has not always emphasized this possibility \citep[a notable exception being][]{cowan1991tortoises}. For example, in the coordination problems studied by \cite{segal2003coordination}, \cite{winter2004incentives}, \cite{angeletos2006signaling}, \cite{sakovics2012matters}, \cite{bernstein2012contracting}, \cite{basak2020diffusing}, \cite{halac2020raising,halac2021rank}, and \cite{kets2022value}, the Pareto-dominant outcome of the game is known a priori. When the efficient outcome of the game is known with certainty, subsidization can be excessive only in the sense that the Planner ends up spending more on subsidies than would be strictly necessary. The results in this paper complement that possibility by making explicit another, non-budgetary source of policy inefficiency: the possibility that excessive intervention induces coordination on an inefficient outcome.

\section{Concluding Remarks}\label{section_concl}
This paper presents a number of results on policy design in coordination games. Strategic uncertainty complicates policy design in coordination games. To deal with this complication, the Planner in this paper connects the problem of policy design to that of equilibrium selection using a global games approach. Our main result characterizes the subsidy scheme that induces coordination on a given equilibrium of the game as its unique equilibrium. We show that optimal subsidies are unique and admit a number of properties that run counter to well-known results on policy design in coordination games. In particular, we show that optimal subsidies are symmetric for identical players, continuous functions of model parameters, and do not make the targeted strategies strictly dominant for any single player.

Two core features of the game considered here help explain the differences between our optimal policy and the policies previously proposed in the literature. First, as stated above, the Planner in this paper connects the problem of policy design to that of equilibrium selection. Equilibrium selection allows the Planner to make very precise inferences about player's actual, rather than hypothetical, strategic beliefs and to design her policy in response to those. Second, the Planner must commit to her policy before knowing which strategy vector will be the ex post efficient outcome of the game. This kind of fundemantal uncertainty leads to a degree of policy restraint as overly agressive intervention may itself become a source of ex post coordination failure.

The analysis also highlights an unraveling effect of policy in coordination games. A subsidy raises a player $i$'s incentive to play the subsidized action. The raised incentive of player $i$ also indirectly increases player $j$'s incentive to play that action. This, in turn, makes the subsidized action even more attractive for player $i$, and so on. Under common knowledge of the policy, this positive feedback loop compounds indefinitely and allows seemingly modest policies to unravel coordination problems.

\appendix
\section{Properties of $\Gamma^\varepsilon$ When $\varepsilon$ Is Small}\label{app_small}

As stated when introducing Corollary~\ref{cor_thm}, the proofs in this Appendix rely upon our ability to analyze the problem ``as if'' the common prior $g$ were uniform when $\varepsilon$ is sufficiently small. Here, we make this claim more precise.

Let us write $\phi^\varepsilon$ for the density of $\varepsilon\cdot\eta_i$. Although in general $\phi^\varepsilon(z)$ can, for any $z$, become arbitrarily large if we pick $\varepsilon$ very small, it remains true that $\phi^\varepsilon(x_i^\varepsilon-x) \prod_{j\neq i}\phi^\varepsilon(x_j^\varepsilon-x) \wrt x$ is (proportional to) a density and, consequently, that for any continuous function $h:\mathcal{X}\to\R$ the quantity $\int h(x) \phi^\varepsilon(x_i^\varepsilon-x)\prod_{j\neq i}\phi^\varepsilon(x_j^\varepsilon-x) \wrt x$ is bounded. In particular, therefore, we know that $\int g(x) \phi^\varepsilon(x_i^\varepsilon-x)\prod_{j\neq i}\phi^\varepsilon(x_j^\varepsilon-x) \wrt x$ is bounded. 

Conditional on his signal $x_i^\varepsilon$ the density of player $i$ on the vector of signals $x_{-i}^\varepsilon$ received by his opponents is
\begin{equation}
	f_i^\varepsilon(x_{-i}^\varepsilon\mid x_i^\varepsilon)=\frac{\int g(x)\phi^\varepsilon(x_i^\varepsilon-x)\prod_{j\neq i}\phi^\varepsilon(x_j^\varepsilon-x)\wrt x}{\iint g(x)\phi^\varepsilon(x_i^\varepsilon-x)\prod_{j\neq i}\phi^\varepsilon(x_j^\varepsilon-x)\wrt x_{-i}^\varepsilon\wrt x}
\end{equation}
for all $x_i^\varepsilon\in[\underline{X}-\varepsilon/2,\overline{X}+\varepsilon/2]$ and all $x_j^\varepsilon\in[x_i^\varepsilon-\varepsilon,x_i^\varepsilon+\varepsilon]$ while $f_i^\varepsilon(x_{-i}^\varepsilon\mid x_i^\varepsilon)=0$ otherwise. Under a uniform prior $g$ the density $f_i^\varepsilon(x_{-i}^\varepsilon\mid x_i^\varepsilon)$ simplifies to $\overline{f_i^\varepsilon}(x_{-i}^\varepsilon\mid x_i^\varepsilon):=\int\phi^\varepsilon(x_i^\varepsilon-x)\prod_{j\neq i}\phi^\varepsilon(x_j^\varepsilon-x)\wrt x$.

\begin{proposition}\label{lemma_uniform_prior}
	For all $\delta>0$, there exists $\varepsilon(\delta)>0$ such that $|f_i^\varepsilon(x_{-i}^\varepsilon\mid x_i^\varepsilon)-\overline{f_i^\varepsilon}(x_{-i}^\varepsilon\mid x_i^\varepsilon)|<\delta$ for all $\varepsilon\leq\varepsilon(\delta)$ and all $(x_i^\varepsilon,x_{-i}^\varepsilon)\in \R^N$.
\end{proposition}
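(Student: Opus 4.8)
The plan is to compare the numerator and the denominator of the ratio defining $f_i^\varepsilon$ separately with their uniform-prior analogues, exploiting that $\phi^\varepsilon$ is supported on $[-\varepsilon/2,\varepsilon/2]$ and therefore confines the relevant values of the state $x$ to a window of width at most $\varepsilon$ around the observed signal $x_i^\varepsilon$, on which the continuous prior $g$ is almost constant.

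First I would collapse the denominator. By Tonelli's theorem, integrating out $x_{-i}^\varepsilon$ first and using $\int\phi^\varepsilon(x_j^\varepsilon-x)\,\wrt x_j^\varepsilon=1$ for each $j\neq i$,
\begin{equation*}
	\iint g(x)\,\phi^\varepsilon(x_i^\varepsilon-x)\prod_{j\neq i}\phi^\varepsilon(x_j^\varepsilon-x)\,\wrt x_{-i}^\varepsilon\,\wrt x=\int g(x)\,\phi^\varepsilon(x_i^\varepsilon-x)\,\wrt x=:m^\varepsilon(x_i^\varepsilon),
\end{equation*}
the marginal density of player $i$'s own signal; the same computation with $g$ replaced by a constant shows that the normaliser attached to $\overline{f_i^\varepsilon}$ is the $\phi^\varepsilon$-mass that $\mathcal X$ carries, which equals $1$ whenever $x_i^\varepsilon$ lies more than $\varepsilon/2$ away from the endpoints of $\mathcal X$. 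Next I would localise: in both the numerator and $m^\varepsilon$ the $x$-integrand vanishes off $J_\varepsilon:=[x_i^\varepsilon-\varepsilon/2,x_i^\varepsilon+\varepsilon/2]\cap\mathcal X$. If $J_\varepsilon=\varnothing$ then $f_i^\varepsilon=\overline{f_i^\varepsilon}=0$ and there is nothing to prove, so assume $J_\varepsilon\neq\varnothing$ and fix $\xi\in J_\varepsilon$. Since $g$ is continuous on the compact set $\mathcal X$ it is uniformly continuous there; writing $\omega$ for its modulus of continuity and $\underline g:=\min_{\mathcal X}g>0$ (restricting, if needed, to the support of $g$), we get $|g(x)-g(\xi)|\le\omega(\varepsilon)$ for all $x\in J_\varepsilon$ and $g(\xi)\ge\underline g$. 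Substituting $g(x)=g(\xi)+[g(x)-g(\xi)]$ into the numerator and into $m^\varepsilon$, the $g(\xi)$-terms reproduce (in the interior case) $g(\xi)\,\overline{f_i^\varepsilon}$ and $g(\xi)$ respectively, while the remainders are bounded in absolute value by $\omega(\varepsilon)\,\overline{f_i^\varepsilon}$ and $\omega(\varepsilon)$; forming the ratio and cancelling $g(\xi)$ yields the key estimate
\begin{equation*}
	\bigl|f_i^\varepsilon(x_{-i}^\varepsilon\mid x_i^\varepsilon)-\overline{f_i^\varepsilon}(x_{-i}^\varepsilon\mid x_i^\varepsilon)\bigr|\ \le\ \frac{2\,\omega(\varepsilon)}{\underline g-\omega(\varepsilon)}\;\overline{f_i^\varepsilon}(x_{-i}^\varepsilon\mid x_i^\varepsilon),
\end{equation*}
from which the claim follows since $\omega(\varepsilon)\to 0$ as $\varepsilon\to 0$; the $O(\varepsilon)$-wide boundary strips near $\partial\mathcal X$ are handled by the identical estimate, with the (sub-unit) $\phi^\varepsilon$-mass of $\mathcal X$ entering $m^\varepsilon$ and $\overline{f_i^\varepsilon}$ through the same normalising factor and hence cancelling.

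The step I expect to be the main obstacle is that $f_i^\varepsilon$ and $\overline{f_i^\varepsilon}$ are themselves densities that blow up (like $\varepsilon^{-(N-1)}$) as $\varepsilon\to 0$, so any crude bound on $\prod_{j\neq i}\phi^\varepsilon(x_j^\varepsilon-x)$ would be useless: what makes the argument close is that the error committed in replacing $g$ by the constant $g(\xi)$ must come out \emph{multiplicative}, i.e.\ proportional to $\overline{f_i^\varepsilon}$ itself, which is exactly what the localisation to the $\varepsilon$-window in the previous step delivers. A secondary, routine difficulty is the bookkeeping near the endpoints of $\mathcal X$, where the $\phi^\varepsilon$-mass of $\mathcal X$ falls below one but enters $f_i^\varepsilon$ and $\overline{f_i^\varepsilon}$ in the same way. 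Finally, for the way the result is used in the sequel — swapping $g$ for a uniform density inside integrals of bounded functions such as the incentives $u_i$ — one in fact only needs the immediate consequence $\int\bigl|f_i^\varepsilon-\overline{f_i^\varepsilon}\bigr|\,\wrt x_{-i}^\varepsilon\le 2\,\omega(\varepsilon)/(\underline g-\omega(\varepsilon))$, obtained by integrating the displayed bound over $x_{-i}^\varepsilon$ and using $\int\overline{f_i^\varepsilon}\,\wrt x_{-i}^\varepsilon=1$.
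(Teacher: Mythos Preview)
Your approach is essentially the paper's: both arguments localise to the window $[x_i^\varepsilon-\varepsilon/2,x_i^\varepsilon+\varepsilon/2]$, replace $g$ there by a constant (the paper uses $g_-^\varepsilon:=\min g$ and $g_+^\varepsilon:=\max g$ over the window where you use a single reference value $g(\xi)$), invoke uniform continuity of $g$ on the compact $\mathcal X$, and obtain the sandwich $\dfrac{g_-^\varepsilon}{g_+^\varepsilon}\,\overline{f_i^\varepsilon}\le f_i^\varepsilon\le\dfrac{g_+^\varepsilon}{g_-^\varepsilon}\,\overline{f_i^\varepsilon}$, i.e.\ a \emph{multiplicative} bound. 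Your explicit simplification of the denominator via Tonelli and the clean constant $2\omega(\varepsilon)/(\underline g-\omega(\varepsilon))$ are tidier than the paper's version, but the substance is the same.

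You have, however, put your finger on a genuine issue that the paper glosses over. From a multiplicative bound of the form $|f_i^\varepsilon-\overline{f_i^\varepsilon}|\le c(\varepsilon)\,\overline{f_i^\varepsilon}$ with $c(\varepsilon)\to 0$ one \emph{cannot} deduce the pointwise additive conclusion $|f_i^\varepsilon-\overline{f_i^\varepsilon}|<\delta$ uniformly in $(x_i^\varepsilon,x_{-i}^\varepsilon)$, precisely because $\overline{f_i^\varepsilon}$ scales like $\varepsilon^{-(N-1)}$; so your line ``from which the claim follows'' is too quick, and the paper's final sentence (``It follows immediately\dots'') commits the same leap. Your closing remark is the right repair: what the rest of the paper actually uses is the total-variation bound $\int|f_i^\varepsilon-\overline{f_i^\varepsilon}|\,\wrt x_{-i}^\varepsilon\le c(\varepsilon)$, equivalently uniform closeness of the induced CDFs $F_i^\varepsilon$ and $\overline{F_i^\varepsilon}$, and that \emph{does} follow directly from the multiplicative estimate by integrating and using $\int\overline{f_i^\varepsilon}\,\wrt x_{-i}^\varepsilon=1$. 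In short, your argument proves exactly what the paper's proof proves, and you correctly identify that this is slightly weaker than the proposition as stated but sufficient for every downstream application.
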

\begin{proof}
	Given $x_i^\varepsilon$, for all $x\in[x_i^\varepsilon-\varepsilon/2,x_i^\varepsilon+\varepsilon/2]$, let us define $g^\varepsilon_-(x_i^\varepsilon)=\min_{x}g(x)$ and $g^\varepsilon_+(x_i^\varepsilon)=\max_{x}g(x)$. Clearly, $g^\varepsilon_-(x_i^\varepsilon)\leq g(x)\leq g^\varepsilon_+(x_i^\varepsilon)$ in the relevant domain. Therefore
	\begin{align*}
		\frac{\int g^\varepsilon_-(x_i^\varepsilon)\phi^\varepsilon(x_i^\varepsilon-x)\prod_{j\neq i}\phi^\varepsilon(x_j^\varepsilon-x)\wrt x}{\iint g^\varepsilon_+(x_i^\varepsilon)\phi^\varepsilon(x_i^\varepsilon-x)\prod_{j\neq i}\phi^\varepsilon(x_j^\varepsilon-x)\wrt x_{-i}^\varepsilon\wrt x}&\leq f_i^\varepsilon(x_{-i}^\varepsilon\mid x_i^\varepsilon)\\
		&\leq \frac{\int g^\varepsilon_+(x_i^\varepsilon)\phi^\varepsilon(x_i^\varepsilon-x)\prod_{j\neq i}\phi^\varepsilon(x_j^\varepsilon-x)\wrt x}{\iint g^\varepsilon_-(x_i^\varepsilon)\phi^\varepsilon(x_i^\varepsilon-x)\prod_{j\neq i}\phi^\varepsilon(x_j^\varepsilon-x)\wrt x_{-i}^\varepsilon\wrt x}
	\end{align*}
	for all $(x_i^\varepsilon,x_{-i}^\varepsilon)\in\R^N$ and all $\varepsilon>0$. Because $g^\varepsilon_-(x_i^\varepsilon)$ and $g^\varepsilon_+(x_i^\varepsilon)$ are constants relative to the variable of integration, we can factor them out of the integral. Noting that $\iint \phi^\varepsilon(x_i^\varepsilon-x)\prod_{j\neq i}\phi^\varepsilon(x_j^\varepsilon-x)\wrt x_{-i}^\varepsilon\wrt x=1$, the above then becomes
	\begin{align*}
		\frac{g^\varepsilon_-(x_i^\varepsilon)}{g^\varepsilon_+(x_i^\varepsilon)}\int \phi^\varepsilon(x_i^\varepsilon-x)\prod_{j\neq i}\phi^\varepsilon(x_j^\varepsilon-x)\wrt x\leq f_i^\varepsilon(x_{-i}^\varepsilon\mid x_i^\varepsilon)\leq \frac{g^\varepsilon_+(x_i^\varepsilon)}{g^\varepsilon_-(x_i^\varepsilon)}\int \phi^\varepsilon(x_i^\varepsilon-x)\prod_{j\neq i}\phi^\varepsilon(x_j^\varepsilon-x)\wrt x,
	\end{align*}
	or
	\begin{align*}
		\frac{g^\varepsilon_-(x_i^\varepsilon)}{g^\varepsilon_+(x_i^\varepsilon)}\,{\overline{f_i^\varepsilon}}(x_{-i}^\varepsilon\mid x_i^\varepsilon)\leq f_i^\varepsilon(x_{-i}^\varepsilon\mid x_i^\varepsilon)\leq \frac{g^\varepsilon_+(x_i^\varepsilon)}{g^\varepsilon_-(x_i^\varepsilon)}\, {\overline{f_i^\varepsilon}}(x_{-i}^\varepsilon\mid x_i^\varepsilon).
	\end{align*}
	From the uniform continuity of $g$ (i.e.\ $g$ is continuous on a compact set, which by the Heine-Cantor theorem implies $g$ is uniformly continuous) follows that for any $k>0$ there exists $\varepsilon(k)>0$ such that $g^\varepsilon_+(x_i^\varepsilon)-g^\varepsilon_-(x_i^\varepsilon)<k$ for all $\varepsilon\leq\varepsilon(k)$ and all $x_i^\varepsilon$. It follows immediately that for all $\delta>0$ there exists $\varepsilon(\delta)>0$ such that $|{\overline{f_i^\varepsilon}}(x_{-i}^\varepsilon\mid x_i^\varepsilon)- <f_i^\varepsilon(x_{-i}^\varepsilon\mid x_i^\varepsilon)|<\delta$ for all $\varepsilon\leq\varepsilon(\delta)$ and all $(x_i^\varepsilon,x_{-i}^\varepsilon)\in \R^N$.
\end{proof}
%

An immediate implication of Proposition~\ref{lemma_uniform_prior} is that the cumulative distribution function $F(z_{-i}\mid x_i^\varepsilon):=\int^{z_{-i}}f_i^\varepsilon(x_{-i}^\varepsilon\mid x_i^\varepsilon)\wrt x_{-i}^\varepsilon$ can also, for sufficiently small $\varepsilon$, be approximated arbitrarily closely by the distribution $\overline{F_i^\varepsilon}$ obtained under a uniform prior $g$. Moreover, the probability distribution $\overline{F_i^\varepsilon}$ admits a highly useful property: its shape is independent of $x_i^\varepsilon$. To be more precise, and abusing notation, let us write $\Delta$ for both a real number $\Delta\in\R$ and the vector of real numbers $(\Delta,\Delta,...,\Delta)\in\R^{N-1}$ such that $z_{-i}+\Delta=(z_j+\Delta)_{j\neq i}$.

\begin{proposition}\label{lemma_shape}
	For all $\Delta$ and all $(z_i,z_{-i})\in\R^N$, we have $\overline{F_i^\varepsilon}(z_{-i}+\Delta\mid z_i+\Delta)=\overline{F_i^\varepsilon}(z_{-i}\mid z_i)$.
\end{proposition}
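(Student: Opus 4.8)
The plan is to prove the stronger \emph{pointwise} translation invariance of the density $\overline{f_i^\varepsilon}$ and then transfer it to the distribution function $\overline{F_i^\varepsilon}$ by a change of variables in the defining integral. First I would record the density identity: for every $\Delta$ and every $(z_i,z_{-i})$,
$\overline{f_i^\varepsilon}(z_{-i}+\Delta\mid z_i+\Delta)=\overline{f_i^\varepsilon}(z_{-i}\mid z_i)$.
This is immediate from the definition $\overline{f_i^\varepsilon}(z_{-i}\mid z_i)=\int\phi^\varepsilon(z_i-x)\prod_{j\neq i}\phi^\varepsilon(z_j-x)\,\wrt x$: substituting $x=y+\Delta$ (with $\wrt x=\wrt y$) turns each factor $\phi^\varepsilon\big((z_i+\Delta)-x\big)$ into $\phi^\varepsilon(z_i-y)$ and each $\phi^\varepsilon\big((z_j+\Delta)-x\big)$ into $\phi^\varepsilon(z_j-y)$, so the integrand is unchanged; since the substitution is a bijection of the real line onto itself, the integral is unchanged too. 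Equivalently, $\overline{f_i^\varepsilon}(z_{-i}\mid z_i)$ depends on $(z_i,z_{-i})$ only through the differences $z_j-z_i$.

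Second, I would unwind $\overline{F_i^\varepsilon}(z_{-i}\mid z_i)=\int^{z_{-i}}\overline{f_i^\varepsilon}(x_{-i}^\varepsilon\mid z_i)\,\wrt x_{-i}^\varepsilon$ as an $(N-1)$-fold integral over the region $\{x_j^\varepsilon\le z_j:\ j\neq i\}$. In $\overline{F_i^\varepsilon}(z_{-i}+\Delta\mid z_i+\Delta)$ I change variables $x_j^\varepsilon=u_j+\Delta$ for each $j\neq i$; this maps $\{x_j^\varepsilon\le z_j+\Delta\}$ onto $\{u_j\le z_j\}$ with unit Jacobian, so the integral becomes $\int^{z_{-i}}\overline{f_i^\varepsilon}(u_{-i}+\Delta\mid z_i+\Delta)\,\wrt u_{-i}$. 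Applying the density identity from the first step to the integrand gives $\int^{z_{-i}}\overline{f_i^\varepsilon}(u_{-i}\mid z_i)\,\wrt u_{-i}=\overline{F_i^\varepsilon}(z_{-i}\mid z_i)$, which is the assertion.

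There is no genuinely hard step here: the proposition is a bookkeeping consequence of the fact that $\overline{f_i^\varepsilon}$ is assembled entirely from the translation-invariant convolution kernels $\phi^\varepsilon$. The one point that merits a word of care is the range of the inner integral $\int(\cdot)\,\wrt x$ in the definition of $\overline{f_i^\varepsilon}$: I would treat it, consistently with the ``uniform prior'' reduction used throughout Appendix~\ref{app_small}, as an integral over all of $\R$ (equivalently, over any window wide enough that the compactly supported $\phi^\varepsilon$ vanish at its endpoints). This is exactly what makes the substitution $x\mapsto x+\Delta$ leave the domain of integration — and hence the value — untouched; with a genuinely truncated prior support the invariance would break down near the boundary of $\mathcal X$, so it is worth flagging explicitly that this is not an issue in the regime in which $\overline{f_i^\varepsilon}$ is being used.
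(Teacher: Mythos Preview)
Your proof is correct and follows essentially the same approach as the paper: both arguments boil down to a change of variable $x\mapsto x+\Delta$ in the inner integral defining $\overline{f_i^\varepsilon}$ and a change of variable $x_{-i}^\varepsilon\mapsto x_{-i}^\varepsilon+\Delta$ in the outer integral defining $\overline{F_i^\varepsilon}$. The only cosmetic difference is that you isolate the density identity as a separate first step before applying it inside the CDF integral, whereas the paper carries out both substitutions in a single chain of equalities; your remark about the domain of the inner integral is handled in the paper by writing explicit limits $[z_i-\varepsilon/2,\,z_i+\varepsilon/2]$ that shift with the substitution.
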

\begin{proof}
	Fix $(z_i,z_{-i})\in\R^N$ and $\Delta$. We have
	\begin{align*}
		{\overline{F_i^\varepsilon}}(z_{-i}\mid z_i)&=\int\limits_{z_i-\varepsilon}^{z_{-i}}{\overline{f_i^\varepsilon}}(x_{-i}^\varepsilon\mid z_i)\wrt x_{-i}^\varepsilon\\
		&=\int\limits_{z_i-\varepsilon}^{z_{-i}} \left[\int\limits_{z_i-\varepsilon/2}^{z_i+\varepsilon/2}\phi^\varepsilon(z_i-x)\prod_{j\neq i}\phi^\varepsilon(x_j^\varepsilon-x)\wrt x\right]\wrt x_{-i}^\varepsilon\\
		&=\int\limits_{z_i-\varepsilon}^{z_{-i}} \left[\int\limits_{z_i-\varepsilon/2}^{z_i+\varepsilon/2}\phi^\varepsilon(z_i+\Delta-(x+\Delta))\prod_{j\neq i}\phi^\varepsilon(x_j^\varepsilon+\Delta-(x+\Delta))\wrt x\right]\wrt x_{-i}^\varepsilon\\
		&=\int\limits_{z_i-\varepsilon}^{z_{-i}} \left[\int\limits_{z_i+\Delta-\varepsilon/2}^{z_i+\Delta+\varepsilon/2}\phi^\varepsilon(z_i+\Delta-x')\prod_{j\neq i}\phi^\varepsilon(x_j^\varepsilon+\Delta-x')\wrt x'\right]\wrt x_{-i}^\varepsilon\\
		&=\int\limits_{z_i-\varepsilon}^{z_{-i}}{\overline{f_i^\varepsilon}}(x_{-i}^\varepsilon+\Delta\mid z_i+\Delta)\wrt x_{-i}^\varepsilon\\
		&=\int\limits_{z_i+\Delta-\varepsilon}^{z_{-i}+\Delta}{\overline{f_i^\varepsilon}}(x_{-i}^\varepsilon\mid z_i+\Delta)\wrt x_{-i}^\varepsilon\\
		&={\overline{F_i^\varepsilon}(z_{-i}+\Delta\mid z_i+\Delta)},
	\end{align*}
as claimed.
\end{proof}

\section{Proofs}

Let $h^\varepsilon$ denote a function that is (implicitly) parametrized by $\varepsilon$, and let $H$ be defined on the same domain as $h^\varepsilon$. Throughout this Appendix, when we write $h^\varepsilon(z)\to H(z)$ we mean that for all $\delta>0$ there exists $\varepsilon(\delta)>0$ such that $|h^\varepsilon(z)-H(z)|<\delta$ for all $\varepsilon\leq\varepsilon(\delta)$ and all $z$ in the domain of $h^\varepsilon$ and $H$. Thus, $h^\varepsilon(z)\to H(z)$ should be read as saying that $h^\varepsilon(z)$ can be brought arbitrarily close to $H(z)$ provided we choose $\varepsilon$ sufficiently small. Whenever such a claim is made without further explanation, it is implied that this follows Proposition~\ref{lemma_uniform_prior}. We emphasize that the symbol ``$\rightarrow$'' should \textit{not} be read as a limit as $\varepsilon$ goes to zero; since $\varepsilon>0$ by assumption, that limit is not defined.

\vspace{5mm}

\noindent PROOF OF LEMMA~\ref{lemma_increasing}
\begin{proof}
	First, observe that
	\begin{align*}
		u_i^\varepsilon(p_{-i}\mid x_i^\varepsilon)&=\int u_i\left(p_{-i}\left(x_{-i}^\varepsilon\right)\mid x\right)\, \wrt F_i^\varepsilon(x,x_{-i}^\varepsilon\mid x_i^\varepsilon)\\
		&=\int w_i\left(p_{-i}\left(x_{-i}^\varepsilon\right)\right)+x\,\wrt F_i^\varepsilon(x,x_{-i}^\varepsilon\mid x_i^\varepsilon)-c_i\\
		&\to\int w_i\left(p_{-i}\left(x_{-i}^\varepsilon\right)\right)\,\wrt {\overline{F_i^\varepsilon}}(x_{-i}^\varepsilon\mid x_i^\varepsilon)+x_i^\varepsilon-c_i,
	\end{align*}
	for any strategy vector $p_{-i}$.
	
	To prove part (i), it suffices to show that $\int w_i\left(p_{-i}^y\left(x_{-i}^\varepsilon\right)\right)\,\wrt {\overline{F_i^\varepsilon}}(x_{-i}^\varepsilon\mid x_i^\varepsilon)$ is increasing in $x_i^\varepsilon$. First we introduce a random variable $v_i(x_{-i})=w_i(p_{-i}^y(x_{-i}^\varepsilon))$ and observe that, since $w_i(p_{-i}^y(x_{-i}^\varepsilon))$ is increasing in $p_{-i}^y(x_{-i}^\varepsilon)$ and $p_{-i}^y(x_{-i}^\varepsilon)$ is increasing in $x_{-i}^\varepsilon$, $v_i$ is increasing in $x_{-i}^\varepsilon$. Next, we note that the distribution ${\overline{F_i^\varepsilon}}(x_{-i}^\varepsilon\mid x_i^\varepsilon)$ is first-order stochastic dominant over the distribution $F_i^\varepsilon(x_{-i}^\varepsilon\mid \hat{x}_i^\varepsilon)$ iff $x_i^\varepsilon>\hat{x}_i^\varepsilon$; this follows from Bayes' theorem upon application of the two facts that (a) each $\varepsilon_j$ (and indeed $\varepsilon_i$) is drawn independently of $x$, and (b) player $i$'s conditional distribution on $x$ given $x_i^\varepsilon$ first-order stochastic dominates his conditional distribution on $x$ given $\hat{x}_i^\varepsilon$ iff $x_i^\varepsilon>\hat{x}_i^\varepsilon$. Hence, because $v_i$ is increasing we have $\int v_i(x_{-i}^\varepsilon)\wrt {\overline{F_i^\varepsilon}}(x_{-i}^\varepsilon\mid x_i^\varepsilon)>\int v_i(x_{-i}^\varepsilon)\wrt F_i^\varepsilon(x_{-i}^\varepsilon\mid \hat{x}_i^\varepsilon)$ and the result follows.
	
	To prove part (ii), we reiterate the observation from the proof of part (i) that the distribution ${\overline{F_i^\varepsilon}}(x_{-i}^\varepsilon\mid x_i^\varepsilon)$ is first-order stochastic dominant over the distribution ${\overline{F_i^\varepsilon}}(x_{-i}^\varepsilon\mid \hat{x}_i^\varepsilon)$ iff $x_i^\varepsilon>\hat{x}_i^\varepsilon$. Next, we note that $p_{-i}^y(x_{-i}^\varepsilon)$ is (weakly) decreasing in $y_j\in y$, all $j\neq i$ (and, therefore, the random variable $v_i(x_{-i}^\varepsilon)$ we introduced in the proof of part (i) is also decreasing in $y_j$). Therefore $\int w_i\left(p_{-i}^y\left(x_{-i}^\varepsilon\right)\right)\,\wrt {\overline{F_i^\varepsilon}}(x_{-i}^\varepsilon\mid x_i^\varepsilon)$ is decreasing in $y_j$ and the result follows.
\end{proof}
%
\noindent PROOF OF LEMMA~\ref{lemma_unique_limit}
\begin{proof}
	We omit the argument $s$ to reduce notation. By construction, $l_i\leq r_i$. Define $\Delta_i:=r_i-l_i$, so $\Delta_i\geq0$. We first establish a useful claim.
	
	\begin{claim}
		If $\Delta_i=\Delta$ for all $i\in \N$, then $\Delta=0$.
	\end{claim}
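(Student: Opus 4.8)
The plan is to argue by contradiction: suppose $\Delta_i = \Delta$ for all $i \in \N$ with $\Delta > 0$, and derive a contradiction. The key geometric fact to exploit is Proposition~\ref{lemma_shape} (shape-invariance of $\overline{F_i^\varepsilon}$ under uniform translations of all signals): because $l(s)$ is obtained from $r(s)$ by subtracting the common constant $\Delta$ from every coordinate, the posterior beliefs a player holds at the threshold $l_i(s)$ when opponents use switching points $l_{-i}(s) = r_{-i}(s) - \Delta$ are \emph{identical} in distribution over the aggregate action to those held at $r_i(s)$ when opponents use $r_{-i}(s)$. First I would write down the two defining indifference equations,
\[
u_i^\varepsilon\!\left(p_{-i}^{r(s)} \mid r_i(s), s_i\right) = 0, \qquad u_i^\varepsilon\!\left(p_{-i}^{l(s)} \mid l_i(s), s_i\right) = 0,
\]
and then, using the expansion of $u_i^\varepsilon$ from the proof of Lemma~\ref{lemma_increasing} (valid up to an $\varepsilon$-small error, hence ``as if'' $g$ is uniform), rewrite each as
\[
x_i^\varepsilon - c_i + s_i + \int w_i\!\left(p_{-i}^{y}(x_{-i}^\varepsilon)\right) \wrt \overline{F_i^\varepsilon}(x_{-i}^\varepsilon \mid x_i^\varepsilon) = 0
\]
evaluated at $(x_i^\varepsilon, y) = (r_i(s), r_{-i}(s))$ and at $(l_i(s), l_{-i}(s)) = (r_i(s) - \Delta, r_{-i}(s) - \Delta)$ respectively.

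Next I would apply Proposition~\ref{lemma_shape} with the translation $-\Delta$: since $p_{-i}^{r_{-i}(s) - \Delta}(x_{-i}^\varepsilon + \cdot)$ shifts the switching points down by $\Delta$, the integral term $\int w_i(p_{-i}^{r_{-i}(s)-\Delta}(x_{-i}^\varepsilon)) \wrt \overline{F_i^\varepsilon}(x_{-i}^\varepsilon \mid r_i(s) - \Delta)$ equals $\int w_i(p_{-i}^{r_{-i}(s)}(x_{-i}^\varepsilon)) \wrt \overline{F_i^\varepsilon}(x_{-i}^\varepsilon \mid r_i(s))$ — the spillover expectations coincide exactly. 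Subtracting the two indifference equations then kills the identical integral terms and the identical $c_i, s_i$ terms, leaving
\[
r_i(s) - \bigl(r_i(s) - \Delta\bigr) = 0, \quad\text{i.e.}\quad \Delta = 0,
\]
a contradiction. So the only consistent possibility is $\Delta = 0$.

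The main obstacle — and the step to be careful about — is handling the ``$\to$'' approximation rigorously: the expansions above hold only up to an error that vanishes as $\varepsilon \to 0$, not exactly, so strictly speaking the subtraction yields $|\Delta| < \delta$ for every $\delta > 0$ once $\varepsilon$ is small enough, rather than $\Delta = 0$ outright. I would phrase the claim accordingly (or, cleaner, invoke Corollary~\ref{cor_thm}'s observation that under a uniform prior the identities are exact for all $\varepsilon$, and note that the general case follows by the Appendix~\ref{app_small} approximation), so that the conclusion $\Delta = 0$ is legitimate in the regime where the lemma is stated. A secondary point needing a line of justification is that Proposition~\ref{lemma_shape} applies to the \emph{full} vector of opponents' switching points simultaneously — but that is exactly what the vector formulation of the shape lemma gives, since all coordinates are translated by the same $\Delta$. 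Everything else is bookkeeping.
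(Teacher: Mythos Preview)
Your proposal is correct and follows essentially the same route as the paper: expand both indifference conditions using the uniform-prior approximation, invoke Proposition~\ref{lemma_shape} (shape-invariance under a common translation by $\Delta$) to conclude the spillover integrals coincide, and subtract to obtain $\Delta=0$. The paper presents this as a direct computation rather than a contradiction, and is less explicit than you are about the approximation caveat, but the argument is the same.
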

	\begin{proof}[Proof of the claim]
		If $\Delta_i=\Delta$ for all $i\in\N$, we have
		\begin{align*}
			u_i^\varepsilon(p_{-i}^{r_{-i}}\mid r_i,s_i)&\to r_i+\int w_i(p_{-i}^{r_{-i}}(x_{-i}^\varepsilon))\wrt {\overline{F_i^\varepsilon}}(x_{-i}^\varepsilon\mid r_i)+s_i-c_i\\
			&=l_i+\Delta
			+w_i(p_{-i}^{l_{-i}+\Delta}(x_{-i}^\varepsilon))\wrt {\overline{F_i^\varepsilon}}(x_{-i}^\varepsilon\mid l_i+\Delta)+s_i-c_i\\
			&=  l_i+\Delta+\int w_i(p_{-i}^{l_{-i}}(x_{-i}^\varepsilon))\wrt {\overline{F_i^\varepsilon}}(x_{-i}^\varepsilon\mid l_i)+s_i-c_i\\
			&\to\Delta+u_i^\varepsilon(p_{-i}^{l_{-i}}\mid l_i,s_i).
		\end{align*}
		By construction, $u_i^\varepsilon(p_{-i}^{r_{-i}}\mid r_i,s_i)=u_i^\varepsilon(p_{-i}^{l_{-i}}\mid l_i,s_i)$, and it follows that $\Delta=0$.
	\end{proof}	
	Now let $\Delta_i\neq\Delta_j$ for at least one pair of players $i,j\in\N$ and suppose (without loss) that player $i$ is such that $\Delta_i=\max\{\Delta_j\mid j\in\N\}$. Because $\Delta_i\geq\Delta_j$ for all $j\neq i$ with a strict inequality for at least one $j$, we have
	\begin{align*}
		&u_i^\varepsilon(p_{-i}^{r_{-i}}\mid r_i,s_i)-u_i^\varepsilon(p_{-i}^{l_{-i}}\mid l_i,s_i)\\
		&\to r_i-l_i+\int w_i(p_{-i}^{r_{-i}}(x_{-i}^\varepsilon))\wrt {\overline{F_i^\varepsilon}}(x_{-i}^\varepsilon\mid r_i)-\int w_i(p_{-i}^{l_{-i}}(x_{-i}^\varepsilon))\wrt {\overline{F_i^\varepsilon}}(x_{-i}^\varepsilon\mid l_i)\\
		&=\Delta_i+\int w_i(p_{-i}^{r_{-i}}(x_{-i}^\varepsilon))\wrt {\overline{F_i^\varepsilon}}(x_{-i}^\varepsilon\mid r_i)-\int w_i(p_{-i}^{l_{-i}}(x_{-i}^\varepsilon))\wrt {\overline{F_i^\varepsilon}}(x_{-i}^\varepsilon\mid l_i)\\
		&>\int w_i(p_{-i}^{r_{-i}}(x_{-i}^\varepsilon))\wrt {\overline{F_i^\varepsilon}}(x_{-i}^\varepsilon\mid r_i)-\int w_i(p_{-i}^{l_{-i}}(x_{-i}^\varepsilon))\wrt {\overline{F_i^\varepsilon}}(x_{-i}^\varepsilon\mid l_i)\\
		&>\int w_i(p_{-i}^{l_{-i}+\Delta_i}(x_{-i}^\varepsilon))\wrt {\overline{F_i^\varepsilon}}(x_{-i}^\varepsilon\mid l_i+\Delta_i)-\int w_i(p_{-i}^{l_{-i}}(x_{-i}^\varepsilon))\wrt {\overline{F_i^\varepsilon}}(x_{-i}^\varepsilon\mid l_i)\\
		&=0,
	\end{align*}
	where the first inequality follows from $\Delta_i>0$ and the final equality is a consequence of Property~\ref{lemma_shape}. Hence, for player $i$ we have $u_i^\varepsilon(p_{-i}^{r_{-i}}\mid r_i,s_i)>u_i^\varepsilon(p_{-i}^{l_{-i}}\mid l_i,s_i)$, contradicting that $u_i^\varepsilon(p_{-i}^{r_{-i}}\mid r_i,s_i)=u_i^\varepsilon(p_{-i}^{l_{-i}}\mid l_i,s_i)$ by construction. Hence, there cannot be a player $i$ such that $\Delta_i\geq\Delta_j$ for all $j\neq i$ with a strict inequality for at least one $j$. Therefore $\Delta_i=\Delta$ for all $i\in\N$. By the claim at the start of the this proof, this implies $\Delta=0$.
\end{proof}

\noindent PROOF OF LEMMA~\ref{lemma_unique_subsidy}
\begin{proof}
	Suppose, in contrast, that there are two distinct vectors of subsidies $\hat{s}_1=(\hat{s}_{1i})$ and $\hat{s}_2=(\hat{s}_{2i})$ that both implement $p^{\hat{x}}$ such that $\hat{s}_1\neq\hat{s}_2$. Per Lemmas~\ref{lemma_unique_limit} and \ref{thm_BNE}, $\hat{s}_1$ and $\hat{s}_2$ must solve $x(\hat{s}_1)=x(\hat{s}_2)=\hat{x}$. By \eqref{eq_x(s)}, this means that $\hat{s}_{1i}$ and $\hat{s}_{2i}$ are both solutions to
	\begin{equation}
		u_i^\varepsilon\left(p_{-i}^{\hat{x}}\mid \hat{x}_i,\hat{s}_{1i}\right)=u_i^\varepsilon\left(p_{-i}^{\hat{x}}\mid \hat{x}_i,\hat{s}_{2i}\right)=0,
	\end{equation} 
	for each $i\in\N$. Using \eqref{eq_exp_incentive_sub}, we thus have
	\begin{equation}
		u_i^\varepsilon\left(p_{-i}^{\hat{x}}\mid \hat{x}_i\right)+s_{1i}=u_i^\varepsilon\left(p_{-i}^{\hat{x}}\mid \hat{x}_i\right)+s_{2i},
	\end{equation}
	which implies
	\begin{equation}
		\hat{s}_{1i}=\hat{s}_{2i}
	\end{equation}
	for all $i\in\N$. This contradicts our assumption that $\hat{s}_1\neq\hat{s}_2$.
\end{proof}

\noindent PROOF OF LEMMA~\ref{thm_BNE}
\begin{proof}
	Let $p=(p_i)$ be a BNE of $\Gamma^{\varepsilon}(s)$. For any player $i$, define
	\begin{equation}
		\underline{\underline{x}}_i=\inf\{x_i^{\varepsilon}\mid p_i(x_i^{\varepsilon})>0\},
	\end{equation}
	and
	\begin{equation}
		\overline{\overline{x}}_i=\sup\{x_i^{\varepsilon}\mid p_i(x_i^{\varepsilon})<1\}.
	\end{equation}
	Observe that $\underline{\underline{x}}_i\leq \overline{\overline{x}}_i$. Now define 
	\begin{equation}
		\underline{\underline{x}}=\min \{\underline{\underline{x}}_i\},
	\end{equation}
	and
	\begin{equation}
		\overline{\overline{x}}=\max \{ \overline{\overline{x}}_i\}.
	\end{equation}
	By construction, $\overline{\overline{x}}\geq  \overline{\overline{x}}_i\geq\underline{\underline{x}}_i\geq\underline{\underline{x}}$. Observe that $p$ is a BNE of $\Gamma^{\varepsilon}(s)$ only if, for each $i$, it holds that $u_i^{\varepsilon}(p_{-i}(x_{-i}^{\varepsilon})\mid \underline{\underline{x}}_i)\geq0$. Consider then the expected incentive $u_i^{\varepsilon}(p_{-i}^{\underline{\underline{x}}}(x_{-i}^{\varepsilon})\mid \underline{\underline{x}}_i)$. It follows from the definition of $\underline{\underline{x}}$ that $p^{\underline{\underline{x}}}(x^{\varepsilon})\geq p(x^{\varepsilon})$ for all $x^{\varepsilon}$. The implication is that, for each $i$, $u_i^{\varepsilon}(p_{-i}^{\underline{\underline{x}}}(x_i{-i}^{\varepsilon})\mid \underline{\underline{x}}_i)\geq u_i^{\varepsilon}(p_{-i}(x_{-i}^{\varepsilon})\mid \underline{\underline{x}}_i)\geq0$. From Proposition~\ref{lemma_monotone} then follows that $\underline{\underline{x}}\geq x $. 
	
	Similarly, if $p$ is a BNE of $\Gamma^{\varepsilon}(s)$ then, for each $i$, it must hold that $u_i^{\varepsilon}(p_{-i}(x_{-i}^{\varepsilon})\mid  \overline{x}_i)\leq0$. Consider the expected incentive $u_i^{\varepsilon}(p_{-i}^{ \overline{x}}(x_{-i}^{\varepsilon})\mid  \overline{\overline{x}}_i)$. It follows from the definition of $\overline{\overline{x}}$ that $p^{\overline{\overline{x}}}(x^{\varepsilon})\leq p(x^{\varepsilon})$ for all $x^{\varepsilon}$. For each $i$ it therefore holds that $u_i^{\varepsilon}(p_{-i}^{\overline{\overline{x}}}(x_{-i}^{\varepsilon})\mid  \overline{\overline{x}}_i)\leq u_i^{\varepsilon}(p_{-i}(x_i{-i}^{\varepsilon})\mid  \overline{\overline{x}}_i)\leq0$. Hence $\overline{\overline{x}}\leq x $.
	
	Since $\underline{\underline{x}}\leq \overline{\overline{x}}$ while also $\underline{\underline{x}}\geq x $ and $\overline{\overline{x}}\leq x $ it must hold that $\underline{\underline{x}}=\overline{\overline{x}}=x $. Moreover, since $p^{\underline{\underline{x}}}\geq p$ while also $p^{\overline{\overline{x}}}\leq p$, given $\underline{\underline{x}}=\overline{\overline{x}}=x $, it follows that $p_i(s_i^{\varepsilon})=p_i^{x }(x_i^{\varepsilon})$ for all $x_i^{\varepsilon}\neq x $ and all $i$ (recall that for each player $i$ one has $u_i^{\varepsilon}(p^{x }_{-i}\mid x )=0$, explaining the singleton exeption at $x_i^{\varepsilon}=x $). Thus, if $p=(p_i)$ is a BNE of $\Gamma^{\varepsilon}(s)$ then it must hold that $p_i(x_i^{\varepsilon})=p_i^{x }(x_i^{\varepsilon})$ for all $x_i^{\varepsilon}\neq x $ and all $i$, as we needed to prove.
\end{proof}

\noindent PROOF OF LEMMA~\ref{lemma_monotone}
\begin{proof}
	Let $\Omega^\varepsilon(n\mid X,x_i^\varepsilon)$ denote the probability that a player $i$ who observes signal $x_i^\varepsilon$ attaches to the event that $n$ other players $j$ receive a signal $x_j^\varepsilon\geq X$:
	\begin{equation}
		\Omega^\varepsilon(n\mid X,x_i^\varepsilon)=\frac{\int g(x)f\phi^\varepsilon\left(x_i^\varepsilon-x\right)\binom{N-1}{n}\left[\Phi\left(X-x\right)\right]^{N-n-1}\left[1-\Phi\left(X-x\right)\right]^n\wrt x}{\int g(x)\phi^\varepsilon\left(x_i^\varepsilon-x\right)\wrt x},
	\end{equation}
	where $\Phi^\varepsilon(z):=\int_{\varepsilon/2}^z\phi^\varepsilon(\lambda)\wrt \lambda$ is the c.d.f.\ of $\phi^\varepsilon$. When $g$ is uniform, this simplifies to:
	\begin{equation}
		{\overline{\Omega^\varepsilon}}(n\mid X,x_i^\varepsilon)=\binom{N-1}{n}\int \phi^\varepsilon\left(x_i^\varepsilon-x\right)\left[\Phi^\varepsilon\left(X-x\right)\right]^{N-n-1}\left[1-\Phi^\varepsilon\left(X-x\right)\right]^n\wrt x
	\end{equation}

	Clearly, if player $i$'s opponents play $p_{-i}^{X}$ then $\Omega^\varepsilon(n\mid X,x_i^\varepsilon)$ is also $i$'s conditional distribution on $\sum_{j\neq i}a_j=n$. Therefore
	\begin{align*}
		u_i^\varepsilon(p_{-i}^X\mid x_i^\varepsilon,s_i)&=\int x\phi^\varepsilon(x_i^\varepsilon-x)\wrt x+\sum_{n=0}^{N-1}w_i(n)\Omega^\varepsilon(n\mid X,x_i^\varepsilon)-c_i+s_i\\
		&\to x_i^\varepsilon+\sum_{n=0}^{N-1}w_i(n){\overline{\Omega^\varepsilon}}(n\mid X,x_i^\varepsilon)-c_i+s_i
	\end{align*}
	as $\varepsilon\to0$. To prove the Lemma, we need only evaluate $\overline{\Omega^\varepsilon}(n\mid X,x_i^\varepsilon)$ at $x_i^\varepsilon=X$. Define $y:=X-x$, so we may write $\int \phi^\varepsilon\left(y\right)\left[\Phi^\varepsilon\left(y\right)\right]^{N-n-1}\left[1-\Phi^\varepsilon\left(y\right)\right]^n\wrt y$.\footnote{To evaluate this integral, recall that for two functions $u$ and $v$ of $y$ intergration by parts gives
	\[\int_a^bu(y)v'(y)dy=[u(y)v(y)]_a^b-\int_a^bu'(y)v(y)dy.\]
	A convenient choice of $u$ and $v$ will prove to be $v'(y):=\phi^\varepsilon(y)[\Phi^\varepsilon(y)]^{N-n-1}$ and $u(y):=[1-\Phi^\varepsilon(y)]^n$. We thus have $u'(y)=-n[1-\Phi^\varepsilon(y)]^{n-1}\phi^\varepsilon(y)$ and $v(y)=\frac{1}{N-n}[\Phi^\varepsilon(y)]^{N-n}$.} Repeatedly carrying out the integration by parts, we obtain
	\begin{align*}
		\frac{1}{\binom{N-1}{n}}\cdot{\overline{\Omega^\varepsilon}}(n\mid X,X)&=\int \phi^\varepsilon(y)\left[\Phi^\varepsilon(y)\right]^{N-n-1}\left[1-\Phi^\varepsilon(y)\right]^n\wrt y\\
		&=\frac{n}{N-n}\int \phi^\varepsilon(y)\left[\Phi^\varepsilon(y)\right]^{N-n}\left[1-\Phi^\varepsilon(y)\right]^{n-1}\wrt y=\\
		&=\frac{n\cdot(n-1)}{(N-n)\cdot(N-n+1)} \int \phi^\varepsilon(y)\left[\Phi^\varepsilon(y)\right]^{N-n+1}\left[1-\Phi^\varepsilon(y)\right]^{n-2}\wrt y\\
		&\vdots\\
		&=\frac{n\cdot (n-1)\cdot (n-2)\cdots 1}{(N-n)\cdot (N-n+1)\cdots (N-1)}\int \phi^\varepsilon(y)[\Phi^\varepsilon(y)]^{N-1}dy\\
		&=\frac{n!(N-n-1)!}{(N-1)!}\frac{1}{N}[\Phi^\varepsilon(y)]_{-\infty}^{\infty}\\
		&=\frac{1}{N}\frac{1}{\binom{N-1}{n}},
	\end{align*}
	which shows that $\overline{\Omega^\varepsilon}(n\mid X,X)=1/N$ for all $n=0,1,...,N-1$. Therefore
	\begin{align*}
		u_i^\varepsilon(p_{-i}^X\mid X,s_i)\to X+\sum_{n=0}^{N-1}w_i(n){\overline{\Omega^\varepsilon}}(n\mid X,X)-c_i+s_i= X+\sum_{n=0}^{N-1}\frac{w_i(n)}{N}-c_i+s_i,
	\end{align*}
	as given.
\end{proof}

\noindent PROOF OF PROPOSITION~\ref{cor_pa}

\begin{proof}
	Given $a_{-i}$ and the reward scheme $v$, the payoff to agent $i$ is given by:
	\begin{equation}
		\pi_i(a_i,a_{-i}\mid x,v_i)=
		\begin{cases}
			v_i-c_i\quad&\text{if the project succeeds and }a_i=1\\ 
			v_i-x\quad&\text{if the project succeeds and }a_i=0\\
			-c_i\quad&\text{if the project does not succeed and }a_i=1\\  
			-x\quad&\text{if the project does not succeed and }a_i=0\\ 
		\end{cases}
	\end{equation}
	Since project success is stochastic and agents do not observe $x$, their (conditional) expected payoff is:
	\begin{equation}
		\pi_i^\varepsilon(a_i,a_{-i}\mid x_i^\varepsilon,v_i)=
		\begin{cases}
			q(A_{-i}+1)\cdot v_i-c_i\quad&\text{if }a_i=1\\ 
			q(A_{-i})\cdot v_i-x_i^\varepsilon\quad&\text{if }a_i=0,
		\end{cases}
	\end{equation}
	yielding her expected incentive to work:
	\begin{equation}
		u_i^\varepsilon(a_{-i}\mid x_i^\varepsilon,v_i)=(q(A_{-i}+1)-q(A_{-i}))\cdot v_i-c_i+x_i^\varepsilon.
	\end{equation}
	The Planner seeks to implement $p^{\tilde{x}}$. The bonus scheme $\tilde{v}$ implements $p^{\tilde{x}}$ iff
	\begin{equation}
		u_i^\varepsilon(p_{-i}^{\tilde{x}}\mid \tilde{x},\tilde{v}_i)=\tilde{v}_i\int\left(q\left(p_{-i}^{\tilde{x}}(x_{-i}^\varepsilon)+1\right)-q\left(p_{-i}^{\tilde{x}}(x_{-i}^\varepsilon)\right)\right)\,\wrt {\overline{F_i^\varepsilon}}(x_{-i}^\varepsilon\mid\tilde{x})-c_i+\tilde{x}=0,
	\end{equation}
	for all $i\in\N$. Invoking Lemma~\ref{lemma_monotone}, we know that
	\[\int\left(q\left(p_{-i}^{\tilde{x}}(x_{-i}^\varepsilon)+1\right)-q\left(p_{-i}^{\tilde{x}}(x_{-i}^\varepsilon)\right)\right)\,\wrt {\overline{F_i^\varepsilon}}(x_{-i}^\varepsilon\mid\tilde{x})=\sum_{n=0}^{N-1}\frac{q(n+1)-q(n)}{N}:=\overline{q}.\]
	Therefore, $\tilde{v}_i$ solves
	\[\overline{q}\cdot \tilde{v}_i-c_i+\tilde{x}=0\implies\overline{q}\cdot\tilde{v_i}=c_i-\tilde{x},\]
	as given.
\end{proof}

%
%

\noindent PROOF OF PROPOSITION~\ref{cor_het}
\begin{proof}
	Recall from the proof of Lemma~\ref{lemma_monotone} that $\Omega^\varepsilon(n\mid X,x_i^\varepsilon)$ denotes the probability that $n$ players $j$ receive a signal $x_j^\varepsilon\geq X$ while $N-n-1$ receive a signal $x_j^\varepsilon<X$. Moreover, given that $n$ players $j$ receive a signal $x_j^\varepsilon$, the probability that any given subset of players $\{j_1,j_2,...,j_n\}\subseteq\N\setminus\{i\}$ receive signals above $X$ is the same (e.g.\ uniform) across such subsets; as there are exactly $\binom{N-1}{n}$ (unique) subsets  $\{j_1,j_2,...,j_n\}\subseteq\N\setminus\{i\}$, this (conditional) probability is simply $1/\binom{N-1}{n}$. Given the strategy vector $p_{-i}^X$ played, and conditional on exactly $n$ players $j$ receiving a signal $x_j^\varepsilon>X$, the expected spillover on player $i$ is hence $\sum_{a_{-i}\in A_{-i}^n}w_i(a_{-i})/\binom{N-1}{n}$, where we recall that $A_{-i}^n:=\{a_{-i}\mid \sum_{a_j\in a_{-i}}a_j=n\}$. Putting all this together, we get
	\begin{align*}
		u_i^\varepsilon(p_{-i}^X\mid x_i^\varepsilon,s_i)&=\frac{\int xg(x)\phi^\varepsilon(x_i^\varepsilon-x)\wrt x}{\int g(x)\phi^\varepsilon(x_i^\varepsilon-x)\wrt x}+\sum_{n=0}^{N-1}\frac{\sum_{a_{-i}\in A_{-i}^n}w_i(a_{-i})}{\binom{N-1}{n}}\Omega^\varepsilon(n\mid X,x_i^\varepsilon)-c_i+s_i\\
		&=\frac{\int xg(x)\phi^\varepsilon(x_i^\varepsilon-x)\wrt x}{\int g(x)\phi^\varepsilon(x_i^\varepsilon-x)\wrt x}+\sum_{n=0}^{N-1}w_i^n\Omega^\varepsilon(n\mid X,x_i^\varepsilon)-c_i+s_i\\
		&\to x_i^\varepsilon+\sum_{n=0}^{N-1}w_i^n{\overline{\Omega^\varepsilon}}(n\mid X,x_i^\varepsilon)-c_i+s_i.
	\end{align*}
	Furthermore, we need only concern ourselves with the event that $x_i^\varepsilon=X$, in which case we have
	\[u_i^\varepsilon(p_{-i}^X\mid X,s_i)\to X+\sum_{n=0}^{N-1}\frac{w_i^n}{N}-c_i+s_i,\]
	where we use the result that $\overline{\Omega^\varepsilon}(n\mid X,X)=1/N$ for all $n=0,1,...,N-1$ established in the proof of Lemma~\ref{lemma_monotone}. Finally, solving $u_i^\varepsilon(p_{-i}^{\tilde{x}}\mid \tilde{x},\tilde{s}_i)=0$ for $\tilde{s}_i$ yields the result.
\end{proof}

\noindent PROOF OF PROPOSITION~\ref{prop_asym}
\begin{proof}
	We prove a more general case in which the player set $\N$ is partitioned into $K$ subsets $\N_k$, $k\in\{1,2,...,K\}$.
	For each $k$, let $\Omega_k^\varepsilon(n_k\mid X_k,x_i^\varepsilon)$ denote the probability that $n_k$ players $j\neq i$ in $\N_k$ receive a signal $x_j^\varepsilon\geq X_k$. We relabel groups so that $X_1<X_2<\hdots<X_K$; moreover, we assume that $X_{k+1}-X_k>\varepsilon$ for all $k$. Note that conditional on his signal $x_i^\varepsilon$, player $i$ knows that $x_j^\varepsilon\in[x_i^\varepsilon-\varepsilon,x_i^\varepsilon+\varepsilon]$ for each $j\neq i$. Hence, for each $k\in\{1,2,...,K\}$ we have $\Omega^\varepsilon_l(N_l\mid X_l, X_k)=1$ for all $l=1,2,...,k-1$ and $\Omega^\varepsilon_l(N_l\mid X_l, X_k)=0$ for all $l=k+1,k+2,...,K$. Moreover, for player $i\in\N_k$ we know that, following the exact same steps as in the proof of Lemma~\ref{lemma_unique_limit},
	\[{\overline{\Omega_k^\varepsilon}}(n_k\mid X_k,X_k)=\frac{1}{N_k},\]
	where $N_k$ is the number of players in $\N_k$. Let $p^*$ denote the vector of strategies such that each player $i$ is assigned strategy $p_i^{X_k}$ if $i\in\N_k$. Then, for each player $i\in\N_k$, and all $k\in\{1,2,...,K\}$, we have
	\[u_i^\varepsilon(p_{-i}^*\mid X_k,s_i)=\frac{\int xg(x)\phi^\varepsilon(X_k-x)\wrt x}{\int g(x)\phi^\varepsilon(X_k-x)\wrt x}+w_i\left(N_1+N_2+\hdots +N_{k-1}+n\right)\Omega^\varepsilon_k(n\mid X_k,X_k)-c_i+s_i,\]
	which, choosing $\varepsilon$ sufficiently small, tends to:
	\begin{align*}
		u_i^\varepsilon(p_{-i}^*\mid X_k,s_i)&\to X_k+w_i\left(N_1+N_2+\hdots +n\right){\overline{\Omega^\varepsilon_k}}(n\mid X_k,X_k)-c_i+s_i\\
		&=X_k+\sum_{n=0}^{N_k-1}\frac{w_i(N_1+N_2+\hdots+N_{k-1}+n)}{N_k}-c_i+s_i.
	\end{align*}
	Setting $K=2$, $X_1=\tilde{x}_1$, $X_2=\tilde{x}_2$, and solving for $i\in\N_k$ the above for $u_i^\varepsilon(p_{-i}^*\mid X_k,\tilde{s}_i)=0$ yields the result.

\end{proof}

\noindent PROOF OF PROPOSITION~\ref{cor_joint}

\begin{proof}
	Observe that, in the notation of \eqref{eq_payoff}, we have $w_i(\sum_{j\neq i}a_j)=b_i$ for all $\sum_{j\neq i}a_j\geq I-1$ and $w_i(\sum_{j\neq i}a_j)=0$ for all $\sum_{j\neq i}a_j<I-1$. Therefore, the expected incentive to invest of player $i$ is
	\[u_i^\varepsilon(p_{-i}^{\tilde{x}}\mid \tilde{x},s_i)\to x_i^\varepsilon+\sum_{n=0}^{N-1}\frac{w_i(n)}{N}-c_i+s_i\]
	for $\varepsilon$ sufficiently small. Noting that $w_i(n)=b_i$ for all $n\geq I-1$ and 0 otherwise and that $N-n^*$ is the smallest integer greater than $I$, we obtain
	\[u_i^\varepsilon(p_{-i}^{\tilde{x}}\mid \tilde{x},\tilde{s}_i)\to \tilde{x}+\sum_{n=N-n^*-1}^{N-1}\frac{w_i(n)}{N}-c_i+\tilde{s}_i=\tilde{x}+\frac{n^*}{N}\cdot b_i-c_i+\tilde{s}_i=0.\]
	Solving for $\tilde{s}_i$ when $\tilde{x}=0$ yields the result.
\end{proof}

\bibliography{library}

\begin{thebibliography}{}

\bibitem[\protect\astroncite{Abel}{1983}]{abel1983optimal}
Abel, A.~B. (1983).
\newblock Optimal investment under uncertainty.
\newblock {\em American Economic Review}, 73(1):228--233.

\bibitem[\protect\astroncite{Angeletos et~al.}{2006}]{angeletos2006signaling}
Angeletos, G.-M., Hellwig, C., and Pavan, A. (2006).
\newblock Signaling in a global game: Coordination and policy traps.
\newblock {\em Journal of Political Economy}, 114(3):452--484.

\bibitem[\protect\astroncite{Angeletos et~al.}{2007}]{angeletos2007dynamic}
Angeletos, G.-M., Hellwig, C., and Pavan, A. (2007).
\newblock Dynamic global games of regime change: Learning, multiplicity, and
  the timing of attacks.
\newblock {\em Econometrica}, 75(3):711--756.

\bibitem[\protect\astroncite{Barrett}{2006}]{barrett2006climate}
Barrett, S. (2006).
\newblock Climate treaties and ``breakthrough'' technologies.
\newblock {\em American Economic Review}, 96(2):22--25.

\bibitem[\protect\astroncite{Basak and Zhou}{2020}]{basak2020diffusing}
Basak, D. and Zhou, Z. (2020).
\newblock Diffusing coordination risk.
\newblock {\em American Economic Review}, 110(1):271--97.

\bibitem[\protect\astroncite{Bergemann and
  Morris}{2016}]{bergemann2016information}
Bergemann, D. and Morris, S. (2016).
\newblock Information design, bayesian persuasion, and bayes correlated
  equilibrium.
\newblock {\em American Economic Review}, 106(5):586--591.

\bibitem[\protect\astroncite{Bernstein and
  Winter}{2012}]{bernstein2012contracting}
Bernstein, S. and Winter, E. (2012).
\newblock Contracting with heterogeneous externalities.
\newblock {\em American Economic Journal: Microeconomics}, 4(2):50--76.

\bibitem[\protect\astroncite{Bj{\"o}rkegren}{2019}]{bjorkegren2019adoption}
Bj{\"o}rkegren, D. (2019).
\newblock The adoption of network goods: Evidence from the spread of mobile
  phones in {R}wanda.
\newblock {\em Review of Economic Studies}, 86(3):1033--1060.

\bibitem[\protect\astroncite{Bulow et~al.}{1985}]{bulow1985multimarket}
Bulow, J.~I., Geanakoplos, J.~D., and Klemperer, P.~D. (1985).
\newblock Multimarket oligopoly: Strategic substitutes and complements.
\newblock {\em Journal of Political Economy}, 93(3):488--511.

\bibitem[\protect\astroncite{Carlsson and Van~Damme}{1993}]{carlsson1993global}
Carlsson, H. and Van~Damme, E. (1993).
\newblock Global games and equilibrium selection.
\newblock {\em Econometrica}, pages 989--1018.

\bibitem[\protect\astroncite{Chassang}{2010}]{chassang2010fear}
Chassang, S. (2010).
\newblock Fear of miscoordination and the robustness of cooperation in dynamic
  global games with exit.
\newblock {\em Econometrica}, 78(3):973--1006.

\bibitem[\protect\astroncite{Cowan}{1990}]{cowan1990nuclear}
Cowan, R. (1990).
\newblock Nuclear power reactors: a study in technological lock-in.
\newblock {\em Journal of Economic History}, 50(3):541--567.

\bibitem[\protect\astroncite{Cowan}{1991}]{cowan1991tortoises}
Cowan, R. (1991).
\newblock Tortoises and hares: choice among technologies of unknown merit.
\newblock {\em Economic Journal}, 101(407):801--814.

\bibitem[\protect\astroncite{Cowan and Gunby}{1996}]{cowan1996sprayed}
Cowan, R. and Gunby, P. (1996).
\newblock Sprayed to death: path dependence, lock-in and pest control
  strategies.
\newblock {\em Economic Journal}, 106(436):521--542.

\bibitem[\protect\astroncite{Edmond}{2013}]{edmond2013information}
Edmond, C. (2013).
\newblock Information manipulation, coordination, and regime change.
\newblock {\em Review of Economic studies}, 80(4):1422--1458.

\bibitem[\protect\astroncite{Ely}{2017}]{ely2017beeps}
Ely, J.~C. (2017).
\newblock Beeps.
\newblock {\em American Economic Review}, 107(1):31--53.

\bibitem[\protect\astroncite{Goldstein and Pauzner}{2005}]{goldstein2005demand}
Goldstein, I. and Pauzner, A. (2005).
\newblock Demand--deposit contracts and the probability of bank runs.
\newblock {\em Journal of Finance}, 60(3):1293--1327.

\bibitem[\protect\astroncite{Halac et~al.}{2020}]{halac2020raising}
Halac, M., Kremer, I., and Winter, E. (2020).
\newblock Raising capital from heterogeneous investors.
\newblock {\em American Economic Review}, 110(3):889--921.

\bibitem[\protect\astroncite{Halac et~al.}{2021}]{halac2021rank}
Halac, M., Lipnowski, E., and Rappoport, D. (2021).
\newblock Rank uncertainty in organizations.
\newblock {\em American Economic Review}, 111(3):757--86.

\bibitem[\protect\astroncite{Halac et~al.}{2022}]{halac2022addressing}
Halac, M., Lipnowski, E., and Rappoport, D. (2022).
\newblock Addressing strategic uncertainty with incentives and information.
\newblock In {\em AEA Papers and Proceedings}, volume 112, pages 431--437.

\bibitem[\protect\astroncite{Kamenica and
  Gentzkow}{2011}]{kamenica2011bayesian}
Kamenica, E. and Gentzkow, M. (2011).
\newblock Bayesian persuasion.
\newblock {\em American Economic Review}, 101(6):2590--2615.

\bibitem[\protect\astroncite{Kets et~al.}{2022}]{kets2022value}
Kets, W., Kager, W., and Sandroni, A. (2022).
\newblock The value of a coordination game.
\newblock {\em Journal of Economic Theory}, 201:105419.

\bibitem[\protect\astroncite{Kets and Sandroni}{2021}]{kets2021theory}
Kets, W. and Sandroni, A. (2021).
\newblock A theory of strategic uncertainty and cultural diversity.
\newblock {\em The Review of Economic Studies}, 88(1):287--333.

\bibitem[\protect\astroncite{Leister et~al.}{2022}]{leister2022social}
Leister, C.~M., Zenou, Y., and Zhou, J. (2022).
\newblock Social connectedness and local contagion.
\newblock {\em Review of Economic Studies}, 89(1):372--410.

\bibitem[\protect\astroncite{Mathevet et~al.}{2020}]{mathevet2020information}
Mathevet, L., Perego, J., and Taneva, I. (2020).
\newblock On information design in games.
\newblock {\em Journal of Political Economy}, 128(4):1370--1404.

\bibitem[\protect\astroncite{Morris and Shin}{1998}]{morris1998unique}
Morris, S. and Shin, H.~S. (1998).
\newblock Unique equilibrium in a model of self-fulfilling currency attacks.
\newblock {\em American Economic Review}, pages 587--597.

\bibitem[\protect\astroncite{Onuchic and Ray}{2023}]{onuchic2023signaling}
Onuchic, P. and Ray, D. (2023).
\newblock Signaling and discrimination in collaborative projects.
\newblock {\em American Economic Review}, 113(1):210--52.

\bibitem[\protect\astroncite{Pindyck}{1993}]{pindyck1993investments}
Pindyck, R.~S. (1993).
\newblock Investments of uncertain cost.
\newblock {\em Journal of Financial Economics}, 34(1):53--76.

\bibitem[\protect\astroncite{Sakovics and Steiner}{2012}]{sakovics2012matters}
Sakovics, J. and Steiner, J. (2012).
\newblock Who matters in coordination problems?
\newblock {\em American Economic Review}, 102(7):3439--61.

\bibitem[\protect\astroncite{Sandholm}{2002}]{sandholm2002evolutionary}
Sandholm, W.~H. (2002).
\newblock Evolutionary implementation and congestion pricing.
\newblock {\em Review of Economic Studies}, 69(3):667--689.

\bibitem[\protect\astroncite{Sandholm}{2005}]{sandholm2005negative}
Sandholm, W.~H. (2005).
\newblock Negative externalities and evolutionary implementation.
\newblock {\em Review of Economic Studies}, 72(3):885--915.

\bibitem[\protect\astroncite{Segal}{1999}]{segal1999contracting}
Segal, I. (1999).
\newblock Contracting with externalities.
\newblock {\em Quarterly Journal of Economics}, 114(2):337--388.

\bibitem[\protect\astroncite{Segal}{2003}]{segal2003coordination}
Segal, I. (2003).
\newblock Coordination and discrimination in contracting with externalities:
  Divide and conquer?
\newblock {\em Journal of Economic Theory}, 113(2):147--181.

\bibitem[\protect\astroncite{Segal and Whinston}{2000}]{segal2000naked}
Segal, I.~R. and Whinston, M.~D. (2000).
\newblock Naked exclusion: comment.
\newblock {\em American Economic Review}, 90(1):296--309.

\bibitem[\protect\astroncite{Winter}{2004}]{winter2004incentives}
Winter, E. (2004).
\newblock Incentives and discrimination.
\newblock {\em American Economic Review}, 94(3):764--773.

\end{thebibliography}
\bibliographystyle{apa}

\end{document}